\titleformat*{\section}{\Large \normalfont \bfseries}
\titleformat*{\subsection}{\large \normalfont \bfseries}
\newcommand{\nc}{\newcommand}
\nc{\ny}{\nonumber}
\nc{\ra}{\rangle}
\nc{\la}{\langle}
\nc{\lb}{\left(}
\nc{\rb}{\right)}
\nc{\pt}{\partial}
\nc{\D}{\Delta}
\nc{\kr}{\mathcal}
\nc{\lmb}{\lambda}
\nc{\sg}{\sigma}
\nc{\ep}{\epsilon}
\nc{\ora}{\overrightarrow}
\nc{\td}{\widetilde}
\nc{\tq}{\theta}
\nc{\seteq}{\mathbin{:=}}
\nc{\simto}{\xrightarrow{\,\sim\,}}
\nc{\Z}{{\mathbb Z}}
\newcommand{\NSR}{\textsf{NSR}}
\newcommand{\sfF}{\textsf{F}}
\nc{\Vir}{\textsf{Vir}}
\nc{\Asl}{\widehat{\mathfrak{sl}}}
\nc{\rmF} {\mathrm{F}}
\nc{\calM}{\mathcal{M}}
\nc{\calA}{\mathcal{A}}
\nc{\calF}{\mathcal{F}}
\nc{\calL}{\mathcal{L}}
\nc{\calU}{\mathcal{U}}
\nc{\calE}{\mathcal{E}}
\nc{\calV}{\mathcal{V}}
\nc{\calH}{\mathcal{H}}
\nc{\sfV}{\textsf{V}}
\nc{\tr}{\mathrm{Tr}}
\nc{\red}{\color{red}}
\nc{\NS}{\scriptscriptstyle{\textsf{NS}}}
\nc{\vac}{\varnothing}
\nc{\G}{\textsf{G}}
\nc{\pure}{\mathrm{pure}}
\numberwithin{equation}{section}
\newtheorem{thm}{Theorem}[section]
\newtheorem{prop}{Proposition}[section]
\newtheorem{Lemma}{Lemma}[section]
\newtheorem{Remark}{Remark}[section]
\title{Bilinear equations on Painlev\'e $\tau$~functions from CFT}
\date{}
\author{M.~A.~Bershtein, A.~I.~Shchechkin}
\begin{document}
\maketitle

\begin{abstract}\vspace*{2pt}
In 2012 Gamayun, Iorgov, Lisovyy conjectured an explicit expression for the Painlev\'e VI $\tau$~function in terms of the Liouville conformal blocks with central charge $c=1$. We prove that proposed expression satisfies Painlev\'e VI $\tau$~function bilinear equations (and therefore prove the conjecture).

The proof reduces to the proof of bilinear relations on conformal blocks. These relations were studied using the embedding of a direct sum of two Virasoro algebras into a sum of Majorana fermion and Super Virasoro algebra. In the framework of the AGT correspondence the bilinear equations on the conformal blocks can be interpreted in terms of instanton counting on the minimal resolution of $\mathbb{C}^2/\mathbb{Z}_2$ (similarly to Nakajima-Yoshioka blow-up equations).

%Using our method we prove deformation of Painlev\'e $\tau$~function bilinear equations depending on integer parameter $k$. The corresponding $\tau$~functions are written in terms of conformal block with central charges corresponding to generalized minimal models $\mathcal{M}(1,k)$ and $\mathcal{M}(1,2-k)$.
\end{abstract}

\tableofcontents

\newpage

\section{Introduction}

Painlev\'e equations were introduced more than 100 years ago. Solutions of these equations (Painlev\'e transcendents) are important special functions with many applications including integrable models and random matrix theory. Maybe the most natural mathematical framework for the Painlev\'e equations is the theory of monodromy preserving deformations. The Painlev\'e VI equation is equivalent to the simplest nontrivial example namely isomonodromic deformation of rank 2 linear system on $\mathbb{CP}^1$ with four regular singular points.

In 2012 Gamayun, Iorgov, Lisovyy conjectured \cite{GIL1207} an explicit expression for the expansion near $t=0$ of the Painlev\'e VI $\tau$~function. This expression is an infinite sum of four-point $\mathbb{CP}^1$ conformal blocks for $c=1$ conformal field theory (CFT). In the next paper \cite{GIL1302} Gamayun, Iorgov, Lisovyy found the analogous expressions for the $\tau$~functions of the Painlev\'e V, III's equations in terms of certain limits of conformal blocks for $c=1$. See also \cite{ILT},\cite{Its:2014} for further developments of this conjecture.

It this paper we prove the Gamayun--Iorgov--Lisovyy conjecture. Note that completely different proof of this conjecture (together with a generalization to any number of points on $\mathbb{CP}^1$) was given in \cite{MCB}. We explain the main idea of our proof in the simplest Painlev\'e $\mathrm{III}'_3$ case. This equation has the form
\begin{equation}\label{eq:piiispr}
\frac{d^2q}{dt^2}=\frac{1}{q}\left(\frac{dq}{dt}\right)^2 -
 \frac{1}{t}\frac{dq}{dt}\,+\frac{2q^2}{t^2}-\frac{2}{t},
\end{equation}
Our proof is based on another form of Painlev\'e $\mathrm{III}'_3$ equation, namely on a bilinear equation on the Painlev\'e $\mathrm{III}'_3$ $\tau$~function (see Subsection \ref{ssec:Painleve} or the paper \cite{myCA} for the relation between different forms of Painlev\'e $\mathrm{III}'_3$ equations). It is convenient to write the bilinear equation by use of Hirota differential operators with respect to $\log t$. Then the $\tau$ form of the Painlev\'e III$'_3$ equation can be written as
\begin{equation}
D^{III}(\tau(t),\tau(t))=0,\quad \text{where} \quad D^{III}=
\frac12 D^{4}_{[\log t]}-t\frac{d}{dt}D^{2}_{[\log t]}+\frac12D^{2}_{[\log t]}+2tD^{0}_{[\log t]}, \label{eq:tau3}
\end{equation}
and $D^{k}_{[x]}$ is the $k$-th Hirota operator with respect to the variable $x$ see eq. \eqref{eq:Hirota:def}. For this Painlev\'e III$'_3$ case our main result is the following 

\begin{thm}
\label{thm:1}
The expansion of the Painlev\'e $III'_3$ $\tau$~function near $t=0$ can be written as 
\begin{equation}\label{eq:tau:P3Int}
\tau(t)=\sum_{n\in\mathbb{Z}}s^n C(\sg+n) \kr{F}((\sg+n)^2|t),
\end{equation}
\end{thm}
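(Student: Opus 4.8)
The plan is to verify that the series \eqref{eq:tau:P3Int} satisfies the Hirota bilinear equation \eqref{eq:tau3}. By the discussion in Subsection~\ref{ssec:Painleve} this bilinear form, together with the leading $t\to 0$ behaviour, characterizes the Painlev\'e $\mathrm{III}'_3$ $\tau$~function, so it suffices to establish the equation and to match the asymptotics. First I would substitute the series into $D^{III}(\tau,\tau)$. Each Hirota operator acts on monomials by $D^{k}_{[\log t]}(t^a,t^b)=(a-b)^k\,t^{a+b}$, while the remaining pieces of $D^{III}$ are the Euler operator $t\,d/dt$ and multiplication by $t$; hence $D^{III}$ sends a product $\calF((\sg+n)^2|t)\,\calF((\sg+m)^2|t)$ to a controlled bilinear differential polynomial in the two blocks. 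Collecting the coefficient of $s^{k}$, i.e.\ fixing $n+m=k$, the single scalar equation \eqref{eq:tau3} splits into a family of \emph{bilinear relations on the $c=1$ conformal blocks}
\begin{equation*}
\sum_{n\in\Z} C(\sg+n)\,C(\sg+k-n)\;D^{III}\!\big(\calF((\sg+n)^2|t),\,\calF((\sg+k-n)^2|t)\big)=0,\qquad k\in\Z.
\end{equation*}

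The heart of the proof is to establish these relations, and here I would use the embedding announced in the abstract. The sum $\Vir_{c=1}\oplus\Vir_{c=1}$, of total central charge $2$, embeds into the sum of a Majorana fermion and the $N=1$ super-Virasoro ($\NSR$) algebra, the central charges matching as $1+1=\tfrac12+\tfrac32$. Decomposing a module of the fermion$\,\otimes\,\NSR$ algebra under the embedded $\Vir_{c=1}\oplus\Vir_{c=1}$ produces a branching rule with two readings of the same object: on one side it reorganizes into exactly the bilinear combinations $\sum_n C(\sg+n)C(\sg+k-n)\,\calF\cdot\calF$ above, while on the other it is the original factorized expression, namely an explicit free-fermion contribution times a super-Virasoro conformal block. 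The free fermion supplies the lattice sum over $n\in\Z$ and the $s^{n}$ weights (a theta-like generating function over the charge sectors), and the $\NSR$ block carries the residual $t$-dependence. In the AGT dictionary this branching is precisely the statement that a bilinear combination of Nekrasov partition functions on $\mathbb{C}^2$ equals the partition function on the minimal resolution of $\mathbb{C}^2/\mathbb{Z}_2$, so that the desired identities are Nakajima--Yoshioka-type blow-up equations, from which the vanishing of the $D^{III}$ combination follows.

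I expect the main obstacle to be the precise extraction of coefficients in this branching. At the level of characters the decomposition is a routine dimension count, but the Hirota equation \eqref{eq:tau3} is sensitive to the \emph{full} block, i.e.\ to all descendant levels, not merely to its character. One therefore has to control the explicit intertwining maps between the two Virasoro modules and the fermion$\,\otimes\,\NSR$ module, and to verify that the resulting rearrangement of descendants is compatible with the operators in $D^{III}$. Equivalently, in the geometric language one must check the blow-up identity order by order in the instanton expansion and fix the relative normalizations of the two factors; I anticipate that pinning down these normalizations, so that the shifted momenta $\sg+n$ and the structure constants $C(\sg+n)$ enter with the correct multiplicities, is the genuinely delicate point.

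Finally, having shown $D^{III}(\tau,\tau)=0$ for \eqref{eq:tau:P3Int}, I would close the argument by matching the $t\to 0$ asymptotics. The $n=0$ term $C(\sg)\,\calF(\sg^2|t)$ reproduces the leading power $t^{\sg^2}$, and the two free parameters $\sg$ and $s$ correspond to the two integration constants of Painlev\'e $\mathrm{III}'_3$; since the solution of \eqref{eq:tau3} with this prescribed leading behaviour is unique, it must coincide with the series \eqref{eq:tau:P3Int}, completing the proof.
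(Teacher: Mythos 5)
You follow the paper's overall strategy --- substitute \eqref{eq:tau:P3Int} into \eqref{eq:tau3}, collect powers of $s$, and prove the resulting bilinear relations on $c=1$ blocks through the embedding $\Vir\oplus\Vir\subset\sfF\oplus\NSR$ --- but your plan is missing the one idea that makes this work, namely the mechanism that converts the module decomposition into \emph{differential} (Hirota) relations. Taking the scalar square of the Whittaker-vector decomposition (Proposition \ref{prop:Whitdecomp}) gives only the derivative-free identity \eqref{blockdecomp}; it says nothing yet about $D^2$ or $D^4$. The paper's key step is to insert $e^{\alpha H}$, with $H=bL_0^{\scriptscriptstyle{(1)}}+b^{-1}L_0^{\scriptscriptstyle{(2)}}$ as in \eqref{eq:Hdefin}, between the two Whittaker vectors and to compute the coefficients $\widehat{\kr{F}_k}$ in two ways: on the $\Vir\oplus\Vir$ side each $L_0^{(\eta)}$ rescales $q$, producing the deformed Hirota operators \eqref{eq:FhatHirota}; on the $\sfF\oplus\NSR$ side $H$ is rewritten via \eqref{eq:HNSR} as a quadratic expression in $f_r$, $G_r$, and a direct computation of $H^k|1\otimes W_{\NS}\ra$ for $k\le4$ gives \eqref{eq:hatF}, whence the linear relation \eqref{eq:eqFhat} among $\widehat{\kr{F}_0},\widehat{\kr{F}_2},\widehat{\kr{F}_4}$ --- and that relation \emph{is} the bilinear equation \eqref{bilin}. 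Your substitute for this step is an appeal to ``Nakajima--Yoshioka-type blow-up equations,'' but for instanton counting on the minimal resolution of $\mathbb{C}^2/\mathbb{Z}_2$ these blow-up-type identities are exactly what has to be proven; the paper derives them (and thereby the $\NSR$ AGT relation, Section \ref{sec:geom}) rather than quoting them, so as written this step of your argument is circular.

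Two further points you flag but leave unresolved are also indispensable. First, identifying the branching coefficients with the Painlev\'e structure constants, i.e. showing $l_n^2(2i\sigma,i)\propto C(\sigma+n)C(\sigma-n)/C(\sigma)^2$ as in \eqref{C0}, requires the closed formula \eqref{l_n_irr} for the blow-up factors; this is Theorem \ref{thm:ln}, proved in Subsection \ref{ssec:melem} by a separate degenerate-field/hypergeometric-monodromy argument, and no character-level bookkeeping will produce it. Second, the decomposition carries a single momentum $P$ and therefore yields only two independent relations --- the integer- and half-integer-$n$ sectors of \eqref{bilin}, which specialize to \eqref{s0} and \eqref{s1} --- while you need the $s^k$ relation for every $k\in\mathbb{Z}$; the missing observation is that the $s^m$ relation becomes the $s^{m+2}$ relation under $\sigma\mapsto\sigma+1$, so the two sectors suffice. (Your final uniqueness-of-asymptotics step is not needed for the theorem as the paper states it, and if you do want it you must prove, not assume, that a solution of \eqref{eq:tau3} with prescribed leading behaviour is unique.)
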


In this theorem $\kr{F}(\sg^2|t)=\kr{F}_1(\sg^2|t)$ denotes the irregular limit of the Virasoro ($\Vir$) conformal block for the central charge $c=1$. This function is defined in terms of  representation theory of the Virasoro algebra (see Subsection \ref{ssec:chain}). The coefficients $C(\sg)$ are defined by
$ C(\sg)=1/\Bigl(\G(1-2\sg)\G(1+2\sg)\Bigr),$
where $\G(z)$ is the Barnes $\G$ function. The parameters $s$ and $\sg$ in \eqref{eq:tau:P3Int} are integration constants of  equation~\eqref{eq:piiispr}. Note that Theorem \ref{thm:1} means that the formula \eqref{eq:tau:P3Int} specifies 2 parametric set of $\tau$~function but there could be $\tau$~function which are not given by \eqref{eq:tau:P3Int}.

We substitute \eqref{eq:tau:P3Int} to \eqref{eq:tau3} 
and collect terms with the same powers of $s$. A vanishing condition of a $s^m$ coefficient reads
\begin{align}
\sum_{n\in\mathbb{Z}}\left(C(\sg+n+m)C(\sg-n) D^{III}\Bigl(\kr{F}((\sigma+n+m)^2|t),\kr{F}((\sigma-n)^2|t\Bigr)\right)=0, \label{eq:bilinP3}
\end{align}
Each summand looks similar to a conformal block for the sum of two Virasoro algebras $\Vir\oplus \Vir$. Therefore it is natural to expect that the whole sum is a conformal block of an extension of $\Vir\oplus \Vir$. And we prove that the required extension is $\Vir\oplus \Vir \subset \mathsf{F}\oplus\NSR$, where $\mathsf{F}$ is the Majorana fermion algebra and $\NSR$ is the Neveu--Schwarz--Ramond algebra, $\mathcal{N}=1$ superanalogue of the Virasoro algebra (to be more precise $\Vir\oplus \Vir \subset \mathsf{F}\oplus\NSR$ is an extension on vertex operator algebras).

The idea to use this extension comes from geometry. Recall that the AGT correspondence \cite{AGT} for the Virasoro algebra states that the conformal block in the irregular limit coincides with the Nekrasov partition function for pure $\mathcal{N}=2$ supersymmetric $U(2)$ gauge theory on $\mathbb{C}^2$. Therefore  equation \eqref{eq:bilinP3} is equivalent to a bilinear equation on the Nekrasov partition function. 

These equations resemble Nakajima-Yoshioka bilinear equations \cite{Nakajima}, which relate the Nekrasov partition function on the blow-up of $\mathbb{C}^2$ with the Nekrasov partition function on $\mathbb{C}^2$. Due to the AGT correspondence the Nakajima-Yoshioka bilinear equations are equivalent to equations which relate conformal blocks of two theories with central charges $c^{\scriptscriptstyle{(1)}}_{NY}$ and $c^{\scriptscriptstyle{(2)}}_{NY}$. The CFT meaning of this relation was explained in \cite{BFL} and the explanation is based on the embedding $\Vir\oplus \Vir \subset \Vir \times \mathcal{U}$, where $\mathcal{U}$ is a certain rational vertex operator algebra. 

We use an analogue of the Nakajima-Yoshioka equations. Geometrically this analogue corresponds to the instanton counting on the minimal resolution of $\mathbb{C}^2/\mathbb{Z}_2$. The CFT interpretation of this instanton counting was given in \cite{BBFLT} and it was based on the above mentioned embedding $\Vir\oplus \Vir \subset \mathsf{F}\oplus\NSR$. It is convenient to write the central charges of $\NSR$ and $\Vir$ algebras in terms of a parameter $b\in \mathbb{C}$. To be precise the central charges of the two Virasoro algebras are equal to
\[
c^{\scriptscriptstyle{(1)}}=1+6\dfrac{(b+b^{-1})^2}{2b(b^{-1}-b)},
\quad c^{\scriptscriptstyle{(2)}}=1+6\dfrac{(b+b^{-1})^2}{2b^{-1}(b-b^{-1})}.
\]
Therefore in the $b+b^{-1}=0$ case we get $c^{\scriptscriptstyle{(1)}}=c^{\scriptscriptstyle{(2)}}=1$, just as we want for equation \eqref{eq:bilinP3}.

It was proved in \cite{BBFLT} that the $\sfF \oplus \NSR$ Verma module $\pi^{\D^{\NS}}_{\sfF\oplus \NSR}$  is decomposed into a direct sum of $\Vir\oplus \Vir$ Verma modules
\[
\bigoplus\limits_{2n\in\mathbb{Z}}\pi^n_{\Vir\oplus \Vir}\cong \pi^{\D^{\NS}}_{\sfF\oplus \NSR}. 
\]
Therefore we get similar relation for the conformal blocks
\[
\sum\limits_{2n\in\mathbb{Z}}\Bigl( l_n^2(P,b)\, \kr{F}_{c^{\scriptscriptstyle{(1)}}}(\Delta_n^{\scriptscriptstyle{(1)}}|\beta^{\scriptscriptstyle{(1)}}q),\kr{F}_{c^{\scriptscriptstyle{(2)}}}(\Delta_{n}^{\scriptscriptstyle{(2)}}|\beta^{\scriptscriptstyle{(2)}}q)\Bigr)=\kr{F}_{c_{\NS}},
\]
where $\mathcal{F}_{c^{\scriptscriptstyle{(1)}}}$ and $\mathcal{F}_{c^{\scriptscriptstyle{(2)}}}$ denote conformal blocks for the first and second Virasoro in $\Vir\oplus \Vir$, $\kr{F}_{c^{\NS}}$ denotes an $\NSR$ conformal block. The coefficients $l_n(P,b)$ are called blow-up factors in \cite{BBFLT} due to their geometric origin. In the language of conformal field theory these coefficients are closely related to the structure constants of the theory with $\Vir\oplus \Vir$ symmetry. 

In Section \ref{sec:bilin} we consider more general conformal blocks (following \cite{BFL}) and prove the relation 
\begin{equation}
\sum\limits_{2n\in\mathbb{Z}}l_n^2(P,b)\, D^{III}_b\Bigl(\kr{F}_{c^{\scriptscriptstyle{(1)}}}(\Delta_n^{\scriptscriptstyle{(1)}}|\beta^{\scriptscriptstyle{(1)}}q),\kr{F}_{c^{\scriptscriptstyle{(2)}}}(\Delta_{n}^{\scriptscriptstyle{(2)}}|\beta^{\scriptscriptstyle{(2)}}q)\Bigr)=0 \label{eq:bilinP3def}
\end{equation}
where the operator $D^{III}_b$ is written in terms of $b$-deformed Hirota differential operators. If we set $b=i$ we get the relation \eqref{eq:bilinP3} and therefore prove Gamayun--Iorgov--Lisovyy conjecture.

More precisely for the relation \eqref{eq:bilinP3} with $m=0$ we need to show that the coefficients $l_n^2(2i\sigma,i)$ are proportional to $\frac{C(\sg+n)C(\sg-n)}{C(\sigma)^2}$ as functions on $n$.  The expressions for $l_n(P,b)$  were given in \cite{BBFLT} without proof. They were  computed in the recent paper \cite{HJ} by use of Dotsenko-Fateev type integrals.

We calculate $l_n(P,b)$ in Section \ref{sec:conf:block} using a completely different approach. Namely we imitate the computation of structure constants in the Liouville CFT \cite{ZZbook},\cite{JTL}, based on the associativity of OPE and properties of degenerate fields $\phi_{1,2}$. It is interesting to note that in \cite{ZZbook},\cite{JTL} a monodromy of correlation functions is trivial due to the coupling of the chiral and antichiral CFT, in contrast to our case where we have only chiral CFT and monodromy is trivial due to the relation between the central charges $c^{\scriptscriptstyle{(1)}}$ and $c^{\scriptscriptstyle{(2)}}$. 

As was explained above the calculation of $l_n(P,b)$ and the proof of relation \eqref{eq:bilinP3def} leads to a proof of Theorem~\ref{thm:1}. The Painlev\'e VI case is studied in a similar way. 
 
The paper is organized as follows. In Section \ref{sec:Algebras} we recall the main properties of the embedding of algebras $\Vir\oplus \Vir \subset \mathsf{F}\oplus \NSR$. In Subsection \ref{ssec:vac} we prove Theorem \ref{thm:Vac} which describes the  $\mathsf{F}\oplus \NSR$ vacuum module as a module over the vertex operator subalgebra  $\Vir\oplus \Vir$. We do not use this theorem in the rest of the paper so the reader can safely skip it. 

Section \ref{sec:conf:block} is devoted to conformal blocks, in Subsection \ref{ssec:chdec} we recall the relation between $\Vir\oplus \Vir$ and $\mathsf{F}\oplus \NSR$ vertex operators, and in Subsection \ref{ssec:melem} we calculate the blow-up factors $l_n(P,b)$.

Section \ref{sec:bilin} is devoted to bilinear relations. In Subsection \ref{ssec:Painleve} we recall the necessary background on the Painlev\'e equations and the isomonodromic $\tau$~functions, in Subsections \ref{ssec:PIII} and \ref{ssec:P6case} we prove the Painlev\'e III and Painlev\'e VI $\tau$~functions conjectures. In Subsection \ref{ssec:calc} we show that the bilinear relations on conformal blocks provide an efficient method for the calculation of the conformal block expansion. 

In Section \ref{sec:geom} we discuss the AGT meaning of our results. In particular we recall the arguments of~\cite{BBFLT}, which reduces the proof of AGT relation for the $\NSR$ algebra to the calculation of blow-up factors $l_n(P,b)$ calculated in Subsection \ref{ssec:melem} (and also in \cite{HJ}).  

Finally in Section \ref{sec:disc} we formulate some questions to study.

\section{Algebras and representations} \label{sec:Algebras}

\subsection{Verma modules} \label{ssec:FplNSR}

%In this section we review some definitions and facts about Verma moduli of $Vir$ and $F\oplus NSR$ algebras for the purpose of introducing conformal blocks in  subsection \ref{ssec:chain}.

The Virasoro algebra (which we denote by $\Vir$) is generated by  $L_n$, $n\in\mathbb{Z}$ with relations
\[
[L_n,L_m]=(n-m)L_{n+m}+\frac{n^3-n}{12}c\delta_{n+m,0}. 
\]
Here $c$ is an additional central generator, which acts on representations below as multiplication by a complex number. Therefore we consider $c$ as a complex number, which we call central charge.

Denote the Verma module of $\Vir$ by $\pi^{\D}_{\Vir}$. This module is generated by a highest weight vector $|\D\ra$
\[
L_0|\D\ra=\D|\D\ra, \quad L_n|\D\ra=0, \quad n>0,
\]
where $\D \in \mathbb{C}$ is called the weight of $|\D\ra$.
The representation space is spanned by vectors, obtained by the action of the operators $L_{-n}, \,n>0$ on $|\D\ra$.

The $\mathsf{F}\oplus \NSR$ algebra is a direct sum of the free-fermion algebra $\mathsf{F}$ with generators $f_r$ ($r\in\mathbb{Z}+\frac12$) and $\NSR$ (Neveu-Schwarz-Ramond or Super Virasoro) algebra with generators $L_n, G_r$ ($n\in\mathbb{Z}, r\in\mathbb{Z}+\frac12$). These generators satisfy commutation relations
\begin{equation}
\begin{aligned}
\{f_r,f_s\}&=\delta_{r+s,0}, \\ \{f_r,G_s\}&=[f_r,L_n]=0 \\
[L_n,L_m]&=(n-m)L_{n+m}+ \frac{(n^3-n)}8c_{\NS}\delta_{n+m,0}\\
\{G_r,G_s\}&=2L_{r+s}+\frac{1}{2}c_{\NS}\left(r^2-\frac{1}{4}\right)\delta_{r+s,0}\\~
[L_n,G_r]&=\left(\frac{1}{2}n-r\right)G_{n+r}.
\end{aligned}\label{eq:FNSRrelations}
\end{equation}
It is convenient to express the central charge by
\[
c_{\NS}=1+2Q^2,\quad \text{where $Q=b^{-1}+b$}
\]
Here and below we choose the indices $r$ of $G_r$ and $f_r$ to be half-integer, i.e. we work in the so-called $\textsf{NS}$ sector of our algebras.

We denote by $\pi^{\D^{\NS}}_{\sfF\oplus \NSR}$ a Verma module of the $\sfF\oplus \NSR$ algebra. This module is isomorphic to a tensor product of Verma modules $\pi_\sfF$ and $\pi^{\D^{\NS}}_\NSR$ which are generated by the highest weight vectors 
$|1\ra$ and $|\D^{\NS}\ra$ correspondingly defined by
\[
f_r|1\ra=0, \quad r>0,
\]
and
\[
L_0|\D^{\NS}\ra=\D^{\NS}|\D^{\NS}\ra, \qquad L_n|\D^{\NS}\ra=0, \quad G_r|\D^{\NS}\ra=0, \quad n,r>0.
\]
The representation space is spanned by vectors obtained by the action of  generators with negative indices on the highest weight vector. We denote the highest weight vector $|1\ra\otimes |\D^{\NS}\ra$ as $\overline{|\D^{\NS}\ra}$

Recall the free-field realization of the $\NSR$ algebra. Consider the algebra generated by $c_n, n\in\mathbb{Z}$ and $\psi_r, r\in\mathbb{Z}+\frac12$ (free boson and free fermion) with relations
\[
[c_n,c_m]=n\delta_{n+m,0}, \quad
[c_n,\psi_r]=0, \quad
\{\psi_r,\psi_s\}=\delta_{r+s,0}.
\]
We denote by $\hat{P}$ the operator $c_0$. Then a Fock representation of this algebra is generated by a vacuum vector $|P\ra$ such that $\psi_r|P\ra=c_n|P\ra=0$, $\hat{P}|P\ra=P|P\ra$ for $r,n >0$. On this Fock module we can define an action of the  $\NSR$ algebra by formulae
\begin{equation}
\begin{aligned}
L_n=&\frac12\sum_{k\neq 0,n}c_k c_{n-k}+\frac12\sum_r\left(r-\frac{n}{2}\right)\psi_{n-r}\psi_r+\frac{i}{2}\left(Qn\mp 2\hat{P}\right)c_n,\quad n\neq 0,\\
L_0=&\sum_{k>0} c_{-k} c_k+\sum_{r>0}r \psi_{-r}\psi_r+\frac{1}{2} \left(\frac{Q^2}{4}-\hat{P}^2 \right),\label{eq.FFR} \\
G_r=&\sum_{n\neq 0}c_n \psi_{r-n}+i (Qr\mp \hat{P})\psi_r.
\end{aligned}
\end{equation}
We say that $P$ is generic if $P\not \in \{\frac12(mb^{-1}+nb)|m,n \in \mathbb{Z}, mn>0\}$. For generic $P$ the $\NSR$ module defined by \eqref{eq.FFR} is irreducible and isomorphic to the Verma module $\pi_{\NSR}^{\Delta^{\NS}}$, where
\[
\D^{\NS}=\D^{\NS}(P,b)=\frac12 \left(\frac{Q^2}{4}-P^2\right), \qquad |P\ra=|\D^{\NS}\ra 
\]
The sign $\mp$ in \eqref{eq.FFR} refers to the existence of two free-field representations of the same Verma module. We denote the corresponding operators by $c_n^+, \psi_r^+$ and $c_n^-, \psi_r^-$. These operators are conjugated by some unitary operator acting on $\pi_{\NSR}^{\Delta^{\NS}}$ (the so-called super Liouville reflection operator).

As the main tool we shall use the $\Vir\oplus \Vir$ subalgebra in  the $\sfF\oplus \NSR$ algebra (following \cite{CSS}, \cite{Lashkevich}). The embedding of the $\Vir\oplus \Vir$  algebra is defined by formulae
\begin{equation}
\begin{aligned}
&L_{n}^{\scriptscriptstyle{(1)}}=\frac{b^{-1}}{b^{-1}-b}L_{n}-\frac{b^{-1}+2b}{2(b^{-1}-b)}\sum\limits_{r\in\mathbb{Z}-1/2}r:f_{n-r}f_{r}:+\frac{1}{b^{-1}-b}\sum\limits_{r\in\mathbb{Z}-1/2}f_{n-r}G_r\\
&L_{n}^{\scriptscriptstyle{(2)}}=\frac{b}{b-b^{-1}}L_{n}-\frac{b+2b^{-1}}{2(b-b^{-1})}\sum\limits_{r\in\mathbb{Z}-1/2}r:f_{n-r}f_{r}:+\frac{1}{b-b^{-1}}\sum\limits_{r\in\mathbb{Z}-1/2}f_{n-r}G_r \label{Vir12}
\end{aligned}
\end{equation}
Note that the expressions for $L_{n}^{(\eta)}$, $\eta=1,2$ contain infinite sums and belong to certain completion of the universal enveloping algebra of $\sfF\oplus \NSR$. The operators $L_{n}^{(\eta)}$ act on any highest weight representation of $\sfF\oplus \NSR$. One can say that $\Vir \oplus \Vir$ is a vertex operator subalgebra of $\sfF\oplus \NSR$  (see the next subsection).

It is convenient to express the central charge of the Virasoro algebra and the highest weights of the Verma module by
\begin{align}
\Delta(P,b)=\frac{Q^2}{4}-P^2, \;\; c(b)=1+6Q^2, \quad \text{where} \quad Q=b+b^{-1} \label{eq:cDelta:Vir}
\end{align} 
Then the central charges of these $\Vir^{\scriptscriptstyle{(1)}}$ and $\Vir^{\scriptscriptstyle{(2)}}$ subalgebras are equal to 
\begin{align}
c^{(\eta)}=c(b^{(\eta)}),\; \eta=1,2,\quad \text{where}\quad  (b^{\scriptscriptstyle{(1)}})^2=\frac{2 b^2}{1-b^2}, \quad (b^{\scriptscriptstyle{(2)}})^{-2}=\frac{2 b^{-2}}{1-b^{-2}}. \label{cc}
\end{align}
Note that the symmetry $b\leftrightarrow b^{-1}$ permutes $\Vir^{\scriptscriptstyle{(1)}}$ and $\Vir^{\scriptscriptstyle{(2)}}$. Here and below $b^2\neq 0,1$.

Now consider the space $\pi^{\D^{\NS}}_{\sfF\oplus \NSR}$ as a module over $\Vir \oplus \Vir$. Clearly, the vector $\overline{|\Delta^{\NS}\ra}=|1\ra\otimes |\Delta^{\NS}\ra$ is a highest weight vector with respect to $\Vir \oplus \Vir$. This vector generates a Verma module $\pi^{\D^{\scriptscriptstyle{(1)}},\D^{\scriptscriptstyle{(2)}}}_{\Vir\oplus \Vir}$. The highest weight $(\D^{\scriptscriptstyle{(1)}},\D^{\scriptscriptstyle{(2)}})$ can be found from \eqref{Vir12}, namely
\begin{align}\label{eq:DeltaVirNSR}
\D^{\scriptscriptstyle{(1)}}=\frac{b^{-1}}{b^{-1}-b}\D^{\NS}, \quad  \D^{\scriptscriptstyle{(2)}}=\frac{b}{b-b^{-1}}\D^{\NS}
\end{align}
But the whole space $\pi^{\D^{\NS}}_{\sfF\oplus \NSR}$ is larger than $\pi^{\D^{\scriptscriptstyle{(1)}},\D^{\scriptscriptstyle{(2)}}}_{\Vir\oplus \Vir}$. The following decomposition was proved in \cite{BBFLT}.

\begin{prop} \label{pr:FNSR}
 For generic $P$ the space $\pi^{\D^{\NS}}_{\sfF\oplus \NSR}$ is isomorphic to the sum of $\Vir\oplus \Vir$ modules
\begin{align}
\pi^{\D^{\NS}}_{\sfF\oplus \NSR}\cong\bigoplus\limits_{2n\in\mathbb{Z}}\pi^n_{\Vir\oplus \Vir} \label{FxNSR}.
\end{align}
The highest weight $(\D^{\scriptscriptstyle{(1)}}_{n},\D^{\scriptscriptstyle{(2)}}_{n})$ of the Verma module $\pi^n_{\Vir\oplus \Vir}$ is defined by $\Delta^{(\eta)}_{n}=\Delta(P^{(\eta)}_n,b^{(\eta)}),$ $\eta=1,2$, 
where 
\begin{align} 
P^{\scriptscriptstyle{(1)}}_{n}=P^{\scriptscriptstyle{(1)}}+nb^{\scriptscriptstyle{(1)}}, \; P^{\scriptscriptstyle{(2)}}_{n}=P^{\scriptscriptstyle{(2)}}+n\left(b^{\scriptscriptstyle{(2)}}\right)^{-1}, \qquad P^{\scriptscriptstyle{(1)}}=\frac{P}{\sqrt{2-2 b^2}}, \; P^{\scriptscriptstyle{(2)}}=\frac{P}{\sqrt{2-2 b^{-2}}}\label{momentum}.
\end{align}
\end{prop}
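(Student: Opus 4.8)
The plan is to establish \eqref{FxNSR} by first matching graded dimensions on the two sides and then exhibiting the highest weight vectors that realize each summand, upgrading the numerical identity to a module isomorphism. As a preliminary check, summing the two lines of \eqref{Vir12} one finds that the $fG$--terms cancel and $L_0^{\scriptscriptstyle{(1)}}+L_0^{\scriptscriptstyle{(2)}}=L_0+\sum_{r>0}rf_{-r}f_r$ is the total $L_0$ of $\pi^{\D^{\NS}}_{\sfF\oplus \NSR}$, so both sides carry the same grading. A direct computation from \eqref{cc}, \eqref{momentum} then shows that the term linear in $n$ cancels and $(b^{\scriptscriptstyle{(1)}})^2+(b^{\scriptscriptstyle{(2)}})^{-2}=-2$, whence
\begin{equation*}
\D^{\scriptscriptstyle{(1)}}_n+\D^{\scriptscriptstyle{(2)}}_n=\D^{\NS}+2n^2 ,
\end{equation*}
which pins down where each summand must sit in the grading.

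First I would match characters. The character of $\pi^{\D^{\NS}}_{\sfF\oplus \NSR}$ is the product of the fermionic Fock character and the $\NSR$ Verma character,
\begin{equation*}
\mathrm{ch}\,\pi^{\D^{\NS}}_{\sfF\oplus \NSR}=q^{\D^{\NS}}\prod_{m\ge1}\frac{1}{1-q^m}\prod_{k\ge1}\bigl(1+q^{k-1/2}\bigr)^{2},
\end{equation*}
while the right hand side of \eqref{FxNSR}, graded by $L_0^{\scriptscriptstyle{(1)}}+L_0^{\scriptscriptstyle{(2)}}$, has character $\prod_{m\ge1}(1-q^m)^{-2}\sum_{2n\in\Z}q^{\D^{\scriptscriptstyle{(1)}}_n+\D^{\scriptscriptstyle{(2)}}_n}$. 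Using the weight identity above, equality of the two characters reduces to
\begin{equation*}
\prod_{m\ge1}(1-q^m)\prod_{k\ge1}\bigl(1+q^{k-1/2}\bigr)^{2}=\sum_{2n\in\Z}q^{2n^2}=\sum_{m\in\Z}q^{m^2/2},
\end{equation*}
which is exactly the Jacobi triple product identity specialized to $z=1$. This shows the graded dimensions agree.

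It remains to upgrade this numerical identity to a $\Vir\oplus \Vir$--module isomorphism, which is the main point. For each $n$ with $2n\in\Z$ I would construct a vector $w_n\in\pi^{\D^{\NS}}_{\sfF\oplus \NSR}$ of bidegree $(\D^{\scriptscriptstyle{(1)}}_n,\D^{\scriptscriptstyle{(2)}}_n)$ annihilated by $L_m^{\scriptscriptstyle{(1)}}$ and $L_m^{\scriptscriptstyle{(2)}}$ for all $m>0$. For $n=0$ this is $\overline{|\D^{\NS}\ra}$ itself; the remaining $w_n$ live at level $2n^2$ and are built from the fermionic modes $f_{-r}$ together with the $\NSR$ modes, the simplest new one being $w_{\pm1/2}\propto f_{-1/2}\overline{|\D^{\NS}\ra}$ at level $\tfrac12$. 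Granting that these are simultaneous singular vectors for both Virasoros, the assignment of highest weight vectors extends to a $\Vir\oplus \Vir$--homomorphism $\bigoplus_{2n\in\Z}\pi^n_{\Vir\oplus \Vir}\to\pi^{\D^{\NS}}_{\sfF\oplus \NSR}$; for generic $P$ each $\pi^n_{\Vir\oplus \Vir}$ is irreducible, so the map is injective on each summand once $w_n\ne0$, and the sum is direct because the images occupy distinct joint $(L_0^{\scriptscriptstyle{(1)}},L_0^{\scriptscriptstyle{(2)}})$--weight patterns. The character equality from the previous step then forces surjectivity, hence an isomorphism.

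The hard part is precisely this module-theoretic step: verifying, from \eqref{Vir12}, that the candidate vectors $w_n$ are genuinely singular for $\Vir^{\scriptscriptstyle{(1)}}$ and $\Vir^{\scriptscriptstyle{(2)}}$ at once, which hinges on nontrivial cancellations among the fermion-bilinear and $fG$ contributions. Probably the cleanest way to organize it is to bosonize: the $\NSR$ fermion $\psi_r$ of \eqref{eq.FFR} and the extra Majorana $f_r$ combine into one Dirac fermion, i.e. a lattice boson, so that $\sfF\oplus \NSR$ is realized by two free bosonic fields (one being the $c_n$ of \eqref{eq.FFR}). In these variables $\Vir^{\scriptscriptstyle{(1)}}$ and $\Vir^{\scriptscriptstyle{(2)}}$ become commuting Coulomb-gas Virasoros with background charges read off from \eqref{cc}, the summand $\pi^n_{\Vir\oplus \Vir}$ is identified with the Fock sector of lattice momentum $n$, and \eqref{FxNSR} becomes the eigenspace decomposition of the lattice momentum operator, with genericity of $P$ guaranteeing that each sector is a true Verma module. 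In either route the main obstacle is the same: matching the explicit data \eqref{Vir12}, \eqref{cc}, \eqref{momentum} — background charges, the rotation mixing the two bosons, and the half-integer lattice that produces the index range $2n\in\Z$ — against the two commuting Virasoro actions.
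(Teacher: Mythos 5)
Your overall architecture --- exhibit simultaneous highest weight vectors, argue linear independence for generic $P$, then use the Jacobi triple product character identity to upgrade injectivity to an isomorphism --- is exactly the paper's strategy, and your character computation and the weight identity $\D^{\scriptscriptstyle{(1)}}_n+\D^{\scriptscriptstyle{(2)}}_n=\D^{\NS}+2n^2$ are correct. However, the step you defer (``granting that these are simultaneous singular vectors'') is the entire content of the proposition, and the one explicit formula you offer for it is false: $f_{-1/2}\overline{|\D^{\NS}\ra}$ is \emph{not} a highest weight vector for $\Vir\oplus\Vir$; it is not even an $L_0^{\scriptscriptstyle{(1)}}$-eigenvector. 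The cross-term $\frac{1}{b^{-1}-b}\sum_r f_{-r}G_r$ in \eqref{Vir12} mixes it with $G_{-1/2}\overline{|\D^{\NS}\ra}$:
\begin{equation*}
L_0^{\scriptscriptstyle{(1)}}\,f_{-1/2}\overline{|\D^{\NS}\ra}=\left(\frac{b^{-1}\D^{\NS}}{b^{-1}-b}-\frac{b^{-1}+2b}{2(b^{-1}-b)}\right)f_{-1/2}\overline{|\D^{\NS}\ra}-\frac{1}{b^{-1}-b}\,G_{-1/2}\overline{|\D^{\NS}\ra},
\end{equation*}
and $G_{-1/2}\overline{|\D^{\NS}\ra}\neq 0$ in a generic Verma module. (Your formula would be correct in the vacuum module, where $G_{-1/2}|\vac\ra=0$; that is the content of Lemma \ref{le:f-12} of the paper, which is a different statement.) The two genuine highest weight vectors at level $\tfrac12$ are nontrivial mixtures, recorded in the paper as $|P,\pm\tfrac12\ra\propto\bigl(f_{-1/2}+\tfrac{1}{Q/2\pm P}G_{-1/2}\bigr)\overline{|\D^{\NS}\ra}$.

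The device that makes the verification tractable --- and which is absent from your write-up --- is the free-field realization \eqref{eq.FFR} of $\NSR$: the paper introduces $\chi_r^{\mp}=f_r-i\psi^{\mp}_r$ and takes the Fermi-sea vectors $|P,n\ra=\Omega_n(P)\prod_{r=1/2}^{(4|n|-1)/2}\chi^{\mp}_{-r}\overline{|\D^{\NS}\ra}$ of \eqref{eq:|P,n} as the highest weight vectors, for which the relations $L_k^{(\eta)}|P,n\ra=0$, $k>0$, can be checked from \eqref{Vir12}. Your closing suggestion to combine $f_r$ and $\psi_r$ into a single Dirac fermion and bosonize is in fact precisely this construction in other words: the $\chi^{\mp}_r$ are the Dirac components and the $|P,n\ra$ are the charged vacua of the associated lattice boson. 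So the route you sketch would work, but as written the proposal stops exactly where the proof must start, and the only concrete candidate vector it supplies contradicts the very construction it gestures at --- the charged vacuum of the Dirac fermion is not $f_{-1/2}\overline{|\D^{\NS}\ra}$ but the mixture above.
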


\begin{figure}[h]
\begin{center}
\begin{tikzpicture}[x=4em,y=4em]
\draw[dashed] (-3,0.5) -- (3,0.5);
\draw (-2,0.45) -- (-2,0.5) node[anchor=south] {$-1$};
\draw (-1,0.45) -- (-1,0.5) node[anchor=south] {$-0.5$};
\draw (0,0.45) -- (-0,0.5) node[anchor=south] {$0$};
\draw (1,0.45) -- (1,0.5) node[anchor=south] {$0.5$};
\draw (2,0.45) -- (2,0.55) node[anchor=south] {$1$};
\draw (3.2,0.5) node[anchor=west] {$n$};
\draw[dashed] (-3.3,-3.2) -- (-3.3,0.2);
\draw (-3.3,-3.3) node[anchor=north] {$L_0+L_0^f$};
\draw (-3.25,0) -- (-3.35,0) node[anchor=east] {$\Delta^{\NS}$};
\draw (-3.25,-0.5) -- (-3.35,-0.5) node[anchor=east] {$\Delta^{\NS}+0.5$};
\draw (-3.25,-1) -- (-3.35,-1) node[anchor=east] {$\Delta^{\NS}+1$};
\draw (-3.25,-1.5) -- (-3.35,-1.5) node[anchor=east] {$\Delta^{\NS}+1.5$};
\draw (-3.25,-2) -- (-3.35,-2) node[anchor=east] {$\Delta^{\NS}+2$};
\draw (-3.25,-2.5) -- (-3.35,-2.5) node[anchor=east] {$\Delta^{\NS}+2.5$};
\draw (1.5,-3.5) -- (2,-2) -- (2.5,-3.5);
\draw (1.4,-2.1) -- (1,-0.5) -- (0.6,-2.1);
\draw (-0.7,-2.7) -- (0,0) -- (0.7,-2.7);
\draw (-1.4,-2.1) -- (-1,-0.5) -- (-0.6,-2.1);
\draw (-1.5,-3.5) -- (-2,-2) -- (-2.5,-3.5);
\end{tikzpicture}
\caption{\leftskip=1in \rightskip=1in Decomposition of $\pi^{\D^{\NS}}_{\sfF\oplus \NSR}$ into direct sum of representations of the algebra $\mathsf{Vir}\oplus\mathsf{Vir}$. Each interior angle corresponds to the Verma module~$\pi^n_{\Vir\oplus \Vir}$.}
	\label{NS-decompos-pic}
\end{center}
\end{figure}

%\begin{figure}
%%\begin{tikzpicture}
%%\draw (-0.7,-2.7) -- (0,0) -- (0.7,-2.7);
%%\end{tikzpicture}
%\caption{Decomposition of $\pi^{\D^{\NS}}_{\sfF\oplus \NSR}$ into direct sum of representations of the algebra $\mathsf{Vir}\oplus\mathsf{Vir}$. Each interior angle corresponds to Verma module $\pi^n_{\Vir\oplus \Vir$.}
%	\label{NS-decompos-pic}
%\end{figure}
	
The proof is based on an explicit construction of the highest weight vectors of the representations $\pi^n_{\Vir\oplus \Vir}$. We will use operators of free field realization $\psi_r^{\pm}$, which anticommute with $f_s$. Let us introduce another fermion operators acting on $\pi^{\D^{\NS}}_{\sfF\oplus \NSR}$
\[
\chi_r^{\mp}=f_r-i\psi^{\mp}_r
\]
Then it can be checked that the vectors $|P,n\ra$, $2n \in \mathbb{Z}$ defined by the formulae
\begin{equation}
\begin{aligned}
|P,n\ra=\Omega_n(P)\prod\limits_{r=1/2}^{(4n-1)/2}\chi^-_{-r}\overline{|\Delta^{\NS}\ra},\; n>0,  \qquad |P,n\ra=\Omega_n(P)\prod\limits_{r=1/2}^{(-4n-1)/2}\chi^+_{-r}\overline{|\Delta^{\NS}\ra},\; n<0,  \\  |P,0\ra=\overline{|P\ra}=\overline{|\Delta^{\NS}\ra}. 
\end{aligned}\label{eq:|P,n}
\end{equation}
satisfy highest weight vector equations
\[
L_0^{(\eta)}|P,n\ra=\D_n^{(\eta)}|P,n\ra, \quad L_k^{(\eta)}|P,n\ra=0, \quad k>0,\; 2n\in \mathbb{Z},\;\eta=1,2.
\]
Since $P \longleftrightarrow -P$ symmetry permutes $\chi_r^+$ and $\chi_r^-$ we have $|P,n\ra\equiv |-P,-n\ra$. Remark that here $\Omega_n(P)$ are arbitrary normalization constants. We shall specify them below.

We can write that $|P,n\ra=|\D^{\scriptscriptstyle{(1)}}_{n}\ra\otimes|\D^{\scriptscriptstyle{(2)}}_{n}\ra$.
Note that the highest weights of $|P,n\ra$ satisfy the relation
\[
\D^{\scriptscriptstyle{(1)}}_{n}+\D^{\scriptscriptstyle{(2)}}_{n}=\D^{\NS}+2n^2. 
\]
This relation follows from \eqref{momentum}.
For generic $P$ the vectors $|P,n\ra$ are linear independent and generate the Verma modules over the algebra $\Vir \oplus \Vir$.

The isomorphism from Proposition \ref{pr:FNSR} follows from the coincidence of the characters of the l.h.s. and the r.h.s. of \eqref{FxNSR}.
The character of the module $V$ is $\mathrm{ch}(V)=\left.\mathrm{Tr}\right|_V q^{L_0}$. The characters of Verma modules equal
\[
\mathrm{ch}(\pi_\Vir^\Delta)=q^\Delta\prod_{k=1}^\infty \frac{1}{1-q^k}, \qquad \mathrm{ch}(\pi_{\sfF\oplus\NSR}^{\Delta^{\NS}})=q^{\Delta^{\NS}}\prod_{k=1}^\infty \frac{(1+q^{k-\frac12})^2}{1-q^k}.
\]
Using the Jacobi triple product identity
\begin{equation}
\prod_{k=1}^{\infty}(1-q^{2k})(1+q^{2k-1}y^2)(1+q^{2k-1}y^{-2})=\sum_{k=-\infty}^{\infty}q^{k^2}y^{2k} \label{JTP}
\end{equation}
in the case $y \mapsto 1, q \mapsto q^{1/2}$ we have the necessary equality of characters 
\[
\mathrm{ch}(\pi_{\sfF\oplus \NSR}^{\D^{\NS}})= q^{\Delta^{\NS}}\prod_{k=1}^\infty\frac{(1+q^{k-\frac12})^2}{(1-q^k)}=
\sum_{2n\in\mathbb{Z}}q^{\Delta^{\NS}+2n^2}\prod_{k=1}^\infty\frac{1}{(1-q^k)^2}=\sum_{2n\in\mathbb{Z}}\mathrm{ch}(\pi^n_{\Vir\oplus \Vir})
\]

\subsection{Vacuum module}
% далее я использую для вакуума обозначение |\vac\ra 
%{\red{***Should be written***}}
\label{ssec:vac}
In this subsection we revisit the relation between the $\sfF\oplus\NSR$ and the $\Vir\oplus \Vir$ algebras for generic central charges (in terms of the parameter $b$ it means that $b^2 \not \in \mathbb{Q}$). Some formulas for unitary minimal model case ($b^2=-(n+2)/n$) were given in \cite{CSS}. We do not use the results of this subsection in the rest of the paper.

 We use the language of vertex operator algebras, (VOA for short), see e.g. \cite{FrenkelBook}. Recall that a vector space $V$ is called a vacuum module of VOA if any vector $v\in V$ corresponds to a \emph{current} i.e. a power series of operators $Y(v;q)=\sum Y_n q^{-n}$, where $Y_n\in \mathrm{End}(V)$. This correspondence $v \longleftrightarrow Y(v;z)$ is called the \emph{operator-state correspondence}. In the definition of the vertex operator algebra the correspondence $v \longleftrightarrow Y(v;q)$ should satisfy certain conditions, namely vacuum axiom, translation axiom and locality axiom.

The vacuum module $\mathrm{Vac}$ for the $\sfF \oplus \NSR$ algebra is generated by the vector $|\vac\ra$ defined by
\[ f_r |\vac\ra=0,\; \text{for $r\geq \frac12$;}\qquad G_r |\vac\ra=0,\; \text{for $r\geq -\frac12$}; \qquad L_n |\vac\ra=0\, \text{for $n\geq -1$}.\]
The simplest examples of the operator-state correspondence are
\begin{align*}
f_{-1/2}|\vac\ra \longleftrightarrow f(q)=\sum\nolimits_{r \in \mathbb{Z}+\frac12} &f_r q^{-r-1/2},\qquad
G_{-3/2}|\vac\ra \longleftrightarrow G(q)=\sum\nolimits_{r \in \mathbb{Z}+\frac12} G_r q^{-r-3/2},\\
&L_{-2}|\vac\ra \longleftrightarrow T(q)=\sum\nolimits_{n \in \mathbb{Z}} L_n q^{-n-2}.
\end{align*}
Other currents in $\sfF\oplus \NSR$ vertex operator algebra can be obtained by use of normal ordered products from $f(q)$, $G(q)$, $T(q)$ and its derivatives (see \cite[Th. 4.4.1]{FrenkelBook}). The current $T(q)$ is called the \emph{stress-energy tensor}.

Formulae \eqref{Vir12} defines two currents $T^{\scriptscriptstyle{(1)}}(q)$ and $T^{\scriptscriptstyle{(2)}}(q)$ i.e. define the vertex operator subalgebra $\Vir \oplus \Vir$ in the vertex operator algebra $\sfF \oplus \NSR$. We consider $\sfF \oplus \NSR$ as an extension of $\Vir \oplus \Vir$.

\begin{Lemma} 
The vertex operator algebra $\sfF \oplus \NSR$ is generated by currents $T^{\scriptscriptstyle{(1)}}(q)$, $T^{\scriptscriptstyle{(2)}}(q)$ and $f(q)$ 
\end{Lemma}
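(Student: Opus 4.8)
The plan is to show that the vertex operator algebra $\sfF\oplus\NSR$ is generated by $T^{\scriptscriptstyle{(1)}}(q)$, $T^{\scriptscriptstyle{(2)}}(q)$ and $f(q)$ by exhibiting the defining currents $f(q)$, $G(q)$, $T(q)$ of $\sfF\oplus\NSR$ inside the vertex subalgebra generated by these three currents. Since $f(q)$ is already on our list, and by \cite[Th.~4.4.1]{FrenkelBook} the whole vertex operator algebra is generated from $f(q)$, $G(q)$, $T(q)$ by normal ordered products and derivatives, it suffices to recover $G(q)$ and $T(q)$. The natural first step is to take the sum $T^{\scriptscriptstyle{(1)}}(q)+T^{\scriptscriptstyle{(2)}}(q)$: adding the two formulae in \eqref{Vir12}, the $L_n$ coefficients $\tfrac{b^{-1}}{b^{-1}-b}+\tfrac{b}{b-b^{-1}}=1$ add to $1$, while the coefficients of $\sum r\,{:}f_{n-r}f_r{:}$ and of $\sum f_{n-r}G_r$ can be read off directly. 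Thus $T^{\scriptscriptstyle{(1)}}(q)+T^{\scriptscriptstyle{(2)}}(q)$ equals $T(q)$ plus explicit currents built from $f(q)$ and from the cross-term $\sum f_{n-r}G_r$, which in current language is (up to normalization) the normal ordered product ${:}f(q)G(q){:}$.

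Next I would isolate $G(q)$. The key observation is that both $L_n^{\scriptscriptstyle{(1)}}$ and $L_n^{\scriptscriptstyle{(2)}}$ contain the same combination $\sum_r f_{n-r}G_r$, so a suitable linear combination that cancels the $L_n$ term will produce a current of the form $a\,{:}f(q)G(q){:}+b\,{:}(\partial f\cdot f)(q){:}$. Concretely, $b\cdot T^{\scriptscriptstyle{(1)}}(q)-b^{-1}T^{\scriptscriptstyle{(2)}}(q)$ (or the analogous weighting that kills the $L_n$-coefficient) yields a current proportional to ${:}f(q)G(q){:}$ modulo a pure-fermion current that is manifestly in the subalgebra generated by $f(q)$. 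Having ${:}f(q)G(q){:}$ at hand, I would recover $G(q)$ itself by computing the operator product of $f(q)$ with ${:}f(q)G(q){:}$: using $\{f_r,f_s\}=\delta_{r+s,0}$ and $\{f_r,G_s\}=0$, the single contraction of the external $f$ with the internal $f$ returns $G(q)$ up to a constant, the standard mechanism for extracting a factor from a normal ordered product of free fermions with a commuting current. Once $G(q)$ is obtained, $T(q)$ follows from the remaining combination (or equally from $\{G_r,G_s\}=2L_{r+s}+\dots$, i.e.\ from the OPE of $G(q)$ with itself), completing the generation of all three defining currents.

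The main obstacle I anticipate is the bookkeeping of normal ordering and central/anomalous terms. The generators $L_n^{(\eta)}$ are infinite sums lying in a completion of the enveloping algebra, so when I extract $G(q)$ by contracting $f(q)$ with ${:}f(q)G(q){:}$ I must be careful that the contractions produce exactly $G(q)$ and not $G(q)$ plus lower currents or a $c$-number correction; the $:{\cdot}:$ prescription on $\sum r\,{:}f_{n-r}f_r{:}$ and the ordering in $\sum f_{n-r}G_r$ have to be tracked so that the recovered currents agree with the honest $T(q)$, $G(q)$ rather than with shifted versions. A clean way to control this is to argue at the level of the distinguished vectors: apply the relevant modes of $T^{\scriptscriptstyle{(1)}}$, $T^{\scriptscriptstyle{(2)}}$, $f$ to $|\vac\ra$ and check that $G_{-3/2}|\vac\ra$ and $L_{-2}|\vac\ra$ lie in the span generated by these modes, which reduces the infinite-sum subtleties to a finite computation on the vacuum since positive modes annihilate $|\vac\ra$. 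I expect the denominators $b^{-1}-b$ to cancel cleanly in the combinations above precisely because $b^2\neq 0,1$ is assumed, so no genuine analytic difficulty arises — the work is entirely in the combinatorics of contractions.
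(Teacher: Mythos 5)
Your proposal is correct and takes essentially the same route as the paper: $T(q)$ is recovered from the sum $T^{\scriptscriptstyle{(1)}}(q)+T^{\scriptscriptstyle{(2)}}(q)$ (where in fact the cross-term $\sum_r f_{n-r}G_r$ cancels outright, so $T=T^{\scriptscriptstyle{(1)}}+T^{\scriptscriptstyle{(2)}}-\frac12{:}f'f{:}$ with no $G$-contamination), and $G(q)$ is extracted by a single fermion contraction, i.e.\ the residue of the OPE of $f(z)$ against a suitable linear combination of $T^{\scriptscriptstyle{(1)}},T^{\scriptscriptstyle{(2)}}$, which is exactly the paper's mechanism via $:\!f(q)f(z)\!:=\frac{1}{q-z}+\mathrm{reg}$ and $T(q)f(z)=\mathrm{reg}$. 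The differences are cosmetic bookkeeping: the paper contracts $f$ directly against the combination $(b+2b^{-1})T^{\scriptscriptstyle{(1)}}+(b^{-1}+2b)T^{\scriptscriptstyle{(2)}}$ so that the fermionic contributions cancel, whereas you first form the $L_n$-killing combination (which is $bT^{\scriptscriptstyle{(1)}}+b^{-1}T^{\scriptscriptstyle{(2)}}$, not $bT^{\scriptscriptstyle{(1)}}-b^{-1}T^{\scriptscriptstyle{(2)}}$ as written) and then contract, discarding the pure-fermion remainder.
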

\begin{proof}
It is enough to express currents $G(q)$ and $T(q)$. It follows from \eqref{Vir12} that
\begin{align*}
T(q)&=T^{\scriptscriptstyle{(1)}}(q)+T^{\scriptscriptstyle{(2)}}(q)-\frac12:f'(q)f(q):\\
G(q)&=\frac{b+2b^{-1}}{2\pi i}\oint_q dz T^{\scriptscriptstyle{(1)}}(q)f(z)+\frac{b^{-1}+2b}{2\pi i}\oint_q dz T^{\scriptscriptstyle{(2)}}(q) f(z)
\end{align*}
Here we used that $:\!\!f(q)f(z)\!\!:=\frac{1}{q-z}+\mathrm{reg}$ and $T(q)f(z)=\mathrm{reg}$.
\end{proof}
Note that the current $T_f(q)=\frac12:f'(q)f(q):$ is the standard fermion stress-energy tensor.

We want to describe the structure of $\mathrm{Vac}$ as a module over the $\Vir \oplus \Vir$ algebra. Recall 
that for $\Delta=\Delta_{m,n}(b)=((b^{-1}+b)^2-(mb^{-1}+nb)^2)/4$, $m,n \in \mathbb{N}$ the Verma module $\pi_\Vir^\Delta$ contains the singular vector of the level $mn$ (see e.g. \cite{Feigin Fuchs 1990}). Denote by $\mathbb{L}_{m,n}^b$ an irreducible quotient of the $\pi_\Vir^{\Delta_{m,n}}$. The superscript $b$ stresses the dependence on the central charge. 

\begin{Lemma} \label{le:f-12}
The vector $f_{-1/2}|\vac\ra$ is the highest weight vector of the $\Vir\oplus \Vir$ module $\mathbb{L}_{1,2}^{b^{\scriptscriptstyle{(1)}}}\otimes \mathbb{L}_{2,1}^{b^{\scriptscriptstyle{(2)}}}$.
\end{Lemma}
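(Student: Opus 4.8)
The plan is to prove the statement in two stages: first check by a direct computation that $f_{-1/2}|\vac\ra$ is a highest weight vector for $\Vir\oplus\Vir$ with weight $\bigl(\Delta_{1,2}(b^{\scriptscriptstyle{(1)}}),\Delta_{2,1}(b^{\scriptscriptstyle{(2)}})\bigr)$, and then show that the cyclic $\Vir\oplus\Vir$ module it generates is \emph{exactly} the irreducible $\mathbb{L}_{1,2}^{b^{\scriptscriptstyle{(1)}}}\otimes\mathbb{L}_{2,1}^{b^{\scriptscriptstyle{(2)}}}$, i.e. that the two level-two singular vectors vanish on it.

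For the highest weight property I would apply the operators $L_n^{(\eta)}$ from \eqref{Vir12} to $f_{-1/2}|\vac\ra$ and use the vacuum axioms $f_r|\vac\ra=0\,(r\ge\tfrac12)$, $G_r|\vac\ra=0\,(r\ge-\tfrac12)$, $L_n|\vac\ra=0\,(n\ge-1)$. In $L_0^{(\eta)}$ the $\NSR$ term drops since $L_0|\vac\ra=0$ and $[L_0,f_{-1/2}]=0$; the cross term $\sum_r f_{-r}G_r$ annihilates $f_{-1/2}|\vac\ra$, because anticommuting $G_r$ through $f_{-1/2}$ (using $\{f_r,G_s\}=0$) and then through the remaining fermion produces only states $G_r|\vac\ra$ with $r\ge-\tfrac12$, which vanish; the only surviving contribution is the fermionic bilinear, which acts as a scalar evaluated directly on $f_{-1/2}|\vac\ra$. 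Collecting the coefficients and substituting \eqref{cc} one checks that the two eigenvalues equal $\Delta_{1,2}(b^{\scriptscriptstyle{(1)}})$ and $\Delta_{2,1}(b^{\scriptscriptstyle{(2)}})$ respectively; in particular their sum is $\tfrac12$, consistent with the identity $L_0^{\scriptscriptstyle(1)}+L_0^{\scriptscriptstyle(2)}=L_0+L_0^f$ that follows from \eqref{Vir12}. The annihilation by positive modes is then automatic: since $L_k^{\scriptscriptstyle(1)}+L_k^{\scriptscriptstyle(2)}$ lowers the total grading $L_0+L_0^f$ by $k$ and $\mathrm{Vac}$ is non-negatively graded, for $k\ge1$ the vector $L_k^{(\eta)}f_{-1/2}|\vac\ra$ would sit in total weight $\tfrac12-k<0$ and must therefore be zero.

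The main obstacle is the second stage, the vanishing of the singular vectors, since both $\Delta_{1,2}(b^{\scriptscriptstyle{(1)}})$ and $\Delta_{2,1}(b^{\scriptscriptstyle{(2)}})$ are degenerate weights whose Verma modules carry a level-two null vector. Here I would exploit that the relevant graded pieces of $\mathrm{Vac}$ are small: $L_{-1}^{(\eta)}f_{-1/2}|\vac\ra$ lands in the two-dimensional weight-$\tfrac32$ space spanned by $f_{-3/2}|\vac\ra$ and $G_{-3/2}|\vac\ra$, while the level-two descendants lie in the three-dimensional weight-$\tfrac52$ space spanned by $f_{-5/2}|\vac\ra$, $f_{-1/2}L_{-2}|\vac\ra$ and $G_{-5/2}|\vac\ra$. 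It then suffices to compute $L_{-2}^{(\eta)}f_{-1/2}|\vac\ra$ and $(L_{-1}^{(\eta)})^2 f_{-1/2}|\vac\ra$ in this basis and verify that the Benoit--Saint-Aubin/BPZ combination annihilating $\phi_{1,2}$ (namely $\bigl(L_{-2}^{\scriptscriptstyle(1)}+(b^{\scriptscriptstyle{(1)}})^{-2}(L_{-1}^{\scriptscriptstyle(1)})^2\bigr)f_{-1/2}|\vac\ra$, and its $b\leftrightarrow b^{-1}$ analogue for $\eta=2$) vanishes. This is a finite but somewhat delicate calculation, as it mixes the $\NSR$, fermionic bilinear and $fG$ pieces of \eqref{Vir12}. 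As an alternative that bypasses the explicit null-vector algebra, one can invoke simplicity of the $\sfF\oplus\NSR$ vertex operator algebra together with complete reducibility of its restriction to $\Vir\oplus\Vir$ for generic $b$ (the same semisimplicity underlying Proposition \ref{pr:FNSR}): a cyclic submodule generated by a single degenerate-weight highest weight vector is then forced to be the irreducible quotient $\mathbb{L}_{1,2}^{b^{\scriptscriptstyle{(1)}}}\otimes\mathbb{L}_{2,1}^{b^{\scriptscriptstyle{(2)}}}$, which gives the claim without further computation.
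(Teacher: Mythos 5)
Your two-stage plan is, in its main route, exactly the paper's proof: the paper reduces Lemma \ref{le:f-12} to the relations $L_k^{(\eta)}f_{-1/2}|\vac\ra=0$ for $k>0$, $L_0^{\scriptscriptstyle{(1)}}f_{-1/2}|\vac\ra=\Delta_{1,2}(b^{\scriptscriptstyle{(1)}})f_{-1/2}|\vac\ra$, $L_0^{\scriptscriptstyle{(2)}}f_{-1/2}|\vac\ra=\Delta_{2,1}(b^{\scriptscriptstyle{(2)}})f_{-1/2}|\vac\ra$, together with the vanishing of precisely the two level-two combinations you name, all checked directly from \eqref{Vir12}. Your execution of stage one is sound: the cross term $\sum_r f_{-r}G_r$ does annihilate $f_{-1/2}|\vac\ra$ (though note that for $r\le-\tfrac32$ the vanishing comes from the annihilator $f_{-r}$ hitting the vacuum, not from $G_r|\vac\ra=0$ as your phrasing suggests), the eigenvalue bookkeeping matches $\Delta_{1,2}(b^{\scriptscriptstyle{(1)}})$, $\Delta_{2,1}(b^{\scriptscriptstyle{(2)}})$, and the grading argument disposing of all $L_k^{(\eta)}$ with $k\ge1$ at once is a nice economy over brute force; your identification of the two- and three-dimensional weight subspaces where the null-vector check lives is also correct.

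The one thing to retract is the proposed alternative that ``bypasses the explicit null-vector algebra''. Complete reducibility of $\mathrm{Vac}$ as a $\Vir\oplus\Vir$-module is not available at this point in the paper: Proposition \ref{pr:FNSR} concerns the Verma module $\pi^{\Delta^{\NS}}_{\sfF\oplus\NSR}$ at \emph{generic} $P$, whereas $\mathrm{Vac}$ is a proper quotient of the Verma module at the non-generic value $P=Q/2$, and its $\Vir\oplus\Vir$-decomposition involves the irreducible quotients $\mathbb{L}_{1,m}^{b^{\scriptscriptstyle{(1)}}}\otimes\mathbb{L}_{m,1}^{b^{\scriptscriptstyle{(2)}}}$ rather than Verma modules. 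Establishing that semisimple decomposition is exactly the content of Theorem \ref{thm:Vac}, whose proof invokes this very Lemma, so using it here would be circular; and simplicity of the $\sfF\oplus\NSR$ vertex operator algebra by itself says nothing about how $\mathrm{Vac}$ restricts to a subalgebra. So the direct computation of the singular-vector relations is not an optional refinement of the argument --- for this Lemma it \emph{is} the argument, just as in the paper.
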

\begin{proof}
This fact is equivalent to the relations
\begin{equation*}
\begin{aligned}
&L_k^{\scriptscriptstyle{(1)}}f_{-1/2}|\vac\ra=0,\quad  L_k^{\scriptscriptstyle{(2)}}f_{-1/2}|\vac\ra=0, \quad \text{ for $k>0$}, \\
&L_0^{\scriptscriptstyle{(1)}}f_{-1/2}|\vac\ra=\Delta_{1,2}(b^{\scriptscriptstyle{(1)}})f_{-1/2}|\vac\ra, \quad L_0^{\scriptscriptstyle{(2)}}f_{-1/2}|\vac\ra=\Delta_{2,1}(b^{\scriptscriptstyle{(2)}})f_{-1/2}|\vac\ra,
\\ 
&\left((L_{-1}^{\scriptscriptstyle{(1)}})^2+(b^{\scriptscriptstyle{(1)}})^2L_{-2}^{\scriptscriptstyle{(1)}}\right)f_{-1/2}|\vac\ra=0, \quad \left((L_{-1}^{\scriptscriptstyle{(2)}})^2+(b^{\scriptscriptstyle{(2)}})^{-2}L_{-2}^{\scriptscriptstyle{(2)}}\right)f_{-1/2}|\vac\ra=0.
\end{aligned}
\end{equation*}
The relations can be checked directly by use of \eqref{Vir12}.
\end{proof}
Therefore the current $f(q)$ can be considered as a product $\phi_{12}^{\scriptscriptstyle{(1)}}\phi_{21}^{\scriptscriptstyle{(2)}}$ (notations from \cite{BPZ}). This fact will be discussed below in Remark \ref{rem:phi12}.

The character of the module $\mathbb{L}^b_{m,n}$ equals to $\mathrm{ch}(\mathbb{L}^b_{m,n})=(1-q^{mn})\mathrm{ch}(\pi_\Vir^{\Delta_{m,n}})$ since it is a quotient of the Verma module by the submodule generated by the singular vector on the level $mn$.

\begin{thm} \label{thm:Vac}
The module $\mathrm{Vac}$ is isomorphic to the sum of $\Vir \oplus \Vir$ modules
\[
\mathrm{Vac}\cong \bigoplus\limits_{m\in\mathbb{N}}\mathbb{L}_{1,m}^{b^{\scriptscriptstyle{(1)}}}\otimes \mathbb{L}_{m,1}^{b^{\scriptscriptstyle{(2)}}}.
\]
\end{thm}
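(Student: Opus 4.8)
The plan is to mirror the proof of Proposition~\ref{pr:FNSR}: exhibit a $\Vir\oplus\Vir$ highest weight vector for each summand and then match characters. The essential new feature is that $\mathrm{Vac}$ is a \emph{degenerate} module, so the summands are the irreducible quotients $\mathbb{L}_{1,m}^{b^{(1)}}\otimes\mathbb{L}_{m,1}^{b^{(2)}}$ rather than Verma modules; the vector $f_{-1/2}|\vac\ra$ of Lemma~\ref{le:f-12} is the first nontrivial instance.

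\textbf{Characters.} First I would settle the character identity. The vacuum conditions give $\mathrm{ch}(\mathrm{Vac})=\prod_{k\ge1}(1+q^{k-1/2})\,\prod_{k\ge2}(1+q^{k-1/2})\,\prod_{n\ge2}(1-q^n)^{-1}$, the three factors coming from the creation modes $f_{-r}$, $G_{-r}$ and $L_{-n}$. On the other side, a short computation from \eqref{cc} together with the definition of $\Delta_{m,n}$ yields the clean relation $\Delta_{1,m}(b^{(1)})+\Delta_{m,1}(b^{(2)})=(m-1)^2/2$, so that $\mathrm{ch}(\mathbb{L}_{1,m}^{b^{(1)}}\otimes\mathbb{L}_{m,1}^{b^{(2)}})=(1-q^m)^2\,q^{(m-1)^2/2}\prod_{k\ge1}(1-q^k)^{-2}$. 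Clearing the common factor $\prod(1-q^k)^{-2}$, the required equality $\sum_{m\ge1}\mathrm{ch}(\mathbb{L}_{1,m}^{b^{(1)}}\otimes\mathbb{L}_{m,1}^{b^{(2)}})=\mathrm{ch}(\mathrm{Vac})$ reduces to the theta identity $\sum_{m\ge1}(1-q^m)^2q^{(m-1)^2/2}=(1-q^{1/2})\sum_{k\in\mathbb{Z}}q^{k^2/2}$, whose right-hand side is exactly the Jacobi triple product \eqref{JTP} specialized to $y\to1$, $q\to q^{1/2}$ — the same mechanism used in Proposition~\ref{pr:FNSR}.

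\textbf{Closing the argument.} Next I would produce, for each $m\in\mathbb{N}$, a $\Vir\oplus\Vir$ highest weight vector $w_m\in\mathrm{Vac}$ of weight $(\Delta_{1,m}(b^{(1)}),\Delta_{m,1}(b^{(2)}))$, with $w_1=|\vac\ra$ and $w_2=f_{-1/2}|\vac\ra$. Granting these, the proof finishes quickly. Each $w_m$ generates a quotient $U_m$ of $\pi_{\Vir}^{\Delta_{1,m}}\otimes\pi_{\Vir}^{\Delta_{m,1}}$, whose unique irreducible quotient is $\mathbb{L}_{1,m}^{b^{(1)}}\otimes\mathbb{L}_{m,1}^{b^{(2)}}$; hence $\mathrm{ch}(U_m)\ge\mathrm{ch}(\mathbb{L}_{1,m}^{b^{(1)}}\otimes\mathbb{L}_{m,1}^{b^{(2)}})$ in every degree. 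For $b^2\notin\mathbb{Q}$ a one-line check shows the numbers $\Delta_{1,m}(b^{(1)})$ are pairwise incongruent modulo $\mathbb{Z}$, so the $U_m$ occupy disjoint $L_0^{(1)}$-spectra and $\sum_m U_m$ is automatically direct. The sandwich $\sum_m\mathrm{ch}(\mathbb{L}_{1,m}^{b^{(1)}}\otimes\mathbb{L}_{m,1}^{b^{(2)}})\le\sum_m\mathrm{ch}(U_m)=\mathrm{ch}\bigl(\bigoplus_m U_m\bigr)\le\mathrm{ch}(\mathrm{Vac})$ then collapses to equalities by the character identity, forcing $U_m\cong\mathbb{L}_{1,m}^{b^{(1)}}\otimes\mathbb{L}_{m,1}^{b^{(2)}}$ and $\bigoplus_m U_m=\mathrm{Vac}$.

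\textbf{The main obstacle.} The hard part is the construction of $w_m$ for general $m$. The natural guess $f_{-1/2}f_{-3/2}\cdots f_{-(2m-3)/2}|\vac\ra$ has the correct weight $(m-1)^2/2$ and is genuinely highest weight for $m\le3$, but already for $m=4$ the cross term $\sum_r f_{-r}G_r$ in \eqref{Vir12} fails to annihilate it — a contraction of the type $f_{5/2}f_{-5/2}$ leaves a nonzero $G$-excited remainder — so honest corrections involving $G$- and $L$-modes are unavoidable. The structural way to organize them follows Lemma~\ref{le:f-12}: since $f=\phi_{1,2}^{(1)}\phi_{2,1}^{(2)}$ is a product of degenerate fields, iterating the Virasoro fusion $\phi_{1,2}\times\phi_{1,m-1}=\phi_{1,m-2}+\phi_{1,m}$ in both factors produces $\phi_{1,m}^{(1)}\phi_{m,1}^{(2)}$ as the top channel of the operator product of $m-1$ copies of $f$, and $w_m$ is the corresponding new primary. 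Verifying that this recipe yields an honest $\Vir\oplus\Vir$ highest weight vector of the asserted weight — equivalently, diagonalizing the complicated twisted operator $L_0^{(1)}$ of \eqref{Vir12} on $\mathrm{Vac}$, which is not diagonal in the obvious Fock basis — is the technical heart of the proof.
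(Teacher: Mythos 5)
Your character bookkeeping is correct and coincides with the paper's: the weight identity $\Delta_{1,m}(b^{(1)})+\Delta_{m,1}(b^{(2)})=\tfrac{(m-1)^2}{2}$, the characters $\mathrm{ch}\bigl(\mathbb{L}^b_{m,n}\bigr)=(1-q^{mn})\,\mathrm{ch}\bigl(\pi_{\Vir}^{\Delta_{m,n}}\bigr)$, and the reduction to the Jacobi triple product \eqref{JTP} at $y=1$, $q\mapsto q^{1/2}$ are exactly the steps in the paper's proof; your remark that for $b^2\notin\mathbb{Q}$ the weights $\Delta_{1,m}(b^{(1)})$ are pairwise incongruent mod $\mathbb{Z}$, so the submodules $U_m$ sum directly, is a legitimate way to make the character inequality precise (the paper leaves this implicit).

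However, the proposal has a genuine gap at precisely the point you yourself call ``the technical heart'': you never actually produce the highest weight vectors $w_m$ for $m\geq 3$. Without them the inequality $\mathrm{ch}(\mathrm{Vac})\geq\sum_m\mathrm{ch}\bigl(\mathbb{L}_{1,m}^{b^{(1)}}\otimes\mathbb{L}_{m,1}^{b^{(2)}}\bigr)$ is unproven and the sandwich argument collapses; the fusion sketch ($\phi_{1,2}\times\phi_{1,m-1}=\phi_{1,m-2}+\phi_{1,m}$ iterated in both chiral factors) is only a heuristic, and converting ``the top channel of the OPE of $m-1$ copies of $f$'' into an honest nonzero vector of $\mathrm{Vac}$ killed by all $L^{(1)}_k,L^{(2)}_k$, $k>0$, is the hard work you omit. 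The paper needs no new construction at all: it reuses the vectors $|P,n\ra$ of \eqref{eq:|P,n}, which were already shown in Proposition \ref{pr:FNSR} to be $\Vir\oplus\Vir$ highest weight vectors, specialized to $P=Q/2$. These make sense in the Verma module $\pi^0_{\sfF\oplus\NSR}$ because that module carries one of the two free-field realizations \eqref{eq.FFR} (upper sign), hence the dressed fermions $\chi^-_r=f_r-i\psi^-_r$ act there; and they remain nonzero in the quotient $\mathrm{Vac}$ because expanding $\prod\chi^-_{-r}\,\overline{|P\ra}$ produces the pure fermion monomial $\prod f_{-r}\,\overline{|P\ra}$ with nonzero coefficient, whereas the submodule being quotiented out (generated by the singular vector) contains no pure-$f$ terms. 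By \eqref{momentum} at $P=Q/2$ these vectors have exactly the weights $(\Delta_{1,m}(b^{(1)}),\Delta_{m,1}(b^{(2)}))$ with $m=2n+1$, which is what closes the proof.

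One concrete factual slip in your obstacle paragraph: the naive vector $f_{-1/2}f_{-3/2}|\vac\ra$ fails to be highest weight already at $m=3$, not first at $m=4$. A direct computation with \eqref{Vir12} (only the fermion bilinear contributes; the $L$ and $fG$ terms annihilate the vector) gives
\begin{equation*}
L_2^{(1)}\,f_{-1/2}f_{-3/2}|\vac\ra=\frac{b^{-1}+2b}{2(b^{-1}-b)}\,|\vac\ra\neq 0
\end{equation*}
for generic $b$. This is consistent with the paper's construction, in which the highest weight vectors are built from the dressed fermions $\chi^-_r$ rather than from $f_r$ alone.
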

\begin{proof} The vector $|\vac\ra$ has the highest weight $0=\Delta_{1,1}(b^{\scriptscriptstyle{(1)}})=\Delta_{1,1}(b^{\scriptscriptstyle{(2)}})$ for the both Virasoro subalgebras. Therefore this vector generates $\Vir\oplus\Vir$ submodule $\mathbb{L}_{1,1}^{b^{\scriptscriptstyle{(1)}}}\otimes \mathbb{L}_{1,1}^{b^{\scriptscriptstyle{(2)}}}$. It was proved above that the vector $f_{-1/2}|\vac\ra$ generates the module $\mathbb{L}_{1,2}^{b^{\scriptscriptstyle{(1)}}}\otimes \mathbb{L}_{2,1}^{b^{\scriptscriptstyle{(2)}}}$. 

The vacuum module $\mathrm{Vac}$ is a quotient of the Verma module $\pi^0_{\sfF\oplus \NSR}$. This Verma module has only one free field realization, namely we put $P=Q/2$ and use upper sign in formulae~\eqref{eq.FFR}. Therefore the vectors $|P,n\ra$ for $n \geq 0$ are well defined in the Verma module $\pi^0_{\sfF\oplus \NSR}$. It follows from formula~\eqref{eq:|P,n} that the vector $|P,n\rangle$ contains the product $\prod\limits_{r=1/2}^{(4n-1)/2}f_{-r}\overline{|P\ra}$ with non-zero coefficient. Then the vectors $|P,n\ra$ for $n \geq 0$, $2n \in \mathbb{Z}$ are non-zero in the  quotient module $\mathrm{Vac}$. Since $P=Q/2$ then  $P^{(\eta)}=Q^{(\eta)}/2$, $\eta=1,2$. Using \eqref{momentum} we get $P_n^{\scriptscriptstyle{(1)}}=P_{1,2n+1}(b^{\scriptscriptstyle{(1)}})$, $P_n^{\scriptscriptstyle{(2)}}=P_{2n+1,1}(b^{\scriptscriptstyle{(2)}})$.

So we proved that the vector $|P,n\ra \in \mathrm{Vac}$ generates the $\Vir\oplus\Vir$ submodule of the highest weight $(\Delta_{1,m}(b^{\scriptscriptstyle{(1)}}),\Delta_{m,1}(b^{\scriptscriptstyle{(2)}}))$, where $m=2n+1$. The irreducible module $\mathbb{L}_{1,m}^{b^{\scriptscriptstyle{(1)}}}\otimes \mathbb{L}_{m,1}^{b^{\scriptscriptstyle{(2)}}}$ is the smallest module of this highest weight. Therefore we get an inequality of characters
\begin{align}
\mathrm{ch}(\mathrm{Vac}) \geq \sum\limits_{m\in\mathbb{N}}\mathrm{ch}\left(\mathbb{L}_{1,m}^{b^{\scriptscriptstyle{(1)}}}\otimes \mathbb{L}_{m,1}^{b^{\scriptscriptstyle{(2)}}}\right). \label{eq:chVac}
\end{align}
Now our theorem is equivalent to equality in \eqref{eq:chVac}. So it remains to prove an identity
\[
(1-q^{1/2})\prod_{k=1}^\infty \frac{(1+q^{k-\frac12})^2}{1-q^k}= \sum_{m \in \mathbb{N}} q^{\frac{(m-1)^2}2}(1-q^m)^2\prod_{k=1}^\infty \frac{1}{(1-q^k)^2}
\]
Equivalently
\[
(1-q^{1/2})\prod_{k=1}^\infty (1+q^{k-\frac12})^2(1-q^k)= \sum_{m \in \mathbb{N}} q^{\frac{(m-1)^2}2}(1-q^m)^2.
\]
And the last identity is a $(1-q^{1/2})$ multiple of the Jacobi triple product identity \eqref{JTP} for $y=1$, $q\mapsto q^{1/2}$
\[
(1-q^{1/2})\prod_{k=1}^\infty (1+q^{k-\frac12})^2(1-q^k)= \sum_{m \in \mathbb{Z}} q^{\frac{m^2}2}(1-q^{1/2})=\sum_{m \in \mathbb{N}} q^{\frac{(m-1)^2}2}(1-q^m)^2.
\]
\end{proof}

\section{Vertex operators and conformal blocks}
\label{sec:conf:block}

\subsection{Conformal blocks and chain vectors}
\label{ssec:chain}

We use non hermitian, but a complex symmetric scalar product. Operators are conjugated as
\begin{equation}\label{eq:conj:operators}
L_{n}^+=L_{-n},\quad
G_{r}^+=G_{-r},\quad
f_{r}^+=-f_{-r} \quad\Rightarrow \quad (L^{(\eta)}_n)^+=L^{(\eta)}_{-n}, \quad \eta=1,2.
\end{equation}
We normalize the highest weight vectors of Verma modules by $\la \Delta|\Delta\ra=1$, $\overline{\langle\Delta^{\NS}}|\overline{\Delta^{\NS}\rangle}=1$. The coefficients $\Omega_n(P)$ are determined by the similar condition $\la P,n|P,n\ra=1$.

The vertex operator $V^{\D}_{\D_1,\D_2}\colon \pi^{\D_2}_{\Vir} \mapsto \pi^{\D_1}_{\Vir}$ is defined by the commutation relations
\begin{equation}
[L_k,V_{\D}(q)]=\left(q^{k+1}\partial_q +(k+1)\D q^{k}\right)V_{\D}(q), \label{vertcommrel}
\end{equation}
Here and below we simplify the notation $V^{\D}_{\D_1,\D_2}$ to $V_{\D}(q)$. It follows from \eqref{vertcommrel} that $V_\Delta(q)$ can be written as a power series $V_\Delta(q)=\sum_{m \in \mathbb{Z}} V_{\Delta,m}q^{-m+\Delta_1-\Delta_2-\Delta}$.
The operator $V_\Delta(q)$ is completely determined by the relation \eqref{vertcommrel} and normalization $\la \D_2|V^{\D}_{\D_2,\D_1}(1)| \D_1\ra=1$. We express the conformal weight of the vertex operator in terms of the parameter $\alpha$ (cf. \eqref{eq:cDelta:Vir})
\[
\Delta=\Delta(\alpha-Q/2,b)=\alpha(Q-\alpha), 
\]
and abbreviate $V_\Delta$ to $V_\alpha$. The $n$-point conformal block on $\mathbb{CP}^1$ of the primary fields located in the points $z_n=\infty, z_{n-1}, \ldots, z_{2}\in\mathbb{C}\backslash\{0\}, z_1=0; z_i\neq z_j, i\neq j $ is defined as the matrix element
\begin{equation}
\kr{F}_c(\{\D_{i-j}\},\{\D_k\}|\{z_k\})=\la\D_n|V^{\D_{n-1}}_{\D_n, \D_{(n-1)-(n-2)}}(z_{n-1}) \ldots V^{\D_{2}}_{\D_{3-2},\D_{1}}(z_{2})|\D_1\ra \label{confblock}
\end{equation}
Here $\D_i$ are the highest weights of Verma modules corresponding to the points $z_i$, $\D_{(i+1)-i}$ are the highest weights of intermediate Verma modules. The expression \eqref{confblock} defines the conformal block as a multivariable formal power series in $\frac{z_{i}}{z_{i+1}}$. This conformal block can be represented by use of the diagram in Figure \ref{fig:npcb}. It has been argued in \cite{Teschner:2004} that this power series converges in a region $\frac{z_{i}}{z_{i+1}}<<1$. In this paper (except Subsection \ref{ssec:melem}) we do not use analytical properties of conformal blocks (see also Remark \ref{rem:convergence}). We will consider variables $\frac{z_{i}}{z_{i+1}}$ as a formal variables. Due to convergence mentioned above it is equivalent to the study of vicinity of 0.

\begin{figure}
\begin{center}
\begin{tikzpicture}[x=4em,y=2em]
\draw[thick] (0.7,-1) -- (4,-1) node[anchor=west]{$0, \D_1$};
\draw[thick] (2.5,-1) -- (2.5,1.5) node[anchor=south]{$z_{2}, \D_{2}$};
\draw[thick] (1,-1) -- (1,1.5) node[anchor=south]{$z_{3}, \D_{3}$};
\draw[thick] (-1,-1) -- (-1,1.5) node[anchor=south]{$z_{n-2}, \D_{n-2}$};
\draw[thick] (-2.5,-1) -- (-2.5,1.5) node[anchor=south]{$z_{n-1}, \D_{n-1}$};
\draw[thick] (-0.7,-1) -- (-4,-1) node[anchor=east]{$\infty, \D_{n}$};
\draw[thick, dashed] (-0.7,-1) -- (0.7,-1);
\draw(1.75,-1) node[anchor=north]{$\D_{3-2}$};
\draw(-1.75,-1) node[anchor=north]{$\D_{(n{-}1)(n{-}2)}$};
\draw(0,2) node[anchor=north]{$\ldots$};
\end{tikzpicture}
\end{center}
\caption{\leftskip=1in \rightskip=1in
Diagram representing conformal block as a matrix element. \label{fig:npcb}}
\end{figure}

For the 4-point conformal block one can set the point $z_3$ to $1$, using the conformal transformation $z\mapsto z/z_3$. We define the 4-point conformal block by the formula
\begin{equation}\label{eq:Vir4point}
\kr{F}_c(\ora{\D},\D|q)=q^{\D_1+\D_2}\la\D_4|V^{\D_3}_{\D_4, \D}(1) V^{\D_2}_{\D, \D_1}(q)|\D_1\ra,
\end{equation}
where $\ora{\D}$ stands for the set of external weights $\D_i, i=\overline{1,4}$. Note that the function $\kr{F}_c$ differs from the function defined in equation \eqref{confblock} by a factor $q^{\D_1+\D_2}$.

It is convenient for calculations to rewrite the definition of $\kr{F}_c(\ora{\D},\D|q)$ in terms of the chain vectors $|W(q)\ra_{21}$ defined as
\[
|W(q^2)\ra_{21}=q^{\D_1+\D_2}V^{\D_2}_{\D, \D_1}(q)|\D_1\ra=q^{\D} \sum_{\lmb\in\mathbb{Y}} c^{21}_\lmb L_{-\lmb} q^{|\lmb|}|\D\ra, 
\]
where $\lambda=(\lambda_1\geq\lambda_2\geq \ldots\geq\lambda_k>0)$ is a partition, 
$L_{-\lmb}|\D\ra\equiv L_{-\lmb_1} \ldots L_{-\lmb_{k}}|\D\ra$. If we decompose $|W(q)\ra_{21}=q^{\D/2}\sum\limits_{N=0}^{\infty}q^{\frac{N}2}|N\ra_{21}$, then the commutation relation  \eqref{vertcommrel} implies 
\begin{equation}\label{eq:Vir_chain}
L_k|N\ra_{21}=\Bigl(k\D_2-\D_1+\D+N-k\Bigr)\,|N-k\ra_{21}, \quad  k>0.
\end{equation}
Here and below we assume that $|N\rangle=0$, for $N<0$. These equations coupled to the normalization $|0\ra_{21}=|\D\ra$ determine the chain vector.

It is easy to see that the conjugate vertex operator satisfies \eqref{vertcommrel} $V_{\D}^{+}(1/q)q^{-2\D}=V_{\D}(q)$. Therefore we define the conjugate chain vector as ${}_{34}\la W(1)|=\la\D_4|V^{\D_3}_{\D_4, \D}(1)$ and the conformal block  $\kr{F}_c$ can be written as
\[
\kr{F}_c(\overrightarrow{\D},\D|q)=_{34}\la W(1)|W(q^2)\ra_{21}=q^{\D}\sum\limits_{N=0}^{\infty}q^{N}{}_{34}\la N|N\ra_{21}={}_{34}\la W(q)|W(q)\ra_{21}, 
\]

We also use so-called irregular limit (other names Whittaker limit and Gaiotto limit) of the conformal blocks and chain vectors~\cite{Gaiotto}. Namely one can rescale the chain vector $|W(q)\ra_{21}$
\begin{equation}
|N\ra_{21}=(-\D_1)^N|N\ra_{21}',
\label{Virirrlim}
\end{equation}
and tend $\D_{1}$ to $\infty$. Then the equations \eqref{eq:Vir_chain} simplify to (we omit $'$ symbol in $|N\ra_{21}'$ below)
\[
L_1|N\ra=|N-1\ra, \quad N>0, \quad 
L_k|N\ra=0, k>1 ,
\]
or equivalently, in terms of the Whittaker vector (the limit of chain vector) $|W(q)\ra$
\[
L_1|W(q)\ra=q^{1/2}|W(q)\ra,\quad L_k|W(q)\ra=0, k>1.
\]
Note that it is enough to impose $L_1$ and $L_2$ relations since the action of the other $L_k$, $k>2$ follows from the Virasoro commutation relations. The irregular (or Whittaker, or Gaiotto) limit of conformal block is defined by
\begin{equation}\label{eq:Whitcom}
\kr{F}_c(\D|q)=\la W(q)|W(q)\ra=q^{\D}\sum\limits_{N=0}^{\infty}q^{N}\la N|N\ra.
\end{equation}

Now we consider $\mathcal{N}=1$ superconformal field theory (SCFT) with the $\NSR$ symmetry \cite{Belavin:2007zz,Belavin:2007eq,Hadasz:2006sb}. In this case we have a multiplet of two vertex operators: even $\Phi_{\D^{\NS}}(q)$ and  odd $\Psi_{\D^{\NS}}(q)=[G_{-1/2},\Phi_{\D^{\NS}}(q)]$. These operators act from a Verma module to a Verma module  \[\Phi^{\D^{\NS}}_{\D^{\NS}_2,\D^{\NS}_1}, \Psi^{\D^{\NS}}_{\D^{\NS}_2,\D^{\NS}_1}\colon \pi_{\NSR}^{\D_1^{\NS}}\mapsto \pi_{\NSR}^{\D_2^{\NS}}\]
and are determined by the commutation relations
\begin{equation}
\begin{aligned}
&[L_{k},\Phi_{\D^{\NS}}(q)] =
(q^{k+1}\partial_{q}+(k+1)\D^{\NS} q^{k})\Phi_{\D^{\NS}}(q),\\
& [L_{k},\Psi_{\D^{\NS}}(q)] =
(q^{k+1}\partial_{q}+(k+1)(\D^{\NS}+1/2)q^{k})\Psi_{\D^{\NS}}(q),\\
&[G_{r},\Phi_{\D^{\NS}}(q)] =
q^{r+1/2}\Psi_{\D^{\NS}}(q),\\
&\{G_{r},\Psi_{\D^{\NS}}(q)\}=
(q^{r+1/2}\partial_{q}+(2r+1)\D^{\NS}q^{r-1/2})\Phi_{\D^{\NS}}(q),\\
\end{aligned}\label{prim}
\end{equation}
We use the normalization $\la P_2|\Phi^{\D^{\NS}}_{\D^{\NS}_2,\D^{\NS}_1}(1)|P_1 \ra=\la P_2|\Psi^{\D^{\NS}}_{\D^{\NS}_2,\D^{\NS}_1}(1)|P_1 \ra=1.$ We express the conformal weight of the vertex operator in terms of the parameter $\alpha$
\begin{equation}
\Delta^{\NS}=\Delta^{\NS}(\alpha-Q/2,b)=\frac12\alpha(Q-\alpha)
\end{equation}
and abbreviate notation for vertex operators to $\Phi_\alpha$, $\Psi_\alpha$.

As in the previous case one can define the $n$-point $\mathbb{CP}^1$ conformal blocks by the formulae
\begin{equation}
\kr{F}_{c^{\NS}}(\{\D^{\NS}_{i-j}\}|\{\D^{\NS}_k\},\{z_k\})=\la\D^{\NS}_n||\Phi^{\D^{\NS}_{n-1}}_{\D^{\NS}_n, \D^{\NS}_{(n-1)-(n-2)}}(z_{n-1}) \ldots \Phi^{\D^{\NS}_{2}}_{\D^{\NS}_{3-2},\D^{\NS}_{1}}(z_{2})|\D^{\NS}_1\ra \label{eq:NSRconfblock}
\end{equation}
for the $\langle \Phi \Phi \ldots \Phi \rangle$ conformal blocks and similarly for conformal blocks containing $\Psi$ fields. The 4-point conformal block can be defined also by use of the chain vectors 
\begin{equation}
\begin{aligned}
|W_{\NS}(q^2)\ra_{21}&=q^{\D_1^{\NS}+\D_2^{\NS}}\left(\Phi^{\D_2^{\NS}}_{\D^{\NS},\D_1^{\NS}}(q)|\D_1^{\NS}\ra\right)=q^{\D^{\NS}}\left(\sum\limits_{2N=0}^{\infty}q^{N}|N\ra^{\NS}_{21}\right)\\
|\td{W_{\NS}(q^2)}\ra_{21}&=q^{\D_1^{\NS}+\D_2^{\NS}+\frac12}\left(\Psi^{\D_2^{\NS}}_{\D^{\NS},\D_1^{\NS}}(q)|\D_1^{\NS}\ra\right)= q^{\D^{\NS}}\left(\sum\limits_{2N=0}^{\infty}q^{N}\td{|N\ra}^{\NS}_{21}\right), \label{chain}
\end{aligned}
\end{equation}
where the index $N$ runs over integer and half-integer values. These chain vectors are determined by the recursion relations (which follow from \eqref{prim})
\begin{equation}
\begin{aligned}
&L_k|N\ra_{21}^{\NS}=(k\D_2^{\NS}-\D_1^{\NS}+\D^{\NS}+N-k)|N-k\ra_{21}^{\NS}\\
&L_k\td{|N\ra}_{21}^{\NS}=(k\D_2^{\NS}-\D_1^{\NS}+\D^{\NS}+N-k/2)|\td{N-k}\ra^{\NS}_{21}\\
&G_r\td{|N\ra}^{\NS}_{21}=(2r\D_2^{\NS}-\D_1^{\NS}+\D^{\NS}+N-r)|N-r\ra^{\NS}_{21}\\
&G_r|N\ra^{\NS}_{21}=|\td{N-r}\ra^{\NS}_{21},\quad 2N\in\mathbb{Z},\quad N>k>0 \label{ChAx}
\end{aligned}
\end{equation}
combined with the normalization $|0\ra^{\NS}_{21}=|\tilde{0}\ra^{\NS}_{21}=|\Delta^{\NS}\rangle$. In terms of the chain vectors we have
\begin{equation}
\begin{aligned}
L_k |W_{\NS}(q)\ra_{21}&=\Bigl(k\D_2^{\NS}-\D_1^{\NS}+L_0\Bigr)q^{\frac{k}2}|W_{\NS}(q)\ra_{21},& L_k |\td{W_{\NS}}(q)\ra_{21}&=\left(k\D_2^{\NS}-\D_1^{\NS}+L_0+\frac{k}{2}\right)q^{\frac{k}2}|\td{W_{\NS}}(q)\ra_{21},
 \\
G_r|W_{\NS}(q)\ra_{21}&=q^{\frac{r}2}|\td{W_{\NS}}(q)\ra_{21}   & G_r|\td{W_{\NS}}(q)\ra_{21}&=\Bigl(2r\D_2^{\NS}-\D_1^{\NS}+L_0\Bigr)q^{\frac{r}2}|W_{\NS}(q)\ra_{21} \label{eq:W:NS} 
\end{aligned}
\end{equation}

It follows from \eqref{prim} that operators 
$\Phi^{+}_{\D^{\NS}}(1/q)q^{-2\D^{\NS}}$ and $-q^{-2\D^{\NS}-1}\Psi^{+}(1/q)$ have the same commutation relation as $\Phi_{\D^{\NS}}(q)$ and $\Psi_{\D^{\NS}}(q)$. But their normalization differs due to the minus sign at $\Psi^+$. Therefore we have the equality of matrix elements
\begin{equation}
\la w_2|\Phi_{\alpha}(q)|w_1\ra=q^{-2\Delta^{\NS}}(-1)^{2(n_1+n_2)}\la w_1|\Phi_{\alpha}(1/q)|w_2\ra
\label{eq:conj.vertex}
\end{equation}
where $w_1 \in \pi_{\NSR}^{\Delta_1^{\NS}}$, $w_2 \in \pi_{\NSR}^{\Delta_1^{\NS}}$, $L_0w_1=(\Delta_1^{\NS}+n_1)w_1$, $L_0w_2=(\Delta_2^{\NS}+n_2)w_2$.  

Therefore conjugate chain differs from the by signs and we can define two $\NSR$ 4-point conformal blocks (namely the $\la\Phi\Phi\Phi\Phi\ra$ and the $\la\Phi\Psi\Psi\Phi\ra$ conformal blocks)
\begin{equation}\label{eq:FNS:4p}
\kr{F}_{c_{\NS}}(\overrightarrow{\D^{\NS}},\D^{\NS}|q)=_{34}\la W_{\NS}(q)|W_{\NS}(q)\ra_{21}, \quad \td{\kr{F}}_{c_{\NS}}(\overrightarrow{\D^{\NS}},\D^{\NS}|q)=_{34}\td{\la W_{\NS}}(q)|\td{W_{\NS}}(q)\ra_{21}
\end{equation}
Similarly to the Virasoro case the functions $\kr{F}_{c_{\NS}}$ and $\td{\kr{F}}_{c_{\NS}}$ differ from the conformal blocks defined in \eqref{eq:NSRconfblock} by factors $q^{\D_1^{\NS}+\D_2^{\NS}}$ and $q^{\D_1^{\NS}+\D_2^{\NS}+\frac12}$ correspondingly. 

\begin{Remark} The vectors $|N\rangle^{\NS}$ with integer and half-integer $N$ do not interact in relation \eqref{ChAx}. In particular if we change the normalization of $|0\rangle^{\NS}_{21}$  then we change the coefficients at the integer powers of $q$ (more precisely $q^{\Delta^{NS}+N}$, $N \in \mathbb{Z}$) in $\kr{F}_{c_{\NS}}$ but the coefficients at the half-integer powers of $q$ remain unchanged. Conversely if we rescale $\td{|N\ra}^{\NS}_{21}$ then we rescale the coefficients at the half integer powers of $q$ in $\kr{F}_{c_{\NS}}$ but the coefficients at the integer powers of $q$ remain unchanged.
\end{Remark}

Irregular (or Whittaker) limit of the $\NSR$ conformal blocks and chain vectors is defined as follows. One can rescale the $\NSR$ chains
\begin{equation}
\begin{aligned}
|N\ra_{21}^{\NS}&=(-\D_1^{\NS})^N|N\ra_{21}^{'\NS},\;\; N \in \mathbb{Z},&\qquad |N\ra_{21}^{\NS}&=(-\D_1^{\NS})^{N-1/2}|N\ra_{21}^{'\NS},\;\; N \in \mathbb{Z}+\frac12,\quad\\
\td{|{N}\ra}_{21}^{\NS}&=(-\D_1^{\NS})^{N}\td{|N\ra}^{'\NS}_{21},\;\; N \in \mathbb{Z},&\qquad \td{|N\ra}_{21}^{\NS}&=(-\D_1^{\NS})^{N+1/2}\td{|N\ra}^{'\NS}_{21},\;\; N \in \mathbb{Z}+\frac12, \label{irrlim}
\end{aligned} 
\end{equation}
and tend $\D^{\NS}_{1}$ to $\infty$. In the limit both tilded and non-tilded chain vectors coincide and obey the relation
\[
G_{1/2}|N\ra^{\NS}=|N-1/2\ra^{\NS}, \quad 
G_{3/2}|N\ra^{\NS}=0. 
\]
The formulae for action of $L_k$, $k>0$ and $G_r$, $r>3/2$ follows from the $\NSR$ commutation relations \eqref{eq:FNSRrelations}. In terms of the Whittaker vector $|W_{\NS}(q)\ra=q^{\D^{\NS}/2}\sum_{N=0}^{\infty}q^{\frac{N}2}|N\ra^{\NS}$ the action of $G_{1/2}, G_{3/2}$ can be written as 
\[G_{1/2}|W_{\NS}(q)\ra=q^{1/4}|W_{\NS}(q)\ra,\quad  G_{3/2}|W_{\NS}(q)\ra=0.\] The irregular limit of conformal block is defined by the formula
\[
\kr{F}_{c_{\NS}}(\D^{\NS}|q)=\la W_{\NS}(q)|W_{\NS}(q)\rangle
\]

\subsection{$\Vir \oplus \Vir$ decomposition of chain vectors and vertex operators}
\label{ssec:chdec}

The $\sfF \oplus \NSR$ Whittaker vector is defined as a tensor product of the $\sfF$ vacuum $1 \in \pi_\sfF$ and the $\NSR$ Whittaker vector $|W_{\NS}(q)\ra\in \pi_\NSR$.  The decomposition of the $\sfF\oplus \NSR$ module  \eqref{FxNSR} provides a decomposition of the corresponding Whittaker vector
\[
|1\otimes W_{\NS}(q)\rangle=\sum_{2n\in\mathbb{Z}}|v_n(q)\ra,
\]
where $|v_n(q)\ra\in\pi^n_{\Vir\oplus \Vir}$. It turns out that $|v(q)\ra_n$ is the Whittaker vector for the algebra $\Vir\oplus\Vir$
\begin{prop} The Whittaker vector for the $\sfF\oplus \NSR$ algebra equals to the sum of $\Vir\oplus\Vir$ Whittaker vectors
\begin{align}
|1\otimes W_{\NS}(q)\ra=\sum\limits_{2n\in\mathbb{Z}}
\left(l_{n}(P,b)\,\Bigl(|W_{n}^{\scriptscriptstyle{(1)}}(\beta^{\scriptscriptstyle{(1)}}q)\ra\otimes|W_{n}^{\scriptscriptstyle{(2)}}(\beta^{\scriptscriptstyle{(2)}}q)\ra\Bigr)\right) \label{Whitdecomp}.
\end{align}
Here $|W_n^{\scriptscriptstyle{(1)}}\ra\otimes|W_n^{\scriptscriptstyle{(2)}}\ra$ denotes the tensor product of Whittaker vectors in $\pi^n_{\Vir\oplus\Vir}$, and the coefficients $l_n(P,b)$ do not depend on $q$. The parameters $\beta^{(\eta)}, \eta=1,2$ are defined by the formulae
\begin{align}\label{eq:beta12}
\beta^{\scriptscriptstyle{(1)}}=\left(\frac{b^{-1}}{b^{-1}-b}\right)^2,
\quad \beta^{\scriptscriptstyle{(2)}}=\left(\frac{b}{b-b^{-1}}\right)^2.
\end{align}
\label{prop:Whitdecomp}
\end{prop}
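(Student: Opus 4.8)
The plan is to prove the stronger statement that each graded summand in the decomposition $|1\otimes W_{\NS}(q)\ra=\sum_{2n\in\mathbb{Z}}|v_n(q)\ra$ supplied by Proposition~\ref{pr:FNSR}, with $|v_n(q)\ra\in\pi^n_{\Vir\oplus\Vir}$, is itself a joint $\Vir\oplus\Vir$ Whittaker vector, and then to pin down its normalization. First I would record the annihilation properties of the $\NSR$ Whittaker vector: from $G_{1/2}|W_{\NS}(q)\ra=q^{1/4}|W_{\NS}(q)\ra$ and $G_{3/2}|W_{\NS}(q)\ra=0$, together with $[L_n,G_r]=(\tfrac n2-r)G_{n+r}$, one gets $G_r|W_{\NS}(q)\ra=0$ for every $r\geq\tfrac32$ by induction, and hence, using $\{G_r,G_s\}=2L_{r+s}+\dots$, that $L_1|W_{\NS}(q)\ra=q^{1/2}|W_{\NS}(q)\ra$ and $L_k|W_{\NS}(q)\ra=0$ for $k\geq2$.

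Next I would act on $|1\otimes W_{\NS}(q)\ra$ by the generators $L^{(\eta)}_k$ of \eqref{Vir12} with $k>0$. The essential simplification is that the fermionic factor is the vacuum $|1\ra$, annihilated by all $f_r$ with $r>0$. Consequently the normally ordered bilinear $\sum_r r:\!f_{k-r}f_r\!:$ kills $|1\ra$ for $k>0$ (two fermionic creation operators would require $k-r<0$ and $r<0$ simultaneously), while in the mixed term $\sum_r f_{k-r}G_r$ the only indices with $f_{k-r}|1\ra\neq0$ are $r\geq k+\tfrac12\geq\tfrac32$, on which $G_r$ annihilates $|W_{\NS}(q)\ra$. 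Thus only the first term of \eqref{Vir12} survives: for $k\geq2$ one gets $L^{(\eta)}_k|1\otimes W_{\NS}(q)\ra=0$, while for $k=1$ the computation yields
\[
L_1^{\scriptscriptstyle{(1)}}|1\otimes W_{\NS}(q)\ra=\frac{b^{-1}}{b^{-1}-b}\,q^{1/2}\,|1\otimes W_{\NS}(q)\ra,
\]
with the analogous statement for $\eta=2$. Comparing with \eqref{eq:beta12} identifies the $L^{(\eta)}_1$-eigenvalue with $(\beta^{(\eta)}q)^{1/2}$. Since $\Vir\oplus\Vir$ preserves each summand, the projector $\Pi_n$ commutes with every $L^{(\eta)}_k$, so each $|v_n(q)\ra=\Pi_n|1\otimes W_{\NS}(q)\ra$ obeys the same Whittaker relations inside $\pi^n_{\Vir\oplus\Vir}$. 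For generic $P$ the weights $\Delta^{(\eta)}_n$ are generic, the Shapovalov form is nondegenerate, and the Whittaker vector of a Verma module is unique up to scale; hence $|v_n(q)\ra=l_n(q)\,|W_n^{\scriptscriptstyle{(1)}}(\beta^{\scriptscriptstyle{(1)}}q)\ra\otimes|W_n^{\scriptscriptstyle{(2)}}(\beta^{\scriptscriptstyle{(2)}}q)\ra$ for some scalar $l_n(q)$.

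It then remains to show $l_n(q)$ is independent of $q$, which I regard as the main point. Summing the two lines of \eqref{Vir12} at $n=0$ gives $L_0^{\scriptscriptstyle{(1)}}+L_0^{\scriptscriptstyle{(2)}}=L_0+L_0^f$, where $L_0^f=\sum_{r>0}rf_{-r}f_r$ is the fermion grading. Since the fermionic factor of $|1\otimes W_{\NS}(q)\ra$ is the vacuum, $L_0^f$ acts by zero, and on this vector $L_0^{\scriptscriptstyle{(1)}}+L_0^{\scriptscriptstyle{(2)}}$ coincides with the $\NSR$ grading $L_0$. Therefore $|1\otimes W_{\NS}(q)\ra=q^{(L_0^{\scriptscriptstyle{(1)}}+L_0^{\scriptscriptstyle{(2)}})/2}|1\otimes W_{\NS}(1)\ra$, and applying $\Pi_n$ (which commutes with $L_0^{\scriptscriptstyle{(1)}}+L_0^{\scriptscriptstyle{(2)}}$) gives $|v_n(q)\ra=q^{(L_0^{\scriptscriptstyle{(1)}}+L_0^{\scriptscriptstyle{(2)}})/2}|v_n(1)\ra$. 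The product of canonical Whittaker vectors scales in exactly the same way, namely $|W_n^{\scriptscriptstyle{(1)}}(\beta^{\scriptscriptstyle{(1)}}q)\ra\otimes|W_n^{\scriptscriptstyle{(2)}}(\beta^{\scriptscriptstyle{(2)}}q)\ra=q^{(L_0^{\scriptscriptstyle{(1)}}+L_0^{\scriptscriptstyle{(2)}})/2}\bigl(|W_n^{\scriptscriptstyle{(1)}}(\beta^{\scriptscriptstyle{(1)}})\ra\otimes|W_n^{\scriptscriptstyle{(2)}}(\beta^{\scriptscriptstyle{(2)}})\ra\bigr)$, so comparing the two sides at $q=1$ forces $l_n(q)=l_n(1)=:l_n(P,b)$. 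The numerical value of $l_n(P,b)$ is then computed separately in Subsection~\ref{ssec:melem}. The forseeable obstacles are purely bookkeeping: getting the mixed $fG$ cancellations right, and, above all, the clean grading identity $L_0^{\scriptscriptstyle{(1)}}+L_0^{\scriptscriptstyle{(2)}}=L_0$ on the vacuum fermionic factor, which is what makes the $q$-dependence drop out of the normalization.
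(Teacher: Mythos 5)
Your proof is correct and takes essentially the same route as the paper: act with $L_1^{(\eta)}$ and $L_2^{(\eta)}$ from \eqref{Vir12} on $|1\otimes W_{\NS}(q)\ra$, use the fermionic vacuum annihilation together with $G_r|W_{\NS}(q)\ra=0$ for $r\geq \tfrac32$ to kill all but the pure $L_n$ term, and conclude that each projection $|v_n(q)\ra$ satisfies the $\Vir\oplus\Vir$ Whittaker relations with eigenvalues $(\beta^{(\eta)}q)^{1/2}$, hence is proportional to the product of Whittaker vectors. The only addition is that you spell out the $q$-independence of $l_n$ via the grading identity $L_0^{\scriptscriptstyle{(1)}}+L_0^{\scriptscriptstyle{(2)}}=L_0+L_0^f$ on the fermionic vacuum, a point the paper's proof leaves implicit; this is a welcome clarification rather than a different method.
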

The values of $l_n(P,b)$ will be computed in the next subsection.

\begin{proof}
Let us act $L^{(\eta)}_{1}, L^{(\eta)}_{2},  \eta=1,2$ on $|1\otimes W_{\NS}(q)\ra$. Using expressions \eqref{Vir12} and linear independence of vectors from different $\pi^n_{\Vir\oplus\Vir}$ we have
\begin{equation*}
\begin{aligned}
L^{\scriptscriptstyle{(1)}}_{1}|v_n(q)\ra&=(\beta^{\scriptscriptstyle{(1)}} q)^{1/2}|v_n(q)\ra,& \quad L^{\scriptscriptstyle{(2)}}_{1}|v_n(q)\ra&=(\beta^{\scriptscriptstyle{(2)}} q)^{1/2}|v_n(q)\ra, \\
L^{\scriptscriptstyle{(1)}}_{2}|v_n(q)\ra&=0,& \quad
L^{\scriptscriptstyle{(2)}}_{2}|v_n(q)\ra&=0.
\end{aligned}
\end{equation*}
Therefore the vector $|v_n(q)\ra$ is proportional to the tensor products of Whittaker vectors. Hence we proved~\eqref{Whitdecomp}.
\end{proof}

There is an analogous decomposition of the chain vector.
\begin{prop} The $\sfF\oplus \NSR$ chain vector equals to the sum of $\Vir\oplus\Vir$ chain vectors 
\begin{align}
|1\otimes W_{\NS}\ra_{21}=\sum\limits_{2n\in\mathbb{Z}}\left(l_n^{21}(P,b,|\D^{\NS}_2,\D^{\NS}_1) \Bigl(|W^{\scriptscriptstyle{(1)}}_{n}\ra_{21}(q)\otimes|W^{\scriptscriptstyle{(2)}}_{n}\ra_{21}(q)\Bigr)\right). \label{Chdecomp}
\end{align}
The external weights of the $\Vir$ chain vectors $|W^{\scriptscriptstyle{(1)}}_{n}\rangle_{21}$, $|W^{\scriptscriptstyle{(2)}}_{n}\rangle_{21}$ are related to $\Delta_1^{\NS}, \Delta_2^{\NS}$ by \eqref{eq:DeltaVirNSR}
\[
\D_{i}^{\scriptscriptstyle{(1)}}=\frac{b^{-1}}{b^{-1}-b}\D_{i}^{\NS}, \quad \D_{i}^{\scriptscriptstyle{(2)}}=\frac{b}{b-b^{-1}}\D_{i}^{\NS}. 
\]
\label{prop:Chdecomp}
\end{prop}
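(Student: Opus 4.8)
The plan is to mirror the proof of Proposition~\ref{prop:Whitdecomp}, using the chain-vector relations \eqref{eq:W:NS} in place of the Whittaker relations. First I would use the decomposition \eqref{FxNSR} to write
\[
|1\otimes W_{\NS}\ra_{21}=\sum_{2n\in\Z}|w_n(q)\ra,\qquad |w_n(q)\ra\in\pi^n_{\Vir\oplus\Vir},
\]
and observe that, since the operators $L_k^{(\eta)}$ generate the subalgebra $\Vir\oplus\Vir$, each $L_k^{(\eta)}$ maps the summand $\pi^n_{\Vir\oplus\Vir}$ into itself. It therefore suffices to determine the relations satisfied by the components $|w_n(q)\ra$ and to recognise them as the defining relations of a tensor product of $\Vir$ chain vectors.

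The central step is to act with $L_1^{(\eta)}$ and $L_2^{(\eta)}$ ($\eta=1,2$) on $|1\otimes W_{\NS}\ra_{21}$ by means of the embedding \eqref{Vir12}, simplify with \eqref{eq:W:NS}, and establish the operator identity
\[
L_k^{(\eta)}\,|1\otimes W_{\NS}\ra_{21}=\Bigl(k\D^{(\eta)}_2-\D^{(\eta)}_1+L_0^{(\eta)}\Bigr)q^{k/2}\,|1\otimes W_{\NS}\ra_{21},\qquad k=1,2,
\]
where $\D^{(\eta)}_1,\D^{(\eta)}_2$ are obtained from $\D^{\NS}_1,\D^{\NS}_2$ by \eqref{eq:DeltaVirNSR}. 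The term $\tfrac{b^{-1}}{b^{-1}-b}L_k$ of \eqref{Vir12} produces, through the first relation of \eqref{eq:W:NS}, exactly the external-weight combination $k\D^{(\eta)}_2-\D^{(\eta)}_1$ after the rescaling $\D^{\NS}_i\mapsto\D^{(\eta)}_i$, while the intermediate weight is carried automatically by $L_0^{(\eta)}$; the remaining pieces, coming from the fermion bilinear and from the cross term $\sum_r f_{k-r}G_r$, have to be shown to reconstruct precisely the corresponding pieces of $L_0^{(\eta)}$ (given by \eqref{Vir12} at index $0$) on the right-hand side. Notice that, in contrast to \eqref{Whitdecomp}, no rescaling $q\mapsto\beta^{(\eta)}q$ appears here: the factors $\beta^{(\eta)}$ of \eqref{Whitdecomp} arise only in the irregular limit, from the mismatch between the Neveu--Schwarz rescaling \eqref{irrlim} and the Virasoro rescaling \eqref{Virirrlim}.

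I expect the main obstacle to be this last matching of the cross terms. In the proof of Proposition~\ref{prop:Whitdecomp} the cross terms $\sum_r f_{k-r}G_r$ simply dropped out, because for the Whittaker vector $G_r|W\ra=0$ once $r\ge\tfrac32$. Here, by the second relation of \eqref{eq:W:NS}, one has $G_r|W_{\NS}(q)\ra_{21}=q^{r/2}|\td{W_{\NS}}(q)\ra_{21}\neq0$ for every $r$, so these terms create nonzero fermion-excited states built on the tilded chain vector. The delicate point is to verify that the fermion-excited states produced by $L_k^{(\eta)}$ agree with those produced by $L_0^{(\eta)}$; I would carry this out by a direct bookkeeping of the finitely many modes $f_{k-r}$ ($k=1,2$) acting on the fermion vacuum $|1\ra$, together with the $L_0$- and $G_r$-actions on $|\td{W_{\NS}}\ra_{21}$ recorded in \eqref{eq:W:NS}.

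Granting the identity above, I would project it onto $\pi^n_{\Vir\oplus\Vir}$, which is legitimate since both sides preserve the summand, to obtain for each $n$
\[
L_k^{(\eta)}\,|w_n(q)\ra=\Bigl(k\D^{(\eta)}_2-\D^{(\eta)}_1+L_0^{(\eta)}\Bigr)q^{k/2}\,|w_n(q)\ra,\qquad k=1,2,\ \eta=1,2.
\]
These are precisely the relations of \eqref{eq:Vir_chain} defining the $\Vir^{(1)}\oplus\Vir^{(2)}$ chain vector $|W^{(1)}_n(q)\ra_{21}\otimes|W^{(2)}_n(q)\ra_{21}$ with external weights $\D^{(\eta)}_1,\D^{(\eta)}_2$, the intermediate weight being encoded in the action of $L_0^{(\eta)}$ on $\pi^n_{\Vir\oplus\Vir}$. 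As in the Whittaker case, the $k=1,2$ relations already fix the action of all $L_k^{(\eta)}$ through the Virasoro commutators, so by \eqref{eq:Vir_chain} the solution is unique up to an overall scalar; comparing the lowest-level terms fixes that scalar to the $q$-independent constant $l_n^{21}(P,b,\D^{\NS}_2,\D^{\NS}_1)$, which gives \eqref{Chdecomp}.
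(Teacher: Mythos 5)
Your proposal is correct and follows essentially the same route as the paper: the paper likewise establishes the identity $L_k^{(\eta)}|1\otimes W_{\NS}\ra_{21}=q^{k/2}\bigl(k\D_2^{(\eta)}-\D_1^{(\eta)}+L_0^{(\eta)}\bigr)|1\otimes W_{\NS}\ra_{21}$ by acting with \eqref{Vir12} and using \eqref{eq:W:NS}, the cross terms being absorbed via $G_{r+k}|W_{\NS}\ra_{21}=q^{k/2}G_r|W_{\NS}\ra_{21}$ so that they reconstruct exactly the cross-term piece of $L_0^{(\eta)}$, and then compares with the equivalent form of \eqref{eq:Vir_chain} for the $\Vir$ chain vectors. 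Your identification of the cross terms as the delicate point, and your remark on the absence of the $\beta^{(\eta)}$ rescalings, both match the paper's treatment.
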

\begin{proof}
The proof is similar to the previous one but the computations are more cumbersome. We act $L_k^{\scriptscriptstyle{(1)}}$ on $|1 \otimes W_{\NS}\ra_{21}$ using \eqref{Vir12} and \eqref{eq:W:NS}
\begin{multline*}
L_k^{\scriptscriptstyle{(1)}}|1\otimes W_{\NS}\ra_{21}=\frac{b^{-1}}{b^{-1}-b}q^{k/2}|1\ra \otimes(k\D_2^{\NS}-\D_1^{\NS}+L_0)|W_{\NS}\ra_{21}+\frac{1}{b^{-1}-b} \sum_{r\in\mathbb{Z}_{\geq 0}+\frac12}f_{-r}G_{r+k}|1\otimes W_{\NS}\ra_{21}=
\\ 
=q^{k/2}\left(\frac{b^{-1}}{b^{-1}-b}(k\D_2^{\NS}-\D_1^{\NS}+L_0) |1\otimes W_{\NS}\ra_{21}+\frac{1}{b^{-1}-b} \sum_{r\in\mathbb{Z}_{\geq 0}+\frac12}f_{-r}G_{r}|1\otimes W_{\NS}\ra_{21}\right)=
\\
 = q^{k/2}\left(k\D_2^{\scriptscriptstyle{(1)}}-\D_1^{\scriptscriptstyle{(1)}}+L_0^{(1)}\right)|1\otimes W_{\NS}\ra_{21}, 
\end{multline*}
On the other hand the relations \eqref{eq:Vir_chain} are equivalent to 
\[
L_k^{\scriptscriptstyle{(1)}}|W^{\scriptscriptstyle{(1)}}_{n}\ra_{21}=q^{k/2}\left(k\D_2^{\scriptscriptstyle{(1)}}-\D_1^{\scriptscriptstyle{(1)}}+L_0^{\scriptscriptstyle{(1)}}\right)|W_{n}^{\scriptscriptstyle{(1)}}\ra_{21}.
\]
The calculation for $L_k^{\scriptscriptstyle{(2)}}$ is similar.
\end{proof}
\begin{Remark}
Note that the additional factors $\beta^{(\eta)}, \eta=1,2$ in \eqref{Whitdecomp} do not appear in \eqref{Chdecomp}. These factors are an artefact of the irregular limit.
\end{Remark}

%\begin{Remark}
%The factors $l_n^{21}(P,b,|\D^{\NS}_1,\D^{\NS}_2)$ depends from normalization of non-tilded and tilded chain:
%change of non-tilded chain normalization leads to rescaling $l_n^{21}$ with integer indices and change of tilded chain normalization -- to rescaling $l_n^{21}$ with half-integer indices.
%\end{Remark}

Consider now the vertex operator $\overline{\Phi}_\alpha= 1 \otimes \Phi_\alpha$ acting from one $\sfF\oplus \NSR$ Verma module $\pi^{\D'^{\NS}}_{\sfF\oplus \NSR}$ to another one $\pi^{\D^{\NS}}_{\sfF\oplus \NSR}$. Due to decomposition \eqref{FxNSR} one can restrict $\overline{\Phi}_\alpha$ to a map between submodules $\pi^{n'}_{\Vir\oplus\Vir}$ and $\pi^{n}_{\Vir\oplus\Vir}$ for each $n', n$, such that $2n',2n \in\mathbb{Z}$

\begin{thm} The restriction $\sfF\oplus \NSR$ vertex operator in terms of the $\Vir\oplus\Vir$ subalgebra has the form
\begin{equation} 
\overline{\Phi}_{\alpha}(q)|_{\pi^{n'}_{\Vir\oplus\Vir}\rightarrow\pi^{n}_{\Vir\oplus\Vir}}=l_{n n'}(P,\alpha,P') \left( V_{\alpha^{\scriptscriptstyle{(1)}}}(q) \otimes V_{\alpha^{\scriptscriptstyle{(2)}}}(q) \right)\label{spd},
\end{equation}
where 
\[
   \alpha^{\scriptscriptstyle{(1)}}=\frac{\alpha}{\sqrt{2-2b^{2}}},\quad
   \alpha^{\scriptscriptstyle{(2)}}=\frac{\alpha}{\sqrt{2-2b^{-2}}}; 
\]
and $|_{\pi^{n'}_{\Vir\oplus\Vir}\rightarrow\pi^{n}_{\Vir\oplus\Vir}}$ means the restriction to the map between these submodules.
\label{thm:decomp}
\end{thm}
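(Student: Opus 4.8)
The plan is to prove the factorization of the $\sfF\oplus\NSR$ vertex operator $\overline{\Phi}_\alpha$ by the same strategy already used for the chain vectors in Proposition \ref{prop:Chdecomp}, namely by checking that the restriction satisfies the defining commutation relations of a tensor product of two Virasoro vertex operators. Concretely, I would first invoke the decomposition \eqref{FxNSR} to make sense of the restriction $\overline{\Phi}_\alpha(q)|_{\pi^{n'}\to\pi^{n}}$ as a well-defined map between the $\Vir\oplus\Vir$ Verma submodules. Since $\overline{\Phi}_\alpha=1\otimes\Phi_\alpha$ acts as the identity on the fermionic factor $\pi_\sfF$, its commutators with $L_k^{(\eta)}$ reduce, via the explicit formulae \eqref{Vir12}, to combinations of the $\NSR$ commutation relations \eqref{prim} for $\Phi_\alpha$ and $\Psi_\alpha$. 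The key point to establish is that these combine into exactly the Virasoro vertex-operator relation \eqref{vertcommrel} for each factor, with the shifted weights $\D(\alpha^{(\eta)}-Q^{(\eta)}/2,b^{(\eta)})$.

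The main computation is therefore to evaluate $[L_k^{(1)},\overline{\Phi}_\alpha(q)]$ and $[L_k^{(2)},\overline{\Phi}_\alpha(q)]$ using \eqref{Vir12}. Each $L_k^{(\eta)}$ contains three pieces: a multiple of $L_k$, a fermion bilinear $\sum_r r\,{:}f_{k-r}f_r{:}$, and a mixed term $\sum_r f_{k-r}G_r$. The first piece produces the Virasoro part of \eqref{prim} for $\Phi_\alpha$; the mixed term, acting through $[G_r,\Phi_\alpha]=q^{r+1/2}\Psi_\alpha$, converts $\Phi_\alpha$ into $\Psi_\alpha$ dressed by a fermion $f_{k-r}$, while the pure-fermion bilinear commutes with $\overline{\Phi}_\alpha$ up to the action on the fermionic Fock space. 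I expect that after resumming over $r$ the $\Psi_\alpha$-contributions and the fermionic-weight contributions organize precisely into the Virasoro transformation law with conformal weight $\alpha^{(\eta)}(Q^{(\eta)}-\alpha^{(\eta)})$, where the stated values of $\alpha^{(\eta)}$ emerge from matching the coefficient of $q^k$ in the $(k+1)\D^{(\eta)}q^k$ term. One should verify the weight identity $\D(\alpha^{(1)}{-}Q^{(1)}/2,b^{(1)})+\D(\alpha^{(2)}{-}Q^{(2)}/2,b^{(2)})=\D^{\NS}(\alpha-Q/2,b)$ as a consistency check, analogous to the relation $\D_n^{(1)}+\D_n^{(2)}=\D^{\NS}+2n^2$ used earlier.

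Having shown that the restriction obeys \eqref{vertcommrel} for both Virasoro subalgebras, I would conclude by a standard uniqueness argument: a map between Verma modules intertwining the two commuting Virasoro actions in the prescribed way is, up to an overall scalar, the tensor product $V_{\alpha^{(1)}}(q)\otimes V_{\alpha^{(2)}}(q)$. This follows because the vertex operator $V_\alpha(q)$ is determined by \eqref{vertcommrel} together with its normalization, so the tensor product is the unique (up to scale) operator on $\pi^{n'}_{\Vir\oplus\Vir}\to\pi^n_{\Vir\oplus\Vir}$ with the required intertwining property. The undetermined scalar is precisely the coefficient $l_{nn'}(P,\alpha,P')$ appearing in \eqref{spd}; its value, like the $l_n(P,b)$ of Propositions \ref{prop:Whitdecomp} and \ref{prop:Chdecomp}, is not fixed by the commutation relations and is computed separately in Subsection \ref{ssec:melem}.

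I expect the main obstacle to be the careful bookkeeping of the mixed fermion–$G_r$ terms: one must track the fermion $f_{k-r}$ produced by the commutator and confirm that, combined with the pure fermionic bilinear and the $\Psi_\alpha\leftrightarrow\Phi_\alpha$ exchange governed by $\{G_r,\Psi_\alpha\}$, everything collapses into a clean $c$-number transformation law with no residual fermionic or $\Psi$-dependence. The fact that $\overline{\Phi}_\alpha$ is even (built from $1\otimes\Phi_\alpha$ rather than involving $\Psi_\alpha$) is what makes this cancellation possible, and verifying it is the technical heart of the proof.
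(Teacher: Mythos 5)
Your overall strategy --- characterize the restriction by its intertwining properties, with $l_{nn'}$ as the one undetermined scalar fixed by normalization --- is in the right spirit, but the step you call the technical heart of the proof is precisely where the argument fails: the cancellation you expect does not occur, and the statement you are trying to verify is not even true. Computing with \eqref{Vir12} and \eqref{prim} gives
\[
[L_k^{\scriptscriptstyle{(1)}},\overline{\Phi}_{\alpha}(q)]=\frac{b^{-1}}{b^{-1}-b}\left(q^{k+1}\partial_q+(k+1)\D^{\NS}(\alpha-Q/2)q^k\right)\overline{\Phi}_{\alpha}(q)+\frac{1}{b^{-1}-b}\sum_{r\in\mathbb{Z}+1/2}f_{k-r}\,q^{r+1/2}\,\overline{\Psi}_{\alpha}(q),
\]
and the $f\overline{\Psi}$ terms neither cancel nor resum into $\overline{\Phi}_\alpha$ and its $q$-derivative: they are a genuinely new operator. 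They \emph{cannot} cancel, because the right-hand side of \eqref{spd} itself does not satisfy \eqref{vertcommrel} for each Virasoro separately. Indeed $[L_k^{\scriptscriptstyle{(1)}},V_{\alpha^{\scriptscriptstyle{(1)}}}(q)\otimes V_{\alpha^{\scriptscriptstyle{(2)}}}(q)]=\bigl((q^{k+1}\partial_q+(k+1)\D^{\scriptscriptstyle{(1)}}q^k)V_{\alpha^{\scriptscriptstyle{(1)}}}(q)\bigr)\otimes V_{\alpha^{\scriptscriptstyle{(2)}}}(q)$, where the derivative hits only the first tensor factor; this is not expressible through the coincident-point product and its total $q$-derivative. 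Only the diagonal algebra $L_k^{\scriptscriptstyle{(1)}}+L_k^{\scriptscriptstyle{(2)}}$ sees a primary field, of weight $\D^{\scriptscriptstyle{(1)}}+\D^{\scriptscriptstyle{(2)}}=\D^{\NS}(\alpha-Q/2)$ (your consistency check). Consequently your uniqueness argument has nothing to grip: the restriction is a one-variable operator, and ``satisfies \eqref{vertcommrel} for both Virasoro subalgebras'' is not a property it has, or could have.

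The paper's proof is built exactly to circumvent this. It defines $l_{nn'}$ as the ratio of highest-weight matrix elements \eqref{eq:def:lnn}, proving $q$-independence by acting with $L_0^{\scriptscriptstyle{(1)}}+L_0^{\scriptscriptstyle{(2)}}$ (this part agrees with your consistency check), and then introduces the two families $\overline{\Phi}_{kl}=\mathrm{ad}_{L_0^{\scriptscriptstyle{(2)}}}^{l}\mathrm{ad}_{L_0^{\scriptscriptstyle{(1)}}}^{k}(\overline{\Phi}_{\alpha})$ and $V_{kl}=\mathrm{ad}_{L_0^{\scriptscriptstyle{(2)}}}^{l}\mathrm{ad}_{L_0^{\scriptscriptstyle{(1)}}}^{k}(V_{\alpha^{\scriptscriptstyle{(1)}}}V_{\alpha^{\scriptscriptstyle{(2)}}})$. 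The point is that while $[L_m^{\scriptscriptstyle{(1)}},\overline{\Phi}_{00}]$ is not expressible through $\overline{\Phi}_{00}$, it equals $q^m\overline{\Phi}_{10}+m\D^{\scriptscriptstyle{(1)}}q^m\overline{\Phi}_{00}$ --- the unwanted $f\overline{\Psi}$ terms are absorbed into $\overline{\Phi}_{10}$ --- and $V_{00}$ obeys the structurally identical relation $[L_m^{\scriptscriptstyle{(1)}},V_{00}]=q^mV_{10}+m\D^{\scriptscriptstyle{(1)}}q^mV_{00}$. Equality of all matrix elements \eqref{melemdec} then follows by moving the $L^{(\eta)}$'s onto the highest weight vectors; since $\la P,n|$ and $|P',n'\ra$ are $L_0^{(\eta)}$-eigenvectors, everything collapses to the $k=l=0$ matrix element, i.e.\ to the definition of $l_{nn'}$. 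If you want to keep your ``primary field plus uniqueness'' picture, you must first separate the arguments of the two factors, i.e.\ work with $V_{\alpha^{\scriptscriptstyle{(1)}}}(q_1)\otimes V_{\alpha^{\scriptscriptstyle{(2)}}}(q_2)$; the $\overline{\Phi}_{kl}$ are precisely the Taylor coefficients of the corresponding two-variable extension $e^{s L_0^{\scriptscriptstyle{(1)}}}\overline{\Phi}_{\alpha}(q)e^{-s L_0^{\scriptscriptstyle{(1)}}}$, and without this ingredient the proof cannot be completed.
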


It follows from the definition of the chain vector  \eqref{chain} that Proposition \ref{prop:Whitdecomp} follows from this theorem. 

The Theorem \ref{thm:decomp} was stated in \cite{BBFLT} without proof. Our proof here is standard and similar to the one in \cite{HJ}.
\begin{proof}
Define $l_{n n'}(P,\alpha,P')$ as a quotient
\begin{equation}
l_{n n'}(P,\alpha,P')=\frac{\langle P,n|\overline{\Phi}_{\alpha}(q)|P',n'\ra}{\la P,n|V_{\alpha^{\scriptscriptstyle{(1)}}}(q)V_{\alpha^{\scriptscriptstyle{(2)}}}(q)|P',n'\rangle}. \label{eq:def:lnn}
\end{equation}
First we prove that $l_{n n'}(P,\alpha,P')$ does not depend on $q$. Recall that $T_f(q)=\frac{1}{2}:f'(q)f(q):=\sum L_n^{f}q^{-n-2}$ is the fermion stress-energy tensor. We act by operator $L_0^{\scriptscriptstyle{(1)}}+L_0^{\scriptscriptstyle{(2)}}=L_0+L_0^{f}$ and get the equation
\begin{equation*}
\begin{aligned}
\Bigl(\D^{\NS}(P
)+2n^2\Bigr)\la P,n|\overline{\Phi}_{\alpha}(q)|P',n'\ra=\la P,n|\left(L_0+L_0^{f}\right)\overline{\Phi}_{\alpha}(q)|P',n'\ra=\\=q\frac{d}{dq}\la P,n|\overline{\Phi}_{\alpha}(q)|P',n'\ra+\Bigl(\D^{\NS}(\alpha-Q/2)+\D^{\NS}(P')+2n'^2\Bigr)\la P,n|\overline{\Phi}_{\alpha}(q)|P',n'\ra.
\end{aligned}
\end{equation*}
Therefore $\la P,n|\overline{\Phi}_{\alpha}(q)|P',n'\ra\sim q^{\D^{\NS}(P)+2n^2-\D^{\NS}(\alpha-Q/2)-\D^{\NS}(P')-2n'^2}$. Similarly we have
\begin{multline*}
(\D^{\NS}(P)+2n^2)\la P,n|(V_{\alpha^{\scriptscriptstyle{(1)}}}V_{\alpha^{\scriptscriptstyle{(2)}}})(q)|P',n'\ra=\la P,n|\left(L_0^{\scriptscriptstyle{(1)}}+L_0^{\scriptscriptstyle{(2)}}\right)(V_{\alpha^{\scriptscriptstyle{(1)}}}V_{\alpha^{\scriptscriptstyle{(2)}}})(q)|P',n'\ra=
\\
=q\frac{d}{dq}\la P,n|(V_{\alpha^{\scriptscriptstyle{(1)}}}V_{\alpha^{\scriptscriptstyle{(2)}}})(q)|P',n'\ra+
\\
+ \Bigl(\D(\alpha^{\scriptscriptstyle{(1)}}-Q^{\scriptscriptstyle{(1)}}/2)+
 \D(\alpha^{\scriptscriptstyle{(2)}}-Q^{\scriptscriptstyle{(2)}}/2)+\D^{\NS}(P')+2n'^2\Bigr)\la P,n|(V_{\alpha^{\scriptscriptstyle{(1)}}}V_{\alpha^{\scriptscriptstyle{(2)}}})(q)|P',n'\ra.
\end{multline*}
Since $\D(\alpha^{\scriptscriptstyle{(1)}}-Q^{\scriptscriptstyle{(1)}}/2)+
 \D(\alpha^{\scriptscriptstyle{(2)}}-Q^{\scriptscriptstyle{(2)}}/2)=\D^{\NS}(\alpha-Q/2)$ we have $\la P,n|(V_{\D^{\scriptscriptstyle{(1)}}}V_{\D^{\scriptscriptstyle{(2)}}})(q)|P',n'\ra\sim q^{\D^{\NS}(P)+2n^2-\D^{\NS}(\alpha-Q/2)-\D^{\NS}(P')-2n'^2}.$ Therefore $l_{n n'}(P,\alpha,P')$ does not depend on $q$. 

Using the normalization of the vertex operators $V_{\D^{\scriptscriptstyle{(1)}}},V_{\D^{\scriptscriptstyle{(2)}}}$ we have $l_{n n'}(P,\alpha,P')=\langle P,n|\overline{\Phi}_{\alpha}(1)|P',n'\ra$.

Relation \eqref{spd} is equivalent to a relation for  matrix elements
\begin{align}
\left\langle P,n\left|L^{\scriptscriptstyle{(1)}}_{\lmb_2}L^{\scriptscriptstyle{(2)}}_{\mu_2}\left|\overline{\Phi}_{\alpha}(q)\right|L^{\scriptscriptstyle{(2)}}_{-\mu_1}L^{\scriptscriptstyle{(1)}}_{-\lmb_1}\right|P',n'\right\rangle=
l_{nn'}\cdot \left\langle P,n\left|L^{\scriptscriptstyle{(1)}}_{\lmb_2}L^{\scriptscriptstyle{(2)}}_{\mu_2}\left|V_{\alpha^{\scriptscriptstyle{(1)}}}(q)V_{\alpha^{\scriptscriptstyle{(2)}}}(q)\right|L^{\scriptscriptstyle{(2)}}_{-\mu_1}L^{\scriptscriptstyle{(1)}}_{-\lmb_1}\right|P',n'\right\rangle, \label{melemdec}
\end{align}
for any partitions $\lambda_1, \lambda_2, \mu_1, \mu_2$. Here we omit arguments $P, \alpha, P'$ in $l_{nn'}$.

We reshuffle the operators $L^{\scriptscriptstyle{(1)}}$ and $L^{\scriptscriptstyle{(2)}}$ to other sides of these matrix elements. Let us define the operators $\overline{\Phi}_{k l}, V_{k l}(q)$ as commutators with $L_0^{\scriptscriptstyle{(1)}}$ and $L_0^{\scriptscriptstyle{(2)}}$
\[
\overline{\Phi}_{kl}(q)=\mathrm{ad}_{L^{\scriptscriptstyle{(2)}}_{0}}^l\mathrm{ad}_{L^{\scriptscriptstyle{(1)}}_{0}}^k \left(\overline{\Phi}_{\alpha}(q)\right),\qquad  
V_{kl}(q)= \mathrm{ad}_{L^{\scriptscriptstyle{(2)}}_{0}}^l\mathrm{ad}_{L^{\scriptscriptstyle{(1)}}_{0}}^k \left( V_{\alpha^{\scriptscriptstyle{(1)}}}V_{\alpha^{\scriptscriptstyle{(2)}}}\right),
\]
where $k,l\geq 0$. Due to commutation relations \eqref{prim} the commutators $[L_m^{(\eta)},\overline{\Phi}_{k l}(q)], \eta=1,2$ turns out to be a sum of operators $\overline{\Phi}_{k l}$, $\overline{\Phi}_{k+1\, l}$, $\overline{\Phi}_{k\, l+1}$ and similarly for $V_{k l}(q)$. Moreover, the commutation relations between Virasoro generators $L^{(\eta)}_m$ and operators $\overline{\Phi}_{k l}(q)$ are equivalent to the commutation relations between $L^{(\eta)}_m$ and operators $V_{k l}(q)$. The last statement can be easily checked for $\overline{\Phi}_{0 0}$ and~$V_{0 0}$ 
\begin{equation*}
\begin{aligned}
{}[L^{\scriptscriptstyle{(1)}}_m,\overline{\Phi}_{0 0}(q)]=&\frac{b^{-1}}{b^{-1}-b}\left(q^{m+1}\partial_q+(m+1)\D^{\NS}(\alpha-Q/2)q^m\right)\overline{\Phi}_{\alpha}(q)+\frac{1}{b^{-1}-b}\sum_{r\in\mathbb{Z}+1/2} f_{m-r} q^{r+1/2} \overline{\Psi}_{\alpha}(q)=\\
=&q^{m}\overline{\Phi}_{1 0}(q)+m\D^{\scriptscriptstyle{(1)}}q^m\overline{\Phi}_{0 0}(q)\\
[L^{\scriptscriptstyle{(1)}}_m,V_{0 0}(q)]=&q^{m}V_{1 0}(q)+m\D^{\scriptscriptstyle{(1)}}q^mV_{0 0}(q),
\end{aligned}
\end{equation*} 
and similarly for $L^{\scriptscriptstyle{(2)}}_m$. The commutation relations between the other $\overline{\Phi}_{k l}(q), V_{k l}(q)$ and $L^{(\eta)}_m$ follow from previous relations and the commutation relations between $L^{(\eta)}_m$ and $L^{(\eta)}_0$.

Therefore the relation \eqref{melemdec} can be rewritten as
\[
\sum g_{kl} \la P,n|\overline{\Phi}_{k l}(q)|P',n'\ra=l_{nn'}\cdot\sum g_{kl}\la P,n|V_{k l}(q)|P',n'\ra,
\]
where the coefficients $g_{kl}$ on the left hand side and 
right hand side are equal. It remains to show that 
\[
\la P,n|\overline{\Phi}_{k l}(q)|P',n'\ra=l_{nn'}\cdot\la P,n|V_{k l}(q)|P',n'\ra,
\]
for any $k,l \geq 0$. Since $\la P,n|$ and $|P',n'\ra$ are eigenvectors for $L_0^{(\eta)}, \eta=1,2$, the last equation follows from the $k=0,l=0$ case, i.e. from the definition of $l_{nn'}$ \eqref{eq:def:lnn}.
\end{proof}

We will calculate $l_{n n'}$ in the next subsection. Now we note that the coefficients $l_n^{21}$ in  \eqref{Chdecomp} are equal to $l_{n0}$
\[
l_n^{21}(P,b|\D^{\NS}_2,\D^{\NS}_1)=\la P,n|\overline{\Phi}_{\alpha_2}(1)|P_1\ra=l_{n0}(P,\alpha_2,P_1),
\]
where $\alpha_2=P_2+Q/2$. The coefficients $l_n(P,b)$ of decomposition \eqref{Whitdecomp} are given by the irregular limit of~\eqref{Chdecomp}
\begin{align}
l_n(P,b)=(\beta^{\scriptscriptstyle{(1)}})^{-\D^{\scriptscriptstyle{(1)}}_n/2}(\beta^{\scriptscriptstyle{(2)}})^{-\D^{\scriptscriptstyle{(2)}}_n/2}\lim\limits_{\D_1^{\NS}\rightarrow \infty}\frac{l_n^{21}(P,b|\Delta_2,\Delta_1)}{(-\D_1^{\NS})^{\lfloor 2n^2 \rfloor}} \label{lirrlim},
\end{align}
where we used \eqref{irrlim} and \eqref{Virirrlim}.

\begin{Remark} \label{rem:phi12}
Recall that $\phi_{m,n}(q)$, $m,n \in \mathbb{N}$ denotes (\cite{BPZ}) the degenerate vertex operator for Virasoro CFT with
\[\alpha_{mn}=\frac12((m-1)b^{-1}+(n-1)b),\]
which satisfies an additional differential equation of order $mn$. For the operators $\phi_{1,1}(q), \phi_{2,1}(q), \phi_{1,2}(q)$ the corresponding equations read
\begin{equation}
\partial_q \phi_{1,1}(q)=0,\qquad \partial_q^2 \phi_{1,2}(q)+b^2 :\!T(q)\phi_{1,2}(q)\!:=0,\qquad \partial_q^2 \phi_{2,1}(q)+b^{-2} :\!T(q)\phi_{1,2}(q)\!:=0. \label{eq:phi_eq}
\end{equation}
The conformal weight of the operator $\phi_{m,n}(q)$ is denoted by $\Delta_{m,n}(b)=\Delta(\alpha_{m,n}-Q/2,b)$.

Similarly to Theorem \ref{thm:decomp} one can prove, that 
the matrix elements of $f(q)$ are proportional to the matrix elements of the product of $\phi_{1,2}^{\scriptscriptstyle{(1)}}(q)\phi_{2,1}^{\scriptscriptstyle{(2)}}(q)$. This fact is the operator analogue of the Lemma \ref{le:f-12}.

Moreover, due to the operator-state correspondence the highest weight vectors $|Q/2,n\ra \in \mathrm{Vac}$ correspond to the currents $\phi_{1,m}^{\scriptscriptstyle{(1)}}\phi_{m,1}^{\scriptscriptstyle{(2)}}$, $m=2n+1$.  
From the fusion rules \cite{BPZ} follow that the action of this current on the vector $\overline{|P\ra}$ shifts its momentum $(P_1,P_2) \rightarrow \left(P_1+\frac{k_1}2b^{\scriptsize(1)},P_2+\frac{k_2}2(b^{\scriptsize(2)})^{-1}\right)$, where $|k_1|,|k_2|<m$ and $m-k_1,m-k_2$ are odd.  Proposition \ref{pr:FNSR} states that only shifts with $k_1=k_2$ are allowed in our representations.

\end{Remark}

\subsection{Matrix elements $\la P,n |\Phi_{\alpha}(1) |P',n'\ra$}
\label{ssec:melem}
% Надо ли писать про проверку монодромии и почему ее не провел Тешнер?
In this subsection we calculate the matrix elements $l_{n n'}(P,\alpha,P')=\la P,n|\Phi_{\alpha}(1)|P',n'\ra$. Introduce the functions $s_{\textrm{even}}(x,n)$ for $n \in \mathbb{Z}$ and $s_{\textrm{odd}}(x,n)$ for $n \in \mathbb{Z}+\frac{1}{2}$ by the formulae
\[
s_{\textrm{even}}(x,n)=
  \prod_{\substack{i,j\geq 0,\;i+j<2n\\ i+j\equiv0\bmod 2}}\hspace*{-10pt}(x+ib+jb^{-1}),\qquad
  s_{\textrm{odd}}(x,n)=2^{1/8}\hspace*{-10pt}
  \prod_{\substack{i,j\geq0,\;i+j<2n\\i+j\equiv1\bmod 2}}\hspace*{-10pt}(x+ib+jb^{-1}),
\]
for $n\geq 0$ and 
\[
   s_{\textrm{even}}(x,n)=(-1)^n s_{\textrm{even}}(Q-x,-n),\quad
   s_{\textrm{odd}}(x,n)=s_{\textrm{odd}}(Q-x,-n)
\]
for $n<0$. 

Recall that the vectors $|P,n\ra$ were fixed in \eqref{eq:|P,n}. The factors $\Omega_n$ were defined by the normalization condition. In the following theorem we give explicit expressions for them.

\begin{thm}\label{thm:ln} The matrix elements $ l_{n n'}(P,\alpha,P')$ have the form
\begin{multline}
l_{n n'}(P,\alpha,P')=\frac{ (-1)^{sg} 2^{n+n'}\Omega_n(P)\Omega_{n'}(P')}{ s_{\textrm{even}}(2P+Q|2n) s_{\textrm{even}}(2P'+Q|2n')}\times \\ \times
  \begin{cases}
  \prod_{\epsilon,\epsilon'=\pm}
  s_{\textrm{even}}(\alpha+\epsilon P'+\epsilon' P,\epsilon n'+\epsilon' n), \quad n+n' \in \mathbb{Z}\\
  \prod_{\epsilon,\epsilon'=\pm}
  s_{\textrm{odd}}(\alpha+\epsilon P'+\epsilon' P,\epsilon n'+\epsilon' n),
  \quad n+n'\in\mathbb{Z}+1/2
  \end{cases} 
\label{melemans}
\end{multline}
where the factors $\Omega_n(P)$ are given by the expressions
\begin{equation}
\Omega^2_n(P)=\frac{(-1)^{2n}\prod\limits_{i,j\geq 1,\;i+j=4n}(2P+ib+jb^{-1})}{2^{2n}\cdot 2P\prod\limits_{i=1}^{2n-1}(2P+2ib)\prod\limits_{j=1}^{2n-1}(2P+2jb^{-1})},\label{eq:Omega}
\end{equation}
and the sign factors are equal
\[
  (-1)^{sg}=\begin{cases}
  -1, n'\in\mathbb{Z}+1/2, n\in\mathbb{Z}\\
  1, \quad \text{otherwise}
  \end{cases}
\]
\end{thm}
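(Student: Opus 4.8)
The plan is to compute the blow-up factor $l_{nn'}(P,\alpha,P')$ by a degenerate-field bootstrap that imitates the computation of structure constants in Liouville theory, and then to fix the normalizations $\Omega_n(P)$ from the condition $\langle P,n|P,n\rangle=1$. The base of the computation is immediate: since $|P,0\rangle=\overline{|\Delta^{\NS}\rangle}$ and $\overline{\Phi}_\alpha=1\otimes\Phi_\alpha$, the definition \eqref{eq:def:lnn} together with the vertex-operator normalization gives $l_{00}(P,\alpha,P')=1$, so the task is to propagate this to all $2n,2n'\in\mathbb{Z}$. The tool is the fermion current $f(q)$, which by Lemma \ref{le:f-12} and Remark \ref{rem:phi12} acts, from the point of view of the subalgebra $\Vir\oplus\Vir$, as the product of degenerate fields $\phi_{1,2}^{\scriptscriptstyle{(1)}}\phi_{2,1}^{\scriptscriptstyle{(2)}}$, and hence obeys the second-order equations \eqref{eq:phi_eq} in each Virasoro factor.

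First I would insert the fermion and study the function $G(z)=\langle P,n|\,f(z)\,\overline{\Phi}_\alpha(1)\,|P',n'\rangle$. Using Theorem \ref{thm:decomp} to factorize $\overline{\Phi}_\alpha$ and the decomposition \eqref{FxNSR} to resolve the states into $\Vir\oplus\Vir$ highest-weight vectors, $G(z)$ becomes a finite bilinear combination of four-point $\Vir^{\scriptscriptstyle{(1)}}$ and $\Vir^{\scriptscriptstyle{(2)}}$ blocks, each carrying one degenerate insertion and therefore satisfying a hypergeometric ODE in $z$; thus $G(z)$ is explicitly a bilinear combination of hypergeometric functions. The fusion rules for $\phi_{1,2}$ and $\phi_{2,1}$ restrict the intermediate channels to a shift $n'\to n'\pm\tfrac12$ on one side and $n\to n\pm\tfrac12$ on the other, accompanied by the degenerate half-period shifts of the vertex-operator momentum in each factor; the two OPE limits $z\to1$ and $z\to\infty$ therefore express $G$ through $l_{n,\,n'\pm1/2}$ and $l_{n\pm1/2,\,n'}$ with shifted continuous arguments.

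The heart of the argument is a monodromy (crossing) condition. On the $\sfF\oplus\NSR$ side $f(z)$ is a single-valued Neveu--Schwarz fermion, so $G(z)$ must be single-valued in $z$; equivalently, the product of the two hypergeometric solutions has trivial monodromy. This is the mechanism announced in the introduction, and it is exactly here that the relation \eqref{cc} between $c^{\scriptscriptstyle{(1)}}$ and $c^{\scriptscriptstyle{(2)}}$ is used: it forces the branch cuts of the two factors to cancel. Imposing single-valuedness fixes the hypergeometric connection coefficients and yields closed functional (shift) equations relating $l_{n,n'}$ to $l_{n,\,n'\pm1/2}$ and $l_{n\pm1/2,\,n'}$, with ratios that are explicit products of linear factors $x+ib+jb^{-1}$. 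Because a half-integer shift flips the parity of $n+n'$, the recursion naturally alternates between the two regimes, producing the $s_{\mathrm{even}}$ and $s_{\mathrm{odd}}$ cases of \eqref{melemans}. Solving the recursion from $l_{00}=1$ — and verifying that the product formula \eqref{melemans} indeed satisfies the shift equations — determines $l_{nn'}$ up to factors depending on $(n,P)$ and $(n',P')$ separately, which are by definition the normalizations $\Omega_n(P)$ and $\Omega_{n'}(P')$. To pin these down I would impose $\langle P,n|P,n\rangle=1$, evaluating the norm directly from the explicit fermionic construction of $|P,n\rangle$ as a Cauchy-type Gram determinant of the two-point functions of the $\chi_r^{\mp}$ (a computation in which the pairing of the two free-field realizations $\psi_r^{\pm}$ reproduces the super-Liouville reflection); evaluating this determinant in closed form gives \eqref{eq:Omega}.

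The main obstacle I expect is the monodromy/connection step: setting up the hypergeometric ODEs for the factorized block with the $\phi_{1,2}^{\scriptscriptstyle{(1)}}\phi_{2,1}^{\scriptscriptstyle{(2)}}$ insertion, matching the solutions in the $z\to1$ and $z\to\infty$ channels, and extracting the connection coefficients so that single-valuedness yields the shift equations with the precise ratios — including the $2^{1/8}$ in $s_{\mathrm{odd}}$ and the sign $(-1)^{sg}$. The bookkeeping of the two realizations $\psi_r^{\pm}$ entering $\chi_r^{\pm}$, and hence the correct assignment of OPE channels to shifts of $n$ and $n'$, is the delicate part; the reflection relations defining $s_{\mathrm{even}},s_{\mathrm{odd}}$ for $n<0$ together with the symmetry $|P,n\rangle\equiv|-P,-n\rangle$ provide the consistency checks.
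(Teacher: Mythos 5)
Your core strategy is the paper's: insert the fermion current, use the fact that with respect to $\Vir\oplus\Vir$ it acts as $\phi_{1,2}^{\scriptscriptstyle{(1)}}\phi_{2,1}^{\scriptscriptstyle{(2)}}$, so the mixed correlator with one $\overline{\Phi}_\alpha$ is a finite bilinear combination of hypergeometric blocks, and then use single-valuedness (in the paper this is manifest because $\la P,n|\overline{\Phi}_{\alpha}(1)f(q)|P',n'\ra$ is a Laurent polynomial in $q$) to force the monodromy-violating cross terms $\kr{F}^{\scriptscriptstyle{(1)}}_{\pm}\kr{F}^{\scriptscriptstyle{(2)}}_{\mp}$ to cancel; this is exactly how the shift equation \eqref{eq:ln+1n} and, via $l_{nn'}=(-1)^{2(n+n')}l_{n'n}$, its companion in $n$ are obtained. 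One slip in your channel bookkeeping: the shifts $n'\to n'\pm\tfrac12$ are read off in the limit where $f$ collides with $|P',n'\ra$ (i.e.\ $z\to 0$), while the $z\to 1$ channel you invoke fuses $f$ with $\overline{\Phi}_\alpha$ and shifts $\alpha$, which is not used in the argument.

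There are two genuine gaps. First, the recursion cannot be solved ``from $l_{00}=1$'' alone: the shift equations change $n$ and $n'$ by integers, so the fractional parts $\{n\},\{n'\}$ are invariants of the recursion and you need all four seeds $l_{00}$, $l_{0\,\frac12}$, $l_{\frac12\,0}$, $l_{\frac12\,\frac12}$. The paper computes the last three directly from the explicit vector $|P,1/2\ra=\Omega_{1/2}\bigl(f_{-1/2}+\tfrac{1}{Q/2+P}G_{-1/2}\bigr)\overline{|P\ra}$, and this is precisely where the factor $2^{1/8}$ in $s_{\textrm{odd}}$ and the sign $(-1)^{sg}$ originate. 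Second, and more seriously, your determination of $\Omega_n(P)$ by a direct Gram-determinant evaluation of $\la P,n|P,n\ra$ is not tractable as stated: under the bilinear form \eqref{eq:conj:operators} the conjugate of $\psi^-_r$ (hence of $\chi^-_r$) does not lie in the same free-field realization --- consistency with $G_r^+=G_{-r}$ in the presence of the background charge forces the conjugation to exchange $\psi^-$ with $\psi^+$, i.e.\ to pass through the super-Liouville reflection operator, whose matrix elements are themselves unknown at this stage --- so Wick's theorem does not apply and no Cauchy-type closed form is available without substantial extra input. The paper sidesteps this entirely by specializing the already-derived shift equations: setting $\alpha=0$ and then $P'=P$ makes $\overline{\Phi}_\alpha$ the identity, so $l_{nn}(P,0,P)=\la P,n|P,n\ra=1$; substituted into \eqref{tempcal} this gives the recursion \eqref{Omrel} for $\Omega_n$, solved from $\Omega_0=1$ and $\Omega^2_{1/2}=-\frac{Q/2+P}{2P}$ to yield \eqref{eq:Omega}. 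You already have every ingredient needed for this trick, and your Step 4 should be replaced by it.
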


These expressions were given in \cite{BBFLT} without proof (see also Remark \ref{rem:Omega}). The arguments in \cite{BBFLT} were partially based on the conjectural expression of $l_{n n'}^2$ in terms of the Liouville and super Liouville three point correlation functions. Here we find $l_{n n'}$ mimicking the standard approach to the Liouville three point function \cite{ZZbook},\cite{JTL} based on the associativity property and the properties of degenerate vertex operators $\phi_{12}$. Note that contrary to Liouville theory which is the coupling of chiral and antichiral CFT, here we use the product of two chiral CFT with the central charges $c^{\scriptscriptstyle{(1)}}$ and $c^{\scriptscriptstyle{(2)}}$.

As was already mentioned in the Introduction the formula \eqref{melemans} was also proven in \cite{HJ} by a different method based on the Dotsenko-Fateev integral representation of the conformal blocks. 

Remark that the explicit expressions \eqref{melemans} have the geometric meaning in the framework of the instanton counting.  We recall this in Section~\ref{sec:geom}. 

\begin{proof} 

Due to the symmetry $|P,n\ra=|-P,-n\ra$ it is sufficient to consider only a case when $n,n'\geq 0$.

In the proof we consider the matrix element $\la P,n|\overline{\Phi}_{\alpha}(1)f(q)|P',n'\ra$. This matrix element is a Laurent polynomial since $[\overline{\Phi}_{\alpha},f_r]=0$ and $f_r |P',n'\ra=0$ for $r\gg 0$, and $\la P,n|f_r=0$ for $r\ll 0$. Therefore we can consider this matrix element as an analytic function on $q\in \mathbb{CP}^1\setminus\{0,\infty\}$.

We decompose the proof into several steps.

\noindent \textbf{Step 1.} First we consider $\la P,n|\overline{\Phi}_{\alpha}(1)f(q)|P',n'\ra$ as a function on $|q|\ll 1$.

Using the decomposition \eqref{FxNSR} we can write $f(q)|P',n'\rangle=\sum_{s \in \mathbb{Z}}|u_s(q)\rangle$, where $|u_s(q)\rangle \in \pi_{\Vir\oplus \Vir}^{n'+s/2}$. Then we have a decomposition
\[
\la P,n|\overline{\Phi}_{\alpha}(1)f(q)|P',n'\ra=\sum_{s\in \mathbb{Z}}  \la P,n|\overline{\Phi}_{\alpha}(1)|u_s(q) \ra.
\]
Due to Theorem \ref{thm:decomp} and Remark \ref{rem:phi12} the matrix elements $\la P,n|\overline{\Phi}_{\alpha}(1)|u_s(q) \ra$ are proportional to conformal blocks of the CFT with $\Vir\oplus \Vir$ symmetry. The $\Vir\oplus \Vir$ conformal blocks factors to the product of the two $\Vir$ conformal blocks \eqref{eq:Vir4point} 
\[
\la P,n|\overline{\Phi}_{\alpha}(1)|u_s(q) \ra \sim \kr{F}_{c^{\scriptscriptstyle{(1)}}}(\ora{\D^{\scriptscriptstyle{(1)}}},\Delta^{\scriptscriptstyle{(1)}}_{n'+s/2}|q))\cdot \kr{F}_{c^{\scriptscriptstyle{(1)}}}(\ora{\D^{\scriptscriptstyle{(2)}}},\Delta^{\scriptscriptstyle{(2)}}_{n'+s/2}|q)),
\]
where the intermediate weight $(\Delta^{\scriptscriptstyle{(1)}}_{n'+s/2},\Delta^{\scriptscriptstyle{(2)}}_{n'+s/2})$ is the weight of the vector $|P',n'+s/2\rangle$, and the external weights are equal to
\begin{equation*}
\begin{aligned}
\ora{\D^{\scriptscriptstyle{(1)}}}=\Bigl(\Delta(P'^{\scriptscriptstyle{(1)}}_{n'},b^{\scriptscriptstyle{(1)}}),\Delta_{1,2}(b^{\scriptscriptstyle{(1)}}),\Delta(\alpha^{\scriptscriptstyle{(1)}}-Q^{\scriptscriptstyle{(1)}}/2,b^{\scriptscriptstyle{(1)}}),\Delta(P^{\scriptscriptstyle{(1)}}_{n},b^{\scriptscriptstyle{(1)}})\Bigr),\\
\ora{\D^{\scriptscriptstyle{(2)}}}=\Bigl(\Delta(P'^{\scriptscriptstyle{(2)}}_{n'},b^{\scriptscriptstyle{(2)}}),\Delta_{2,1}(b^{\scriptscriptstyle{(2)}}),\Delta(\alpha^{\scriptscriptstyle{(2)}}-Q^{\scriptscriptstyle{(2)}}/2,b^{\scriptscriptstyle{(2)}}),\Delta(P^{\scriptscriptstyle{(2)}}_{n},b^{\scriptscriptstyle{(2)}})\Bigr). 
\end{aligned}
\end{equation*}

It follows from equation \eqref{eq:phi_eq} that the conformal block with the degenerate vertex operator $\phi_{1,2}(q)$ satisfies a second order differential equation. This equation reduces to hypergeometric equation \cite{BPZ}. Therefore the conformal blocks written above are nonzero only for $s=\pm 1$  and become proportional to the ${}_2F_{1}$ hypergeometric function
\begin{equation}
\kr{F}^{(\eta)}_s(q)=q^{a^{(\eta)}_s}(1-q)^{d^{(\eta)}} {}_2F_1(A^{(\eta)}_s,B^{(\eta)}_s|C^{(\eta)}_s|q), \quad \eta=1,2, \label{eq:krF}
\end{equation}
where
\begin{equation}\label{eq:hyper}
\begin{aligned}
a_s^{\scriptscriptstyle{(1)}}&=\Delta(P'^{\scriptscriptstyle{(1)}}_{n'+s/2},b^{\scriptscriptstyle{(1)}})-\D_{1,2}(b^{\scriptscriptstyle{(1)}})-\D(P'^{\scriptscriptstyle{(1)}}_{n'},b^{\scriptscriptstyle{(1)}}), 
\\ 
d^{\scriptscriptstyle{(1)}}&=\D(\alpha^{\scriptscriptstyle{(1)}}-Q^{\scriptscriptstyle{(1)}}/2-\alpha_{1,2}^{\scriptscriptstyle{(1)}},b^{\scriptscriptstyle{(1)}})-\D(\alpha^{\scriptscriptstyle{(1)}}-Q^{\scriptscriptstyle{(1)}}/2,b^{\scriptscriptstyle{(1)}})-\D_{1,2}(b^{\scriptscriptstyle{(1)}})
\\ 
A_s^{\scriptscriptstyle{(1)}}&=1/2+2\alpha_{1,2}^{\scriptscriptstyle{(1)}}\left(sP'^{\scriptscriptstyle{(1)}}-\alpha^{\scriptscriptstyle{(1)}}-P^{\scriptscriptstyle{(1)}}+Q^{\scriptscriptstyle{(1)}}/2\right)
\\
B_s^{\scriptscriptstyle{(1)}}&=1/2+2\alpha_{1,2}^{\scriptscriptstyle{(1)}}\left(sP'^{\scriptscriptstyle{(1)}}-\alpha^{\scriptscriptstyle{(1)}}+P^{\scriptscriptstyle{(1)}}+Q^{\scriptscriptstyle{(1)}}/2\right)
\\
C_s^{\scriptscriptstyle{(1)}}&=1+4s\alpha_{1,2}^{\scriptscriptstyle{(1)}}P'^{\scriptscriptstyle{(1)}},
\end{aligned}
\end{equation}
and similarly for $a_s^{\scriptscriptstyle{(2)}}, d^{\scriptscriptstyle{(2)}}, A_s^{\scriptscriptstyle{(2)}}, B_s^{\scriptscriptstyle{(2)}}, C_s^{\scriptscriptstyle{(2)}}$ with the replacement $(1) \leftrightarrow (2)$ in the superscript and $\alpha_{1,2}^{\scriptscriptstyle{(1)}} \leftrightarrow \alpha_{2,1}^{\scriptscriptstyle{(2)}}$. Therefore we can write
\begin{equation}
\la P,n|\overline{\Phi}_{\alpha}(1)f(q)|P',n'\ra=
\sum_{s=\pm 1}\la P',n'+s/2|f(1)|P',n'\ra \cdot \la P,n|\overline{\Phi}_{\alpha}(1)|P',n'+s/2\ra \cdot \kr{F}^{\scriptscriptstyle{(1)}}_s(q) \kr{F}^{\scriptscriptstyle{(2)}}_s(q)\label{eq:4point:temp}
\end{equation}

The first two factors in the sum appear due to the chosen above normalization of the conformal block $\kr{F}(q)=q^{\Delta}\Bigl(1+q(\dots)\Bigr)$. By the definition $l_{n\, n'+s/2}(P,\alpha,P')=\la P,n|\overline{\Phi}_{\alpha}(1)|P',n'+s/2\ra$. Using the definition of the vectors $|P,n'\ra$ \eqref{eq:|P,n} and the normalization $\la P,n'|P,n'\ra=1$ we have
\begin{equation}
\begin{aligned}
\la P',n'+1/2|f(1)|P',n'\ra=\la P',n'+1/2|f_{-\frac{4(n'+1/2)-1}{2}}|P',n'\ra= \Omega_{n'+1/2}(P')/\Omega_{n'}(P')\\
\la P',n'-1/2|f(1)|P',n'\ra=\la P',n'-1/2|f_{\frac{4n'-1}{2}}|P',n'\ra= \Omega_{n'}(P')/\Omega_{n'-1/2}(P')
\end{aligned}\label{eq:f=Omeg}
\end{equation}
Here we used that $n'>0$. Substituting these expressions into \eqref{eq:4point:temp} we get
\begin{equation}
\la P,n|\Phi_{\alpha}(1)f(q)|P',n'\ra=
\sum_{s=\pm 1}\left(l_{n\, n'+s/2}(P,\alpha,P')\left(\frac{\Omega_{n'+s/2}(P')}{\Omega_{n'}(P')}\right)^s   \kr{F}^{\scriptscriptstyle{(1)}}_s(q) \kr{F}^{\scriptscriptstyle{(2)}}_s(q)\right) \label{degencalc}
\end{equation}
%The formula for $n'<0$ differs only in sign in exponent
%\begin{equation}
%\la P,n|\Phi_{\alpha}(1)f(q)|P',n'\ra=
%\sum_{s=\pm 1}l_{n\, n'+s/2}(P,\alpha,P')\left(\frac{\Omega_{n'+s/2}(p')}{\Omega_{n'}(P')}\right)^{-s}  \cdot \kr{F}^{\scriptscriptstyle{(1)}}_s(q) \kr{F}^{\scriptscriptstyle{(2)}}_s(q) \label{degencalc}
%\end{equation}

\noindent \textbf{Step 2.} Now we want to consider  expression \eqref{degencalc} at the region $|q|\gg 1$. 

For the left hand side we can use 
 $[\overline{\Phi}_{\alpha}(1),f(q)]=0$ and write
\[
\la P,n|\overline{\Phi}_{\alpha}(1)f(q)|P',n'\ra=\la P,n|f(q)\overline{\Phi}_{\alpha}(1)|P',n'\ra=
\la P',n'|\overline{\Phi}^{+}_{\alpha}(1)f^+(q)|P,n\ra.
\]
It follows from \eqref{eq:conj:operators} that $f(q)^{+}=-1/q f(1/q)$. Conjugation of $\Phi_\alpha$ was given in \eqref{eq:conj.vertex}. Therefore we have
\begin{equation}
\la P,n|\overline{\Phi}_{\alpha}(1)f(q)|P',n'\ra=\frac{(-1)^{2(n+n')}}q\la P',n'|\overline{\Phi}_{\alpha}(1)f(1/q)|P,n\ra
\label{crosssymrel}
\end{equation}
We can substitute \eqref{degencalc} to the right hand side of \eqref{crosssymrel} and see that $\la P,n|\overline{\Phi}_{\alpha}(1)f(q)|P',n'\ra$ is a linear combination of $\kr{F}^{\scriptscriptstyle{(1)}}_1(1/q) \kr{F}^{\scriptscriptstyle{(2)}}_1(1/q)$ and $\kr{F}^{\scriptscriptstyle{(1)}}_{-1}(1/q) \kr{F}^{\scriptscriptstyle{(2)}}_{-1}(1/q)$.

On the other hand the hypergeometric functions on $q$ and $1/q$ are connected by 
\begin{equation}
\begin{aligned}
F(A,B,C|q)=\frac{\Gamma(C)\Gamma(B-A)}{\Gamma(B)\Gamma(C-A)} (-q)^{-A}F(A,1-C+A,1-B+A,1/q)+\\+\frac{\Gamma(C)\Gamma(A-B)}{\Gamma(A)\Gamma(C-B)}(-q)^{-B}F(B,1-C+B,1-A+B,1/q). 
\end{aligned}\label{eq:conthyper}
\end{equation}
Therefore we have the relation for functions $\kr{F}^{(\eta)}_s(q)$ ($\eta=1,2$) defined as \eqref{eq:krF}
\begin{equation}
\kr{F}^{(\eta)}_s(q)=q^{-2\D_2^{(\eta)}}\sum\nolimits_{t=\pm 1}B_{st}^{(\eta)}\kr{F}^{(\eta)}_t(1/q) \label{ancont},
\end{equation}
The transformation matrix $B^{(\eta)}_{st}$ do not depend on $q$. 

We substitute \eqref{ancont} into \eqref{degencalc} and get the linear combination of $\kr{F}^{\scriptscriptstyle{(1)}}_{s}\kr{F}^{\scriptscriptstyle{(2)}}_{s'}$, for $s,s'=\pm 1$. But it was proven above that only the terms $s=s'$ can appear. Therefore the coefficient of the term $\kr{F}^{\scriptscriptstyle{(1)}}_{1}\kr{F}^{\scriptscriptstyle{(2)}}_{-1}$ should vanish. This is equivalent to the equation (for $n'>0$)
\begin{equation}
\frac{l_{n\, n'+1/2}(P,\alpha,P')}{l_{n\, n'-1/2}(P,\alpha,P')}=-\frac{ \Omega^2_{n'}(P')}{ \Omega_{n'+1/2}(P')\Omega_{n'-1/2}(P')} \frac{B_{-+}^{\scriptscriptstyle{(1)}}B_{--}^{\scriptscriptstyle{(2)}}}{B_{++}^{\scriptscriptstyle{(1)}} B_{+-}^{\scriptscriptstyle{(2)}}}. \label{eq:ln+1n}
\end{equation}

\noindent \textbf{Step 3.} Due to the relation between $\Phi(q)$ and $\Phi^+(1/q)$ we have
\begin{equation}
l_{n n'}(P,\alpha,P')=(-1)^{2(n+n')}l_{n' n}(P',\alpha,P).\label{eq:l=l:conj}
\end{equation}
Therefore we can find ${l_{n+\frac12\; n'}(P,\alpha,P')}/{l_{n-\frac12\; n'}(P,\alpha,P')}$ from \eqref{eq:ln+1n}. Using these relations we reduce $l_{n n'}$ to $l_{\{n\} \{n'\}}$, where $\{n\}$ denotes fractional part of $n$.

Due to \eqref{eq:l=l:conj} it is enough to consider only the case $n\geq n'$. In this case we use \eqref{eq:conthyper}, \eqref{ancont} and rewrite the ratio of the gamma functions as
\begin{equation*}
\begin{aligned}
\frac{B_{-+}^{\scriptscriptstyle{(1)}}B_{--}^{\scriptscriptstyle{(2)}}}{B_{++}^{\scriptscriptstyle{(1)}} B_{+-}^{\scriptscriptstyle{(2)}}}=
\frac{\prod\limits_{i,j\geq 1,\;i+j=2n'+2n+1}(\alpha-P-P'-ib-jb^{-1})\prod\limits_{i,j\geq 0,\;i+j=2n'-2n-1}(\alpha-P+P'+ib+jb^{-1})}
{\prod\limits_{i\geq 0, j \geq 1\;i+j=4n'}(2P'+ib+j b^{-1})}\times
\\ 
\times
\frac{\prod\limits_{i,j\geq 1,\;i+j=2n'-2n+1}(\alpha+P-P'-ib-jb^{-1})\prod\limits_{i,j\geq 0,\; i+j=2n+2n'-1}(\alpha+P+P'+ib+jb^{-1})}
{\prod\limits_{i\geq 1,j\geq 0,\;i+j=4n'}^{4n'-1}(2P'+ib+jb^{-1})}
\end{aligned}
\end{equation*}
%
%\begin{equation}
%\begin{aligned}
%\frac{B_{-+}^{\scriptscriptstyle{(1)}}B_{--}^\scriptscriptstyle{(2)}{B_{++}^{\scriptscriptstyle{(1)}} B_{+-}^\scriptscriptstyle{(2)}=
%\frac{\prod\limits_{\epsilon=\pm 1}\prod\nolimits_{i,j\geq 0,\;i+j=2n'+\epsilon 2n-1}\left((\alpha+\epsilon P-P'-Q-ib-jb^{-1})(\alpha+\epsilon P+P'+ib+jb^{-1})\right)}{-\prod\limits_{i,j \geq 1\;i+j=4n'}(-2P'-ib-j b^{-1})\prod\limits_{i,j \geq 0\;i+j=4n'}(2P'+ib+j b^{-1})}
%\end{aligned}
%\end{equation}
%
%\begin{equation}
%\begin{aligned}
%\frac{B_{-+}^{\scriptscriptstyle{(1)}}B_{--}^\scriptscriptstyle{(2)}{B_{++}^{\scriptscriptstyle{(1)}} B_{+-}^\scriptscriptstyle{(2)}=
%\frac{\prod\limits_{\epsilon_1,\epsilon_2=\pm 1}\left(\prod\limits\limits_{i,j\geq (1-\epsilon_2)/2,\;i+j=2n'-\epsilon_1 2n-\epsilon_2}\Bigl(\alpha+\epsilon_1 P+\epsilon_2(P'+ib+jb^{-1})\Bigr)\right)}{-\prod\limits_{i,j \geq 1\;i+j=4n'}(-2P'-ib-j b^{-1})\prod\limits_{i,j \geq 0\;i+j=4n'}(2P'+ib+j b^{-1})}
%\end{aligned}
%\end{equation}
Using this expression we get
\begin{multline}
l_{n n'}(P,\alpha,P')=
(-1)^{\lfloor n \rfloor+ \lfloor n' \rfloor}l_{\{n\} \{n'\}}(P,\alpha,P') \cdot\prod\limits_{i'=\{n'\}}^{n'-1}\frac{\Omega^2_{i'+1/2}(P')}{\Omega_{i'}(P')\Omega_{i'+1}(P')} \prod\limits_{i=\{n\}}^{n-1}\frac{\Omega^2_{i+1/2}(P)}{\Omega_{i}(P)\Omega_{i+1}(P)}\cdot 
\\ 
\cdot\frac{\prod_{\epsilon=\pm 1}\prod\nolimits_{\substack{i,j\geq1,\; i+j\equiv 2(n+n')\bmod 2\\
2+2(\{n\}+\{n'\} \leq i+j\leq 2(n'+\epsilon n))}}(\alpha-\epsilon P-P'-ib-jb^{-1}) (\alpha+\epsilon P+P'-Q+ib+jb^{-1}) }{\prod\limits_{\substack{i'>0,j'\geq1,\;i'+j'\leq 4n'-2\\i'+j'\equiv (2-4\{n'\})\bmod 4}}(2P'+j'b^{-1}+i'b)(2P'+i'b^{-1}+j'b) \prod\limits_{\substack{i>0,j\geq1,\;i+j\leq 4n-2\\i+j\equiv (2-4\{n\})\bmod 4}}(2P+jb^{-1}+ib))(2P+ib^{-1}+jb))}. \label{tempcal}
\end{multline}
The values of $l_{\{n\} \{n'\}}(P,\alpha,P')$ can be calculated explicitly using the expressions
\[
|P,0\ra=\overline{|P\ra},\qquad  |P,1/2\ra=\Omega_{1/2}(P)\left(f_{-1/2}+\frac{1}{Q/2+P}G_{-1/2}\right)\overline{|P\ra},\quad \Omega^2_{1/2}(P)=-\frac{Q/2+P}{2P}. 
\]
The answer reads
\begin{equation*}
\begin{aligned}
l_{0 0}(P,\alpha,P')=1, \qquad l_{\frac12 \frac12}(P,\alpha,P')=\frac{(Q+P+P'-\alpha)(P+P'+\alpha)}{\sqrt{4PP'(Q+2P)(Q+2P')}}
,\\
l_{0 \frac12}(P,\alpha,P')=\frac{-i}{\sqrt{(Q+2P')P'}}, \qquad l_{\frac12 0}(P,\alpha,P')=\frac{i}{\sqrt{(Q+2P)P}}.
\end{aligned}
\end{equation*}

\noindent \textbf{Step 4.} In order to finish the proof we should calculate the coefficients $\Omega_n(P)$. Below we used the generic values for $\alpha$. But the matrix elements \eqref{tempcal} are defined in algebraically and this formula should hold for any $\alpha$.

If we put $\alpha=0$ and then $P=P'$ then the operator $\overline{\Phi}_\alpha$ defined by \eqref{prim} and normalization is the identity operator. Therefore $l_{n n}(P, 0, P)=\la P,n|P,n\ra=1$. Substituting this into \eqref{tempcal} we have for ($n\geq 1/2$)
\begin{equation}
\begin{aligned}
\frac{\Omega^2_{n+1/2}(P)\Omega^2_{n-1/2}(P)}{\Omega^4_{n}(P)}=\frac{\prod\limits_{i,j\geq 0,\; i+j=4n-2}(2P+ib+jb^{-1}) \prod\limits_{i,j\geq 0, i+j=4n}(2P+Q+ib+jb^{-1})}{\prod\limits_{i\geq 0, j\geq 1\;i+j=4n}(2P+ib+jb^{-1})(2P+ib^{-1}+jb)}. \label{Omrel}
\end{aligned}
\end{equation}
Using the initial date $\Omega_{0}(P)=1$, $\Omega^2_{1/2}(P)=-\frac{Q/2+P}{2P}$ we get the answer  \eqref{eq:Omega}. Substituting this to \eqref{tempcal} we get the answer \eqref{melemans}.

\end{proof}

Several remarks are in order.

\begin{Remark}
The main point of the proof was the equation \eqref{eq:ln+1n} which follows from the vanishing of the coefficient of the $\kr{F}^{\scriptscriptstyle{(1)}}_{1}\kr{F}^{\scriptscriptstyle{(2)}}_{-1}$ term. A similar vanishing of  the $\kr{F}^{\scriptscriptstyle{(1)}}_{-1}\kr{F}^{\scriptscriptstyle{(2)}}_{1}$ term imposes the relation, which differs from \eqref{eq:ln+1n} by the replacement $b\leftrightarrow b^{-1}$. But the final answers \eqref{melemans} and \eqref{eq:Omega} are symmetric under the $b\leftrightarrow b^{-1}$ replacement. Therefore this new relation does ont impose new constraint. 
\end{Remark}

\begin{Remark} \label{rem:Omega}
In this paper we fix $\Omega_n(P)$ by the relation $\la P,n|P,n\ra=1$. But the expression for matrix elements \eqref{melemans} should be valid for any $\Omega_n(P)$ since they appear only as factors.

In \cite{BBFLT} another normalization was used\footnote{More precisely this is a corrected formula, it looks like there is a misprint in \cite{BBFLT}.}
\[
\td{\Omega}_n=2^{-n}s_{\textrm{even}}(2P+Q,2n).
\]
Substituting this into \eqref{melemans} we get the formula for the matrix elements proposed in \cite{BBFLT} (up to a sign factor). 
\end{Remark}

As was explained in the end of the previous subsection formula \eqref{melemans} gives an explicit expression for $l_n^{21}$. Substituting $\Omega_n$ from \eqref{eq:Omega} and putting $n'=0$, we get
\begin{equation}
\begin{aligned}
\!\!\!\!\!l_n^{21}(P,b|\D^{\NS}_1,\D^{\NS}_2)=\frac{(-1)^n}{\sqrt{s_{\textrm{even}}(2P,2n) s_{\textrm{even}}(2P+Q,2n)}} \times
  \begin{cases}
  \prod_{\epsilon,\epsilon'=\pm}
  s_{\textrm{even}}(P_2+\epsilon P_1+\epsilon' P+\frac{Q}2,\epsilon' n), \;\; n+n' \in \mathbb{Z}\\
  \prod_{\epsilon,\epsilon'=\pm}
  s_{\textrm{odd}}(P_2+\epsilon P_1+\epsilon' P+\frac{Q}2,\epsilon' n),
  \;\; n+n'\in\mathbb{Z}+\frac12
  \end{cases} \label{l_n}
\end{aligned}
\end{equation}
Then we have from \eqref{lirrlim}
\begin{equation}
l_n(P,b)=\frac{(-1)^{n}2^{2n^2}(\beta^{\scriptscriptstyle{(1)}})^{-\D^{\scriptscriptstyle{(1)}}_n/2}(\beta^{\scriptscriptstyle{(2)}})^{-\D^{\scriptscriptstyle{(2)}}_n/2}}{\sqrt{s_{\textrm{even}}(2P,2n) s_{\textrm{even}}(2P+Q,2n)}} \label{l_n_irr}
\end{equation}

\begin{Remark}
There exist another method to find $\Omega_n(P)$. Consider the function defined in the region $|q|<1$ by the formula
\[
H_n(q)=\la P,n|f(1)f(q)|P,n\ra.
\] 
We see that
\begin{equation}
H_n(q)= \frac{1}{1-q}+\sum_{s>0}\la P,n|f_{-s}f_{s}|P,n\ra (q^{-s-1/2}- q^{s-1/2}) \label{eq:Hn}
\end{equation}
and the sum is actually finite for any $n$. Therefore $H_n(q)$ is a sum of Laurent polynomial and $\frac{1}{1-q}$. Therefore $H_n(q)$ can be analytically continued on the $ \mathbb{CP}^1\setminus\{0,1,\infty\}$. In the $|q|>1$ region we have $H_n(q)=-\la P,n|f(q)f(1)|P,n\ra$ i.e. $H_n(q)$ is a radial ordered conformal block.  It follows from \eqref{eq:Hn} that 
\[
H_n(q)=-q^{-1} H_n(1/q)
\]
similarly to \eqref{crosssymrel}. 

Using the second order differential equations we can write the function $H_n(q)$ as a sum of two products of hypergeometric functions similarly to \eqref{degencalc}. The parameters of the hypergeometric functions are given by \eqref{eq:hyper} for $P'=P$, $n'=n$, $\alpha^{\scriptscriptstyle{(1)}}=\alpha^{\scriptscriptstyle{(1)}}_{1,2}$, $\alpha^{\scriptscriptstyle{(2)}}=\alpha^{\scriptscriptstyle{(2)}}_{2,1}$. We have $\la P,n|f(1)|P,n'\ra$ instead of $l_{n n'}$ in the coefficients of analogue of \eqref{degencalc}. Recall that $\la P,n|f(1)|P,n'\ra$ are given in terms of $\Omega_n$ (see \eqref{eq:f=Omeg}). Therefore the vanishing of the coefficient of $\kr{F}^{\scriptscriptstyle{(1)}}_{1}\kr{F}^{\scriptscriptstyle{(2)}}_{-1}$ term gives an equation on $\Omega$ in terms of the transformation matrix $B^{ (\eta)}_{st}$
\[
\frac{\Omega^2_{n+1/2} \Omega^2_{n-1/2}}{\Omega^4_{n}}=-\frac{{B}_{-+}^{\scriptscriptstyle{(1)}}{B}_{--}^{\scriptscriptstyle{(2)}}}{{B}_{++}^{\scriptscriptstyle{(1)}} {B}_{+-}^{\scriptscriptstyle{(2)}}}.
\]
As expected, this relation coincides with \eqref{Omrel}.
\end{Remark}

\section{Bilinear relations on conformal blocks}
\label{sec:bilin}
\subsection{Painlev\'e equations and isomonodromic problem} \label{ssec:Painleve}

Painlev\'e equations are second-order differential equations with no movable brunching points except poles.  We recall several facts about them following \cite{GIL1302}, \cite{myCA}.

The Painlev\'e VI equation has the form
 \begin{align*}
 &\,\frac{d^2q}{dt^2}=\frac{1}{2}\left(\frac{1}{q}+\frac{1}{q-1}+\frac{1}{q-t}\right)\left(\frac{dq}{dt}\right)^2-
 \left(\frac{1}{t}+\frac{1}{t-1}+\frac{1}{q-t}\right)\frac{dq}{dt}+\\
 \nonumber
 &\,+\frac{2q(q-1)(q-t)}{t^2(t-1)^2}\left(\left(\theta_{\infty}-\text{\small $\frac12$}\right)^2-\frac{\theta_0^2t}{q^2}+
 \frac{\theta_1^2(t-1)}{(q-1)^2}-
 \frac{\left(\theta_t^2-\frac14\right)t(t-1)}{(q-t)^2}\right),
\end{align*}
Here $\theta_0,\theta_1,\theta_t,\theta_\infty$ are the parameters of the equation. All other Painlev\'e equations can be obtained from Painlev\'e $\mathrm{VI}$ by a confluence
\begin{center}
 \begin{tikzpicture}[node distance=1.8cm, auto]
 \node (P6) {P$\mathrm{VI}$};
 \node (P5) [right of=P6] {P$\mathrm{V}$};
 \node (P31) [right of=P5] {P$\mathrm{III_1}$};
 \node (P32) [right of=P31] {P$\mathrm{III_2}$};
 \node (P33) [right of=P32] {P$\mathrm{III_3}$};
 \node (P4) [below of=P5] {P$\mathrm{IV}$};
 \node (P2) [below of=P31] {P$\mathrm{II}$};
 \node (P1) [below of=P32] {P$\mathrm{I}$};
 \draw[->] (P6) to node {} (P5);
  \draw[->] (P5) to node {} (P31);
   \draw[->] (P31) to node {} (P32);
    \draw[->] (P32) to node {} (P33);
     \draw[->] (P5) to node {} (P4);
      \draw[->] (P4) to node {} (P2);
       \draw[->] (P2) to node {} (P1);
        \draw[->] (P31) to node {} (P2);
         \draw[->] (P32) to node {} (P1);
 \end{tikzpicture}
 \end{center}
The Painlev\'e III$'_3$ equation has the form
\begin{equation}\label{piiispr}
\frac{d^2q}{dt^2}=\frac{1}{q}\left(\frac{dq}{dt}\right)^2 -
 \frac{1}{t}\frac{dq}{dt}\,+\frac{2q^2}{t^2}-\frac{2}{t},
 \end{equation}
Remark that the Painlev\'e III$'_3$ equation differs from standard Painlev\'e III$_3$ equation by the change of variables $t_{_{\mathrm{III'}}}=t_{_{\mathrm{III}}}^2$,
 $q_{_{\mathrm{III'}}}=t_{_{\mathrm{III}}}q_{_{\mathrm{III}}}$.

We now proceed to the Hamiltonian (or $\zeta$) form of Painlev\'e VI and III$'_3$ and then to the $\tau$ form. For simplicity we present all formulae  for the Painlev\'e III$'_3$ equation and omit some analogous calculations for the Painlev\'e~VI equation. 

The Painlev\'e equations can be rewritten as non-autonomous Hamiltonian systems.
It means that they can be obtained by eliminating an auxiliary momentum $p(t)$ from the equations
\[
\frac{dq}{dt}=\frac{\partial H_{_{\mathrm{J}}}}{\partial p},\qquad \frac{dp}{dt}=-\frac{\partial H_{_{\mathrm{J}}}}{\partial q},\qquad\qquad\mathrm{J}=\mathrm{VI},\mathrm{III}'_{3} \textrm{ and others.}
\]
The corresponding Hamiltonians are given by the expressions
\begin{align}
 t(t-1)H_{_{\mathrm{VI}}}=
 &q\left(q-1\right)\left(q-t\right)p\left(p-\frac{2\theta_0}{q}-\frac{2\theta_1}{q-1}-
 \frac{2\theta_t-1}{q-t}\right)+
 \label{hamp6}\\ &\qquad\qquad\qquad\qquad\qquad \notag
 +\left(\theta_0+\theta_t+\theta_1+\theta_{\infty}\right)
 \left(\theta_0+\theta_t+\theta_1-\theta_{\infty}-1\right)q,  \\
 tH_{_{\mathrm{III}_3'}}=&p^2q^2-q-\frac{t}{q}  \label{hamp33}.
\end{align}
It is convenient to pass from the Hamiltonians to closely related functions $\zeta(t)$ by the formulae
\begin{equation*}
\begin{aligned}
\zeta_{\mathrm{VI}}(t)&=t(t-1)H_{\mathrm{VI}}(t)-q(q-1)p+(\theta_0+\theta_t+\theta_1+\theta_{\infty})q-2\theta_0\theta_t-2\theta_0^2 t-2\theta_0\theta_1 t \\
\zeta_{\mathrm{III'_3}}(t)&=tH_{\mathrm{III'_3}}(t).
\end{aligned}
\end{equation*}
Remark that if we know the functions $\zeta(t)$ on trajectories of motion then we can find $q(t)$ and $p(t)$. For the Painlev\'e III$'_3$ equation we have
\[
q(t)=-\frac{1}{\zeta'(t)}, \quad p(t)=t\zeta''(t)/2. 
\]
See \cite{GIL1302} for the analogous expression of $q(t)$ in terms of $\zeta(t)$,$\zeta'(t)$,$\zeta''(t)$ for the Painlev\'e VI equation. Substituting these expressions to \eqref{hamp33} we get Hamiltonian (or $\zeta$) form of Painlev\'e III$'_3$ equation
\begin{equation}
(t\zeta''(t))^2=4(\zeta'(t))^2(\zeta(t)-t\zeta'(t))-4\zeta'(t). \label{sg3}
\end{equation}
The Painlev\'e VI equation in $\zeta$ form reads
\begin{equation}
(t(t-1)\zeta''(t))^2=-2 \det
\left( \begin{array}{ccc}
2\Delta_0 & t\zeta'(t)-\zeta(t) & \zeta'(t)+\Delta_0+\Delta_t+\Delta_1-\Delta_\infty\\
t\zeta'(t)-\zeta(t) & 2\Delta_t & (t-1)\zeta'(t)-\zeta(t) \\
\zeta'(t)+\Delta_0+\Delta_t+\Delta_1-\Delta_\infty & (t-1)\zeta'(t)-\zeta(t) & 2\Delta_1
\end{array} \right) \label{sg6}
\end{equation}
Here $\Delta_{\nu}=\theta_{\nu}^2$, for $\nu=0,1,t,\infty$. 

Now let us differentiate \eqref{sg3}, \eqref{sg6} and divide the result by $\zeta''(t)$. Substitute $\zeta(t)=t(t-1)\frac{d\log\tau(t)}{dt}$ in the Painlev\'e VI case and 
$\zeta(t)=t\frac{d \log\tau(t)}{dt}$ in the Painlev\'e III$'_3$ case. We obtain bilinear equations on the $\tau$~functions. It is convenient to write these equations by use of Hirota differential operators $D^k_{[x]}$. In our paper we use only Hirota derivatives with respect to the logarithm of a variable. These operators are defined by the formula
\begin{equation} \label{eq:Hirota:def}
f(e^{\alpha}t)g(e^{-\alpha}t)=\sum\limits_{k=0}^{\infty}D^{k}_{[\log t]}(f(t),g(t))\frac{\alpha^k}{k!}.
\end{equation}
The first examples of Hirota derivatives are
\[
D^0_{[\log t]}(f(t),g(t))=f(t)g(t),\qquad D^1_{[\log t]}(f(t),g(t))=tf'(t)g(t)-f(t)tg'(t). 
\]
Then, the $\tau$ form of the Painlev\'e III$'_3$ equation can be written as follows
\begin{equation}
D^{III}(\tau(t),\tau(t))=0,\quad \text{where} \quad D^{III}=
\frac12 D^{4}_{[\log t]}-t\frac{d}{dt}D^{2}_{[\log t]}+\frac12D^{2}_{[\log t]}+2tD^{0}_{[\log t]} \label{tau3}
\end{equation}
This form was found in \cite{Okamoto:2006}. For the Painlev\'e VI case we use $\tilde{\tau}(t)=t^{\Delta_0+\Delta_t}\tau(t)$ and rewrite the equation as $D^{VI}(\tilde{\tau}(t),\tilde{\tau}(t))=0$, where
\begin{equation}
\begin{aligned}
D^{VI}=&-\frac12(1-t)^3{D}_{[\log t]}^{4}+(1-t)^2(1+t) \left(t\frac{d}{dt}\right){D}_{[\log t]}^{2}+\\
&+(1-t)\Bigl(2 t(\D_t+\D_1)-2(1-t)t(\D_0+\D_\infty)-\frac12 (1-t+t^2)\Bigr){D}_{[\log t]}^{2}-\\
&-\frac12t(1-t)\left(t\frac{d}{dt}\right)^2{D}_{[\log t]}^{0}+ t\Bigl((\D_0+\D_\infty)(1-t)-(\D_t+\D_1)(1+t)\Bigr)\left(t\frac{d}{dt}\right){D}_{[\log t]}^{0}+\\
&+2t\Bigl((\D_0-\D_t)(\D_1-\D_\infty)+t(\D_0+\D_t)(\D_1+\D_\infty)\Bigr){D}_{[\log t]}^{0}
\end{aligned}.\notag
\end{equation}

\bigskip

Now let us review some facts about the isomonodromic deformations of linear systems on $\mathbb{CP}^1$(following \cite{GIL1207}). In the simplest non-trivial case this problem leads to the Painlev\'e VI equation.

We start from a linear system of rank $N$ with $n$ regular singularities $a=\{a_1,\ldots,a_n\}$ on $\mathbb{CP}^1$
\begin{equation}
 \label{dphiz}
 \partial_z\Phi=\mathcal{A}(z)\Phi, \qquad \mathcal{A}(z)=\sum_{\nu=1}^n\frac{\mathcal{A}_{\nu}}{z-a_{\nu}},
\end{equation}
 where $\{\kr{A}_{\nu}\}$ are $\mathfrak{sl}(N,\mathbb{C})$ constant matrices.
 
 We made some assumptions. We assume the constraint $\sum\nolimits_{\nu=1}^n\mathcal{A}_{\nu}=0$, which is equivalent to  the absence of singularity at $\infty$. We assume that $\mathcal{A}_{\nu}$ are diagonalizable so that $\mathcal{A}_{\nu}=\mathcal{G}_{\nu}\mathcal{T}_{\nu}\mathcal{G}_{\nu}^{-1}$
 with some $\mathcal{T}_{\nu}=\mathrm{diag}\left\{\lambda_{\nu,1},\ldots,\lambda_{\nu,N}\right\}$. And finally we assume that $\lmb_{\nu,j}-\lmb_{\nu,k}\notin\mathbb{Z}$ for $j\neq k$ (a non-resonance assumption). 

The fundamental solution is normalized by $\Phi(z_0)=\mathbf{1}_{N}$. Near the singular points, the fundamental solution has the following expansions
\[
 \Phi(z\rightarrow a_{\nu})=\mathcal{G}_{\nu}(z)\left(z-a_{\nu}\right)^{\mathcal{T}_{\nu}}\mathcal{C}_{\nu}.
\]
 Here $\mathcal{G}_{\nu}(z)$ is holomorphic and invertible in a neighborhood of $z=a_{\nu}$
 and satisfies $\mathcal{G}_{\nu}(a_{\nu})=\mathcal{G}_{\nu}$. The matrix $\mathcal{C}_{\nu}$ is independent of $z$ and is defined by the position of $z_0$. Counterclockwise continuation of $\Phi(z)$ around $a_{\nu}$ leads to a monodromy matrix $\mathcal{M}_{\nu}=\mathcal{C}_{\nu}^{-1}e^{2\pi i \mathcal{T}_{\nu}}\mathcal{C}_{\nu}$.

 Let us now vary the positions of singularities and $\mathcal{A}_{\nu}$'s in such way that the monodromy is preserved. A classical result translates this requirement into a system of PDEs
 \begin{align}
 \label{dphia}
 \partial_{a_{\nu}}\Phi=&\,-\frac{z_0-z\;\,}{z_0-a_{\nu}}\,\frac{\mathcal{A}_{\nu}}{z-a_{\nu}}\,\Phi,
 \end{align}
Schlesinger deformation equations are obtained as
 compatibility conditions of (\ref{dphiz}) and (\ref{dphia}). Explicitly,
\[
 \partial_{a_{\mu}}\mathcal{A}_{\nu}=\frac{z_0-a_{\nu}}{z_0-a_{\mu}}\,
 \frac{\left[\mathcal{A}_{\mu},\mathcal{A}_{\nu}\right]}{a_{\mu}-a_{\nu}},\quad \mu\neq\nu,\qquad\qquad
 \partial_{a_{\nu}}\mathcal{A}_{\nu}=-\sum_{\mu\neq\nu}
 \frac{\left[\mathcal{A}_{\mu},\mathcal{A}_{\nu}\right]}{a_{\mu}-a_{\nu}}.
\]
It follows from Schlesinger equations that the form $\sum_{\mu<\nu}\tr\,\mathcal{A}_{\mu}\mathcal{A}_{\nu} \;d\log\left(a_{\mu}-a_{\nu}\right)$ is closed. The isomonodromic $\tau$~function $\tau(a)$ is locally defined by
\begin{equation}
\label{taudef}
d\log\tau=\sum_{\mu<\nu}\tr\,\mathcal{A}_{\mu}\mathcal{A}_{\nu} \;d\log\left(a_{\mu}-a_{\nu}\right).
\end{equation}

The equation \eqref{dphiz} is invariant under fractional linear transformations, i.e. if $w=w(z)$, then transformed equation have the form:
\[
\partial_w\Phi=\sum_{\nu=1}^n \frac{\mathcal{A}_\nu}{z-w(a_{\nu})}
\]
It can be shown that the isomonodromic $\tau$~function of the system with $n$ singularities transforms under fractional linear transformations identically to $n$-point chiral correlation function of CFT primaries with conformal weights $\D_{\nu}=\frac{1}{2}\tr \kr{A}_{\nu}^2$. 
The expression of the Painlev\'e VI  $\tau$~function in terms of the Liouville conformal blocks $c=1$ from \cite{GIL1207} is a generalisation of this observation.

Now let $\mathcal{A}_\mu\in \mathfrak{sl}(2)$ and $n$ (number of singular points) equal to 4. We will see that in this case the isomonodromic $\tau$~function coincides with Painlev\'e $\tau$~function. We follow \cite{myCA} in the presentation.

Using the fractional-linear transformations we put singular points to $0,t,1,\infty$ (we now drop the constraint $\sum\nolimits_{\nu=1}^n\mathcal{A}_{\nu}=0$). We move the normalization point $z_0$ to $\infty$ choosing  an appropriate asymptotics of $\Phi(z)$ as normalization condition (or we can keep $z_0$ finite and the resulting equation remains the same since the $\tau$~function does not depend on $z_0$). Then the Schlesinger equations have the form
\begin{equation}
\partial_t \kr{A}_0=\frac{[\kr{A}_t,\kr{A}_0]}{t}, \quad \partial_t \kr{A}_1=\frac{[\kr{A}_t,\kr{A}_1]}{t-1}, \quad \partial_t \kr{A}_t=-\frac{[\kr{A}_t,\kr{A}_0]}{t}-\frac{[\kr{A}_t,\kr{A}_1]}{t-1}. \label{PShl}
\end{equation}
We introduce $\zeta(t)=t(t-1)\frac{d \log \tau(t)}{dt}$. From \eqref{taudef} we get the relation
\[
\zeta(t)=(t-1)\tr\kr{A}_t\kr{A}_0+t\,\tr\kr{A}_t\kr{A}_1. 
\]
Differentiating and using \eqref{PShl} one finds
\begin{equation}
\zeta'(t)=\tr\kr{A}_t\kr{A}_0+\tr\kr{A}_t\kr{A}_1, \qquad
\zeta''(t)=\frac{\tr(\kr{A}_0[\kr{A}_t,\kr{A}_1])}{t(1-t)}. \label{deriv}
\end{equation}

Now one can use an identity valid for any triple of matrices $\kr{A}_0,\mathcal{A}_t,\mathcal{A}_1 \in \mathfrak{sl}(2)$
\[
\tr([\kr{A}_0,\kr{A}_t]\kr{A}_1)^2=-2 \det
\left( \begin{array}{ccc}
\tr\kr{A}_0^2 & \tr\kr{A}_0\kr{A}_t & \tr\kr{A}_0\kr{A}_1\\
\tr\kr{A}_t\kr{A}_0 & \tr\kr{A}_t^2 & \tr\kr{A}_t\kr{A}_1 \\
\tr\kr{A}_1\kr{A}_0 & \tr\kr{A}_1\kr{A}_t & \tr\kr{A}_1^2
\end{array} \right).
\]
This identity is equivalent to the well known formula for a triple product of vectors in $\mathbb{R}^3$. Substituting \eqref{deriv} and $\D_{\nu}=\frac12 \tr{\kr{A_{\nu}}^2}$ we get a differential equation on $\zeta(t)$ which coincides with \eqref{sg6}. Therefore the Painlev\'e VI $\tau$~function and the $\tau$~function of given case of isomonodromic problem coincide.

\subsection{Proof of the Painlev\'e III$'_3$ $\tau$~function conjecture}
\label{ssec:PIII}

In this subsection we prove Theorem \ref{thm:1}. Recall that we want to prove that the $\tau$~function defined by the expression
\begin{equation}\label{eq:tau:P3}
\tau(t)=\sum_{n\in\mathbb{Z}}s^n C(\sg+n) \kr{F}((\sg+n)^2|t),
\end{equation}
satisfy $D^{III}(\tau(t),\tau(t))=0$, see \eqref{tau3}. Here $\kr{F}(\sg^2|t)=\kr{F}_1(\sg^2|t)$ denotes the irregular limit of conformal block defined in \eqref{eq:Whitcom} for the central charge $c=1$. The coefficients $C(\sg)$ are defined by the formula
\[
C(\sg)=\frac{1}{\G(1-2\sg)\G(1+2\sg)},
\]
where $\G(z)$ is the Barnes $\G$ function. Of all properties of this function we will use only a recurrence relation: $\G(z+1)=\Gamma(z)\G(z)$. The parameters $s$ and $\sg$ in \eqref{eq:tau:P3} are constants of integration of the equation Painlev\'e $\mathrm{III'_3}$ \eqref{piiispr} (see also Remark \ref{rem:int const}).

\begin{proof}First we substitute the conjectural expression for $\tau$~function \eqref{eq:tau:P3} into \eqref{tau3} and collect terms with the same powers of $s$. The vanishing condition of the $s^m$ coefficient has the form
\[
\sum_{n\in\mathbb{Z}}\left( C(\sg+n+m)C(\sg-n) D^{III}\Bigl(\kr{F}((\sigma+n+m)^2|t),\kr{F}((\sigma-n)^2|t\Bigr)\right)=0.
\]
Clearly this $s^{m}$ term coincides with the $s^{m+2}$ term after the shift $\sigma \mapsto \sigma+1$. Therefore it is sufficient to prove the vanishing of $s^0$ and $s^1$ terms:
\begin{align}
\sum_{n\in\mathbb{Z}}\left(C(\sg+n)C(\sg-n) D^{III}\Bigl(\kr{F}((\sigma+n)^2|t),\kr{F}((\sigma-n)^2|t)\Bigr)\right)&=0, \label{s0}\\
\sum_{n\in\mathbb{Z}}\left(C(\sigma+n+1)C(\sg-n) D^{III}\Bigl(\kr{F}((\sigma+n+1)^2|t),\kr{F}((\sigma-n)^2|t)\Bigr)\right)&=0\label{s1}.
\end{align}
We prove these relations by use of the Whittaker vector decomposition proved in Proposition \ref{prop:Whitdecomp}. Taking the scalar square of \eqref{Whitdecomp} we have
\begin{equation} \label{blockdecomp}
\kr{F}_{c^{\NS}}(\Delta^{\NS}|q)=\sum\limits_{2n\in\mathbb{Z}}l_n^2(P,b) \kr{F}^{\scriptscriptstyle{(1)}}_{n}\kr{F}^{\scriptscriptstyle{(2)}}_{n},
\end{equation}
where
\begin{equation*}
\mathcal{F}^{\scriptscriptstyle{(1)}}_{n}=\mathcal{F}_{c^{\scriptscriptstyle{(1)}}}(\Delta_n^{\scriptscriptstyle{(1)}}|\beta^{\scriptscriptstyle{(1)}}q), \;\; \kr{F}^{\scriptscriptstyle{(2)}}_{n}=\mathcal{F}_{c^{\scriptscriptstyle{(2)}}}(\Delta_{n}^{\scriptscriptstyle{(2)}}|\beta^{\scriptscriptstyle{(2)}}q).
\end{equation*} 
We will use below the shorten notations $\kr{F}^{\scriptscriptstyle{(1)}}_{n}$,  $\kr{F}^{\scriptscriptstyle{(2)}}_{n}$. We want to prove relations that contain Hirota differential operators. Let us introduce the operator $H$
\begin{equation}
H=b L_{0}^{\scriptscriptstyle{(1)}}+b^{-1} L_{0}^{\scriptscriptstyle{(2)}},\label{eq:Hdefin}
\end{equation}
and define $\widehat{\kr{F}_{\NS}}$ and $\widehat{\kr{F}_k}$ by the formulae
\begin{equation}\label{eq:Fhat}
\widehat{\kr{F}_{\NS}}=\la 1\otimes W_{\NS}|e^{\alpha H}|1\otimes W_{\NS}\ra=\sum\limits_{k=0}^{\infty}\la 1\otimes W_{\NS}|H^{k}|1\otimes W_{\NS}\ra \frac{\alpha^k}{k!}=\sum\limits_{k=0}^{\infty}\widehat{\kr{F}_k} \frac{\alpha^k}{k!}
\end{equation}

We can calculate $\widehat{\kr{F}_k}$ using right hand side of \eqref{Whitdecomp}
\begin{multline}
\!\!\!\la 1\otimes W_{\NS}(q)|e^{H\alpha}|1\otimes W_{\NS}(q)\ra=\\= \sum\limits_{2n\in\mathbb{Z}}l_n^2(P,b)   \left\langle W^{\scriptscriptstyle{(1)}}_{n}(\beta^{\scriptscriptstyle{(1)}}q)\right|e^{\alpha b L_{0}^{\scriptscriptstyle{(1)}}}\left|W^{\scriptscriptstyle{(1)}}_{n}(\beta^{\scriptscriptstyle{(1)}}q)\right\rangle \left\langle W_{n}^{\scriptscriptstyle{(2)}}(\beta^{\scriptscriptstyle{(2)}}q)\right|e^{\alpha b^{-1} L_{0}^{\scriptscriptstyle{(2)}}}\left|W^{\scriptscriptstyle{(2)}}_{n}(\beta^{\scriptscriptstyle{(2)}}q)\right\rangle.\label{eq:FNS1}
\end{multline}
Generalized Hirota differential operators $D^{n}_{\epsilon_1,\epsilon_2[x]}$ are defined by 
\[
f(e^{\epsilon_1\alpha}q)g(e^{\epsilon_2\alpha}q)=\sum\limits_{n=0}^{\infty}D^{n}_{\epsilon_1,\epsilon_2[\log q]}(f(q),g(q))\frac{\alpha^n}{n!},
\]
where we take derivatives with respect to logarithm of variable as before. Since 
\[\Bigl\langle W^{(\eta)}_n(\beta^{(\eta)}q)\Bigr|e^{ \alpha\epsilon L_{0}^{(\eta)}}\Bigl|W^{(\eta)}_n(\beta^{(\eta)}q)\Bigr\rangle=\kr{F}^{(\eta)}_n((\beta^{(\eta)}q)e^{\alpha \epsilon}),\quad \eta=1,2.\]
we can rewrite \eqref{eq:FNS1} as
\begin{equation}
\label{eq:FhatHirota}
\widehat{\kr{F}_k}=\sum\limits_{2n\in\mathbb{Z}}\left(l_n^2(P,b)  D^{k}_{b,b^{-1}[\log q]}(\kr{F}^{\scriptscriptstyle{(1)}}_{n},\kr{F}^{\scriptscriptstyle{(2)}}_{n})\right).
\end{equation}

On the other hand we can calculate $\widehat{\kr{F}_k}$ using the left hand side of \eqref{Whitdecomp}. Using the explicit expressions~\eqref{Vir12} we can rewrite the operator $H$ in terms of the $\mathsf{F}\oplus\NSR$ generators
\begin{equation}
\label{eq:HNSR}
H=Q\sum\limits_{r\in\mathbb{Z}+1/2}r:f_{-r}f_{r}:-\sum\limits_{r\in\mathbb{Z}+1/2}f_{-r}G_r,
\end{equation}
We want to calculate $H^k|1\otimes W_{\NS}\ra$ and substitute this to \eqref{eq:Fhat}. We do this calculation for $k\leq 4$
\begin{align*}
H|1\otimes W_{\NS}\ra=&-q^{1/4}f_{-1/2}|1\ra\otimes|W_{\NS}\ra\\
H^2|1\otimes W_{\NS}\ra=&-q^{1/4}|1\ra\otimes G_{-1/2}|W_{\NS}\ra-Q q^{1/4}f_{-1/2}|1\ra\otimes|W_{\NS}\ra\\
H^3|1\otimes W_{\NS}\ra=&-Q^2 q^{1/4} f_{-1/2}|1\ra\otimes|W_{\NS}\ra-Q q^{1/4} |1\ra\otimes G_{-1/2}|W_{\NS}\ra\ny\\&+2q^{1/4}f_{-1/2}|1\ra\otimes L_0|W_{\NS}\ra+2q^{3/4}f_{-3/2}|1\ra\otimes |W_{\NS}\ra- q^{1/2}f_{-1/2}|1\ra\otimes G_{-1/2}|W_{\NS}\ra\\
H^4|1\otimes W_{\NS}\ra=&-Q^2 q^{1/4} |1\ra\otimes G_{-1/2}|W_{\NS}\ra+2 q^{1/4} |1\ra\otimes G_{-1/2} L_0|W_{\NS}\ra\ny\\
&-q^{1/2}|1\ra\otimes L_{-1}|W_{\NS}\ra+2q^{3/4}|1\ra\otimes G_{-3/2}|W_{\NS}\ra+\ldots,
\end{align*}
where $"\ldots"$ stands for terms involving $f$ in $H^4$. Then we have the relations
\begin{equation}
\begin{aligned}
\widehat{\kr{F}_0}=\kr{F}_{\NS},\qquad \widehat{\kr{F}_2}=-q^{1/2} \kr{F}_{\NS},\qquad
\widehat{\kr{F}_4}=q^{1/2}(2q\frac{d}{dq}\kr{F}_{\NS}-q^{1/2}\kr{F}_{\NS})-
Q^2 q^{1/2}\kr{F}_{\NS},
\end{aligned} \label{eq:hatF}
\end{equation}
where $\kr{F}_{\NS}$ denotes $\kr{F}_{c^{\NS}}(\Delta^{\NS}|q)=\la W_{\NS}|W_{\NS}\ra$ and we used the relation $ \la W_{\NS}|L_{0}|W_{\NS}\ra=q\frac{d}{dq}\kr{F}_{\NS}$. Using these formulas we derive the equation for $\widehat{\kr{F}_k}$
\begin{equation}
\widehat{\kr{F}_4}+2q\frac{d}{dq}\widehat{\kr{F}_2}-(1+Q^2)\widehat{\kr{F}_2}+q\widehat{\kr{F}_0}=0. \label{eq:eqFhat}
\end{equation}
Now we can use \eqref{eq:FhatHirota} and rewrite \eqref{eq:eqFhat} as a bilinear differential equation. Introduce the corresponding operator by the formula
\begin{equation}
D^{III}_b=
D^{4}_{b,b^{-1}[\log q]}+2q\frac{d}{dq}D^{2}_{b,b^{-1}[\log q]}-(1+Q^2)D^{2}_{b,b^{-1}[\log q]}+qD^{0}_{b,b^{-1}[\log q]}\label{d3} 
\end{equation}
We proved that
\begin{equation}
\sum\limits_{2n\in\mathbb{Z}}\left(l_n^2(P,b)\cdot D^{III}_b\Bigl(\kr{F}_{c^{\scriptscriptstyle{(1)}}}(\Delta_n^{\scriptscriptstyle{(1)}}|\beta^{\scriptscriptstyle{(1)}}q),\kr{F}_{c^{\scriptscriptstyle{(2)}}}(\Delta_{n}^{\scriptscriptstyle{(2)}}|\beta^{\scriptscriptstyle{(2)}}q)\Bigr)\right)=0 \label{bilin}
\end{equation}
This sum splits into two, which consist of integer and half integer $n$ ($q^{\Delta^{\NS}}$ times integer and half integer powers of $q$ correspondingly). Therefore we have
\begin{align}
\sum\limits_{n\in\mathbb{Z}}\left( l_n^2(P,b)\cdot D^{III}_b\Bigl(\kr{F}_{c^{\scriptscriptstyle{(1)}}}(\Delta_n^{\scriptscriptstyle{(1)}}|\beta^{\scriptscriptstyle{(1)}}q),\kr{F}_{c^{\scriptscriptstyle{(2)}}}(\Delta_{n}^{\scriptscriptstyle{(2)}}|\beta^{\scriptscriptstyle{(2)}}q)\Bigr)\right)=0 \label{eq:bilin0}\\
\sum\limits_{n\in\mathbb{Z}+1/2}\left(l_{n}^2(P,b)\cdot D^{III}_b\Bigl(\kr{F}_{c^{\scriptscriptstyle{(1)}}}(\Delta_{n}^{\scriptscriptstyle{(1)}}|\beta^{\scriptscriptstyle{(1)}}q),\kr{F}_{c^{\scriptscriptstyle{(2)}}}(\Delta_{n}^{\scriptscriptstyle{(2)}}|\beta^{\scriptscriptstyle{(2)}}q)\Bigr)\right)=0 \label{eq:bilin1}
\end{align}

We want to compare these relations with \eqref{s0},\eqref{s1}, where the central charges are equal to~1. Therefore it is natural to set $b=i$. We specify other parameters by $q=4t; P=2i\sigma$. Therefore we get
\[
Q=0, \; b^{(\eta)}=i, \; \beta^{(\eta)}=\frac14; \; P^{(\eta)}=i\sg , \; \eta=1,2, \quad \D^{\scriptscriptstyle{(1)}}_n=(\sigma+n)^2, \; \D^{\scriptscriptstyle{(2)}}_n=(\sigma-n)^2. 
\]
After this specialization we have $D^{III}_b \mapsto 2D^{III}$, $\kr{F}_{c^{\scriptscriptstyle{(1)}}}(\Delta_n^{\scriptscriptstyle{(1)}}|\beta^{\scriptscriptstyle{(1)}}q)\mapsto \kr{F}((\sigma+n)^2|t)$, $\kr{F}_{c^{\scriptscriptstyle{(2)}}}(\Delta_n^{\scriptscriptstyle{(2)}}|\beta^{\scriptscriptstyle{(2)}}q)\mapsto \kr{F}((\sigma-n)^2|t)$. In result the specialization of relation \eqref{eq:bilin0} coincides with \eqref{s0} up to coefficients. But using the recurrence relation for $\G(\sigma)$ one can prove that
\begin{align}
\frac{C(\sg+n)C(\sg-n)}{C(\sg)^2}=\frac{1}{\prod\limits_{k=1}^{2|n|-1}(k^2-4 \sg^2)^{2(2|n|-k)}(4\sg^2)^{2|n|}}=4^{-\Delta^{\NS}}(-1)^{2n}l_n(2i\sigma,i)^2,\label{C0}
\end{align}
where $2n \in \mathbb{Z}$ and the functions $l_n(2i\sigma,i)$ are specified in \eqref{l_n_irr}. Therefore the specialization of relation \eqref{eq:bilin0} coincide with \eqref{s0}.

For the specialization of relation \eqref{eq:bilin1} we  substitute $\sg\mapsto\sg+1/2$, $n \mapsto n+1/2$. Then we obtain
\[
\sum\limits_{n\in\mathbb{Z}}\left(l_{n+\frac12}^2(2i\sigma+i,i) D^{III}\Bigl(\kr{F}((\sigma+n+1)^2|t),\kr{F}((\sigma-n)^2|t)\Bigr)\right)=0 
\]
It remains to compare the coefficients. We can rewrite the coefficients in \eqref{s1}
\[
\frac{C(\sg+n+1)C(\sg-n)}{C(\sg+1/2)^2}=\frac{C((\sg+1/2)+(n+1/2))C((\sg+1/2)-(n+1/2))}{C(\sg+1/2)^2}
\]
and then using \eqref{C0} obtain $l_{n+\frac12}^2(2i\sigma+i,i)$ (with the additional factor $-4^{\Delta^{\NS}}$). This concludes the proof of \eqref{s0},\eqref{s1}. \end{proof}

\begin{Remark} Equations \eqref{eq:hatF} suggest more simple equation then \eqref{eq:eqFhat}, namely $\widehat{\kr{F}_2}=-q^{-1/2}\widehat{\kr{F}_0}$. But this second order differential equation
\begin{equation}
-q^{1/2}\sum\limits_{2n\in\mathbb{Z}}l^2_n(P,b) \kr{F}^{\scriptscriptstyle{(1)}}_{n}\kr{F}^{\scriptscriptstyle{(2)}}_{n}=\sum\limits_{2n\in\mathbb{Z}}l^2_n(P,b) D^{2}_{b,b^{-1}[\log q]}(\kr{F}^{\scriptscriptstyle{(1)}}_{n},\kr{F}^{\scriptscriptstyle{(2)}}_{n}) \label{relsh20},
\end{equation}
interchanges $\kr{F}_n$ with integer and half-integer  $n$ and does not provide analogous equation on the $\tau$~function. Therefore we need~$\widehat{\kr{F}_4}$.
\end{Remark}

\begin{Remark} Using the expressions for $H^k|1\otimes W_{\NS}\ra$ we can find the corresponding $\widehat{\kr{F}_k}$. We have for $k=1,3$ (using~\eqref{eq:FhatHirota})
\begin{align}
&\sum\limits_{2n\in\mathbb{Z}}l^2_n(P,b) D^{1}_{b,b^{-1}[\log q]}(\kr{F}^{\scriptscriptstyle{(1)}}_{n},\kr{F}^{\scriptscriptstyle{(2)}}_{n})=\widehat{\kr{F}_1}=0,\label{t1}
\\
&\sum\limits_{2n\in\mathbb{Z}}l^2_n(P,b) D^{3}_{b,b^{-1}[\log q]}(\kr{F}^{\scriptscriptstyle{(1)}}_{n},\kr{F}^{\scriptscriptstyle{(2)}}_{n})=\widehat{\kr{F}_3}=-Qq^{1/2}\kr{F}_{\NS}. 
\end{align}
In the $b^2=-1$ specialization we have trivial relations $\widehat{\kr{F}_1}=\widehat{\kr{F}_3}=0$, so we did not use these functions in the proof. But for other central charges we can write additional equations. For example, using the expression for $\widehat{\kr{F}_2}$ \eqref{eq:hatF} we get
\begin{equation}
\sum\limits_{2n\in\mathbb{Z}}l_n^2(P,b) D^{3}_{b,b^{-1}[\log q]}(\kr{F}^{\scriptscriptstyle{(1)}}_{n},\kr{F}^{\scriptscriptstyle{(2)}}_{n})=Q\sum\limits_{2n\in\mathbb{Z}}l_n^2(P,b) D^{2}_{b,b^{-1}[\log q]}(\kr{F}^{\scriptscriptstyle{(1)}}_{n},\kr{F}^{\scriptscriptstyle{(2)}}_{n}).\label{relsh32}
\end{equation}
We use these relations in Subsection \ref{ssec:calc}.
\end{Remark}

\subsection{Proof of the Painlev\'e VI $\tau$~function conjecture}
\label{ssec:P6case}

As was explained in the Introduction the initial  Gamayun--Iorgov--Lisovyy conjecture was for the Painlev\'e~VI $\tau$~function. The Painleve V and III conjectures are degenerations of that conjecture. We  prove the Painlev\'e VI conjecture below.

\begin{thm}
\label{thm:2}
The expansion of Painlev\'e VI $\tau$~function near $t=0$ can be written as 
\begin{equation}
\td{\tau}(t,\ora{\theta},s,\sg)=\sum_{n\in\mathbb{Z}} s^n C(\sg+n,\ora{\theta}) \kr{F}(\ora{\Delta},(\sg+n)^2|t) \label{taug},
\end{equation}
where 
\[\ora{\theta}=(\theta_0,\theta_t,\theta_1,\theta_\infty), \ora{\Delta}=(\Delta_0,\Delta_t,\Delta_1,\Delta_\infty),\] $\kr{F}(\ora{\Delta},(\sg+n)^2,|t)=\kr{F}_1(\ora{\Delta},(\sg+n)^2,|t)$ denotes the 4-point conformal block defined in \eqref{eq:Vir4point} for central charge $c=1$.
The coefficients $C(\sg,\ora{\theta})$ are expressed in terms of Barnes $\G$ function by the formula
\[
C(\sg,\ora{\theta})=\frac{\prod_{\epsilon,\epsilon'=\pm1}\G(1+\theta_t+\epsilon\theta_0+\epsilon'\sg) \G(1+\theta_1+\epsilon\theta_\infty+\epsilon'\sg) }{\prod_{\epsilon=\pm1}\G(1+2\epsilon\sg)}.
\]
\end{thm}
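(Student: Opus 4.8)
The plan is to follow the proof of Theorem~\ref{thm:1} in Subsection~\ref{ssec:PIII} almost verbatim, replacing the irregular (Whittaker) objects by the finite four-point chain vectors and conformal blocks. First I would substitute \eqref{taug} into the bilinear equation $D^{VI}(\td{\tau},\td{\tau})=0$ and collect powers of $s$. Exactly as in the $\mathrm{III}'_3$ case, the $s^{m+2}$ coefficient is obtained from the $s^m$ coefficient by the shift $\sg\mapsto\sg+1$, so it suffices to prove the vanishing of the $s^0$ and $s^1$ terms, i.e. the identity $\sum_{n\in\mathbb{Z}} C(\sg+n,\ora{\theta})C(\sg-n,\ora{\theta})\,D^{VI}\bigl(\kr{F}(\ora{\Delta},(\sg+n)^2|t),\kr{F}(\ora{\Delta},(\sg-n)^2|t)\bigr)=0$ together with its $\sg\mapsto\sg+1/2$, $n\mapsto n+1/2$ companion.

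The central tool is the decomposition of the full $\sfF\oplus\NSR$ four-point block. Combining Theorem~\ref{thm:decomp} with the chain-vector decomposition of Proposition~\ref{prop:Chdecomp} applied both to $|1\otimes W_{\NS}\ra_{21}$ and to its conjugate ${}_{34}\la 1\otimes W_{\NS}|$, one writes $\kr{F}_{c^{\NS}}(\ora{\Delta^{\NS}},\D^{\NS}|q)=\sum_{2n\in\mathbb{Z}} l^{34}_n l^{21}_n\,\kr{F}^{(1)}_n\kr{F}^{(2)}_n$ as a sum of products of two Virasoro four-point blocks with central charges $c^{(1)},c^{(2)}$, the coefficients being products of the matrix elements $l_{n0}$ computed in \eqref{melemans}. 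I would then insert $H=bL_0^{(1)}+b^{-1}L_0^{(2)}$ in the intermediate channel and form $\widehat{\kr{F}}={}_{34}\la 1\otimes W_{\NS}|e^{\alpha H}|1\otimes W_{\NS}\ra_{21}$. On the $\Vir\oplus\Vir$ side this reproduces, verbatim as in \eqref{eq:FhatHirota}, the generalized Hirota derivatives $\sum_{2n} l^{34}_n l^{21}_n\,D^{k}_{b,b^{-1}[\log q]}(\kr{F}^{(1)}_n,\kr{F}^{(2)}_n)$, which are the building blocks of the operator analogous to $D^{VI}$.

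On the $\sfF\oplus\NSR$ side I would compute $H^k$ acting between the two vertex operators using the fermionic expression \eqref{eq:HNSR} together with the chain relations \eqref{eq:W:NS}. The new feature relative to the $\mathrm{III}'_3$ case is that for finite external weights the odd generators $G_r$ do not trivialize: acting on $|W_{\NS}\ra$ they produce the tilded chain, so the resulting relation couples $\kr{F}_{c^{\NS}}$ with the $\la\Phi\Psi\Psi\Phi\ra$ block $\td{\kr{F}}_{c^{\NS}}$ of \eqref{eq:FNS:4p}. Eliminating $\td{\kr{F}}_{c^{\NS}}$, and using the commutators $[L^{(\eta)}_m,\overline{\Phi}]$ to convert the insertions into $q$- and $t$-derivatives, should yield a $t$- and $\ora{\theta}$-dependent linear relation among $\widehat{\kr{F}_0},\widehat{\kr{F}_2},\widehat{\kr{F}_4}$ and their derivatives that translates, through the Hirota dictionary, into a bilinear relation $\sum_{2n} l^{34}_n l^{21}_n\,D^{VI}_b(\kr{F}^{(1)}_n,\kr{F}^{(2)}_n)=0$. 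Splitting this into integer and half-integer $n$ gives the two required identities. Finally, setting $b=i$ (so $Q=0$ and $c^{(1)}=c^{(2)}=1$), $q=4t$, $P=2i\sg$ collapses $D^{VI}_b$ to a multiple of $D^{VI}$ and turns the external data into $(\Delta_0,\Delta_t,\Delta_1,\Delta_\infty)$; the remaining step is the coefficient identity, the analogue of \eqref{C0}, expressing $l^{34}_n l^{21}_n$ at $b=i$ as $C(\sg+n,\ora{\theta})C(\sg-n,\ora{\theta})/C(\sg,\ora{\theta})^2$ up to a uniform factor. Here $l^{21}_n$ should supply the $\G(1+\theta_t+\epsilon\theta_0+\epsilon'\sg)$ factors of $C(\sg,\ora{\theta})$ and $l^{34}_n$ the $\G(1+\theta_1+\epsilon\theta_\infty+\epsilon'\sg)$ factors, and the identity follows from \eqref{melemans} and the recurrence $\G(z+1)=\Gamma(z)\G(z)$.

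I expect the main obstacle to be the $\sfF\oplus\NSR$-side computation. Unlike in the irregular limit, where the tilded and non-tilded chains merge and $H^k|W_{\NS}\ra$ closes on $\kr{F}_{c^{\NS}}$ after a few steps, here the coupling to $\td{\kr{F}}_{c^{\NS}}$ and the nontrivial commutators of $H$ with \emph{both} vertex operators generate many more terms, and one must verify that after elimination all the $t$- and $\theta$-dependent coefficients assemble exactly into the rather long operator $D^{VI}$. A secondary, purely combinatorial obstacle is checking the Barnes-$\G$ coefficient identity for the four-parameter constant $C(\sg,\ora{\theta})$, which is heavier than the one-parameter identity \eqref{C0}.
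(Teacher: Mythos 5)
Your proposal follows essentially the same route as the paper's own proof (Subsection \ref{ssec:P6case} together with Appendix \ref{App:A}): reduction to the $s^0$ and $s^1$ coefficients, the chain-vector decomposition of Proposition \ref{prop:Chdecomp} applied on both sides to produce $l_n^{21}l_n^{34}$, insertion of $e^{\alpha H}$ with $H=bL_0^{\scriptscriptstyle{(1)}}+b^{-1}L_0^{\scriptscriptstyle{(2)}}$ to generate the Hirota derivatives, computation of $H^k$ via \eqref{eq:HNSR} and \eqref{eq:W:NS} with elimination of the tilded block $\td{\kr{F}}_{\NS}$, and the final $b=i$ specialization with coefficient matching through \eqref{l_n} and the Barnes $\G$ recurrence. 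The only slip is your specialization $q=4t$: since the factors $\beta^{(\eta)}$ are an artifact of the irregular limit and are absent from the chain-vector decomposition \eqref{Chdecomp}, the correct substitution in the four-point case is simply $q=t$.
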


\begin{Remark}\label{rem:convergence}
It is natural to ask: do the series \eqref{eq:tau:P3} and \eqref{taug} converge? It was proven in \eqref{eq:tau:P3} that the  series \eqref{eq:tau:P3} converges absolutely and uniformly on every bounded subset of $\mathbb{C}$. One can similarly prove the convergence of \eqref{taug} on some neighborhood of $t=0$. Convergence radius of \eqref{taug} is non greater than $1$, because at the point $t=1$ conformal block have singularity. It seems to be that it really equals $1$. 

\end{Remark}

\begin{Remark}\label{rem:int const} Note that any general solution of $\tau$ form of Painlev\'e equations depends on 4 integration constants unlike $\zeta(t)$ which depend on 2 integration constants as a solution of the second order differential equation \eqref{sg3}, \eqref{sg6}. One of extra constants is a constant factor since $\tau(t)$ was defined by $t\frac{d\log \tau(t)}{dt}$. Another constant emerges from differentiation of $\zeta$ form of Painlev\'e equation. 

It is easy to see that the parameters $s$ and $\sigma$ are defined by the asymptotic behavior of $\tau(t)$ (up to discrete shift $\sigma\mapsto \sigma+1$), and are independent on constant factor of $\tau(t)$. Therefore this extra parameter, which correspond to constant factor cannot be expressed in terms of~$s,\sigma$. 

Comparing the asymptotic behavior of the corresponding $\zeta(t)$ one can see that there is no additional constant in $\zeta$ forms of the Painlev\'e equations \eqref{sg3}, \eqref{sg6}. Therefore the $\tau$~function \eqref{taug} corresponds to a solution of these equations. 
\end{Remark}

\begin{proof} The proof goes in the same way as in the Painlev\'e $\mathrm{III'_3}$ case. We substitute the conjectural expression \eqref{taug} for $\tilde{\tau}(t)$ into $D^{VI}(\tilde{\tau}(t),\tilde{\tau}(t))=0$ and collect the $s^m$--terms. It is enough to proof the vanishing of $s^0$, $s^1$ coefficients (similarly to the Painlev\'e $\mathrm{III'_3}$ case). These vanishing conditions have the form
\begin{equation}
\begin{aligned}
\sum_{n\in\mathbb{Z}}\left(C(\sg+n,\ora{\theta})C(\sg-n,\ora{\theta}) D^{VI}\Bigl(\kr{F}(\ora{\Delta},(\sigma+n)^2|t),\kr{F}(\ora{\Delta},(\sigma-n)^2|t\Bigr)\right)&=0, \\
\sum_{n\in\mathbb{Z}}\left(C(\sigma+n+1,\ora{\theta})C(\sg-n,\ora{\theta}) D^{VI}\Bigl(\kr{F}(\ora{\Delta},(\sigma+n+1)^2|t),\kr{F}(\ora{\Delta},(\sigma-n)^2|t\Bigr)\right)&=0
\end{aligned}\label{eq:s01PVI}
\end{equation}
We prove these relations using the chain vector decomposition proved in Proposition \ref{prop:Chdecomp}. Using rather cumbersome calculations (presented in  Appendix \ref{App:A}) we prove that
\begin{equation}
\sum_{2n\in\mathbb{Z}}\left(l_n^{21}(P,b|\D^{\NS}_1,\D^{\NS}_2)\cdot l_n^{34}(P,b|\D^{\NS}_3,\D^{\NS}_4) \cdot
 D^{VI}_b\Bigl(\kr{F}_{c^{\scriptscriptstyle{(1)}}}(\ora{\Delta^{\scriptscriptstyle{(1)}}},\Delta_n^{\scriptscriptstyle{(1)}}|q),\kr{F}_{c^{\scriptscriptstyle{(2)}}}(\ora{\Delta^{\scriptscriptstyle{(2)}}},\Delta_{n}^{\scriptscriptstyle{(2)}}|q)\Bigr)\right)=0, \label{eq:PVIbilin}
\end{equation}
where 
\begin{equation*}
\begin{aligned}
D^{VI}_b=& -\frac12(1-q)^3 D^{4}_{b,b^{-1}[\log q]}-(1+q)(1-q)^2 \left( q \frac{d}{dq}\right)D^{2}_{b,b^{-1}[\log q]}+
\\
&+(1-q)\left(-q(\D_2^{\NS}+\D_3^{\NS})+q(1-q)(\D_1^{\NS}+\D_4^{\NS})+\frac12\left(Q^2(1+4q+q^2)+(1-q+q^2)\right)\right)D^{2}_{b,b^{-1}[\log q]}- 
\\ 
&+\frac12q\Bigl(q(\D_2^{\NS}+\D_1^{\NS})(\D_3^{\NS}+\D_4^{\NS})-(\D_2^{\NS}-\D_1^{\NS})(\D_3^{\NS}-\D_4^{\NS})\Bigr) D^{0}_{b,b^{-1}[\log q]}+
\\
&+\frac12 q\Bigl((\D_1^{\NS}+\D_4^{\NS})(1-q)-(\D_2^{\NS}+\D_3^{\NS})(1+q)\Bigr)\left(q\frac{d}{dq}\right)D^{0}_{b,b^{-1}[\log q]}- \frac12q(1-q)\left(q\frac{d}{dq}\right)^2D^{0}_{b,b^{-1}[\log q]}. 
\end{aligned}
\end{equation*}
Here the highest weights of $\Vir^{(\eta)}, \eta=1,2$ modules are related to the highest weight of $\NSR$ module 	by the formula \eqref{eq:DeltaVirNSR}
\[
\D_{\kappa}^{\scriptscriptstyle{(1)}}=\frac{b^{-1}}{b^{-1}-b}\D_{\kappa}^{\NS}, \quad 
\D_{\kappa}^{\scriptscriptstyle{(2)}}=\frac{b}{b-b^{-1}}\D_{\kappa}^{\NS},\qquad  \kappa=1,2,3,4.
\]
Similarly to the Painlev\'e $\mathrm{III'_3}$ case the sum \eqref{eq:PVIbilin} splits into two sums, which consist on terms with integer and half-integer $n$ correspondingly. In order to get Painlev\'e~VI relations we specialize the parameters
\[
q=t, \; b=i,\quad \Rightarrow\quad   Q=0,\; c_{\NS}=c^{(\eta)}=1, \; \D_{\kappa}^{(\eta)}=\frac12 \D^{\NS}_{\kappa}, \; \eta=1,2,\; \kappa=1,2,3,4.
\]
The relation between $\Delta$ parameters in the Painlev\'e  VI equation and in equation \eqref{eq:PVIbilin} can be written in terms of parameters $\theta$ and $P$
\[
P_1=2i\theta_0,\;\;P_2=2i\theta_t,\;\;P_3=2i\theta_1,\;\;P_4=2i\theta_\infty,\quad P=2i\sigma.
\]

Then we have $D^{VI}_b \mapsto D^{VI}$ and the equation \eqref{eq:PVIbilin} reduces to \eqref{eq:s01PVI} up to the coefficients. For these coefficients we have the relations 
\[
\frac{C(\sg+n+1)C(\sg-n)}{C(\sg+1/2)^2}=\frac{C((\sg+1/2)+(n+1/2))C((\sg+1/2)-(n+1/2))}{C(\sg+1/2)^2}
\]
and
\begin{multline*}
\frac{C(\sg+n)C(\sg-n)}{C(\sg)^2}=\frac{\prod_{\epsilon=\pm}\prod\limits_{i=1-|n|}^{|n|-1}((\theta_t+\epsilon\theta_0+i)^2-\sg^2)^{|n|-i}\prod\limits_{i=1-|n|}^{|n|-1}((\theta_1+\epsilon\theta_\infty+i)^2-\sg^2)^{|n|-i}}{\prod\limits_{k=1}^{2|n|-1}(k^2-4 \sg^2)^{2(2|n|-k)}(4\sg^2)^{2|n|}}=\\=(-1)^{2n}\cdot l_n^{21}(2i\sigma,i|2\theta_0^2,2\theta_t^2) \cdot l_n^{34}(2i\sigma,i|2\theta_1^2,2\theta_\infty^2),
\end{multline*}
where $2n\in\mathbb{Z}$.
Here we used the recurrence relation on Barnes $\G$ function and explicit expressions for $l_n^{21}(P,b|\D^{\NS}_2,\D^{\NS}_1)$ \eqref{l_n}. This completes the proof. \end{proof}

\subsection{Effective algorithm for conformal blocks calculation} \label{ssec:calc}
Bilinear relations on the Virasoro conformal blocks provide efficient algorithm for calculation of the power expansions. Analogous algorithm based on Nakajima-Yoshioka relations was given in~\cite{Nakajima}.

We start with $c=1$ 4 point conformal block defined by its power expansion
\begin{equation}
\kr{F}(t)=t^{\D}\sum_{N=0}^{\infty}B(N)t^N , \quad B(0)=1. \label{eq:confblock:expans}
\end{equation}
We consider both irregular and generic cases. Substitute the expansion \eqref{eq:confblock:expans} to the \eqref{s0} in irregular case and \eqref{eq:s01PVI} in generic case. Then the equation that the $t^{\Delta^{\NS}+N}$ coefficient is 0 gives the relation which expresses $B(N)B(0)$ in terms of $l_n$ and $B(M)$ for $M<N$. Thus one can compute coefficients $B(N)$ recursively. 

This algorithm has a polynomial complexity in contrast to exponential complexity of algorithms based on the AGT expressions or on calculation of a Kac-Shapovalov matrix. Note that a Zamolodchikov recurrence formula \cite{ZCB} also provides an algorithm of a polynomial complexity. 

For instance a calculation of $B(n)$ for $n \leq 50$ by use of bilinear relations  took 33 sec in the irregular case and 256 seconds in the generic case  (we use Intel Core i3; 2.2 GHz 2 Core and program Wolfram Mathematica 8.0) . Calculation of $B(30)$ in general case took 30 sec. For comparsion a calculation of $B(30)$ using the AGT correspondence (and additional improvements such a parallelization) took 240 sec.

Such terms like $B(50)$ are important for the numerical study of the conformal blocks for $|t| \rightarrow 1$, where series \eqref{eq:confblock:expans} converges very slowly.
For instance this calculation can be used to check the formula (5.3) in \cite{ILT}. 

This method can be generalized for calculations of the Virasoro conformal blocks in case $c\neq 1$. In this case we use two bilinear relations; \eqref{t1} and \eqref{relsh32} in the irregular case and \eqref{t1g} and \eqref{relsh32g} in the generic case. In the recursion procedure the $q^{\Delta^{NS}+N}$ term contains two new terms $B^{\scriptscriptstyle{(1)}}(N)B^{\scriptscriptstyle{(2)}}(0)$ and $B^{\scriptscriptstyle{(1)}}(0)B^{\scriptscriptstyle{(2)}}(N)$. They can be found using two bilinear relations.

%We need to collect coefficients with the same powers of $t$. It turns out, that we obtain linear system on $B^{\scriptscriptstyle{(1)}}_0(N,t)$ and $B^{\scriptscriptstyle{(2)}}_0(N,t)$ when collecting coefficients with power $t^{\D^{\NS}+N}$. Other summands will contain coefficients with $B^{(\eta)}(M,t), \eta=1,2$ with $M<N$. Determinant of this system is (in irregular and in general case both):
%\begin{equation}
%\begin{vmatrix}
%b & b^{-1}\\
%b^3(m-1)-b & b^{-3}(m-1)-b^{-1}\\
%\end{vmatrix}=(m-1)Q(b^{-1}-b),
%\end{equation}
%so it is not degenerate for $m>1$ and $b^2\neq 0,1$.
%This linear system can be used for recurrence calculation of power expansion of conformal blocks with central charges connected by \eqref{cc}.

\section{AGT relation}
\label{sec:geom}
The AGT relation states the equality between conformal blocks in certain 2d CFT and a generating function of certain integrals on instanton moduli spaces (Nekrasov partition function). It is known \cite{BF,BMT1} that conformal field theory with $\NSR$ symmetry  corresponds by AGT to the instanton counting on the minimal resolution of $\mathbb{C}^2/\mathbb{Z}_2$. We denote this minimal resolution by $X_2$. 

Let $\mathit{M}(\mathbb{C}^2; r,N)$ be the moduli space of instantons on $\mathbb{C}^2$ with rank $r$ and  $c_2=N$. Let $\mathit{M}(X_2;r,N)$ be the moduli space of instantons on $X_2$ with rank $r$, $c_1=0$, $c_2=N$. By $Z_{\pure}(\epsilon_1,\epsilon_2,a;q)$ and $Z_{\pure}^{X_2}(\epsilon_1,\epsilon_2,a;q)$ we denote the Nekrasov instanton partition functions for pure $U(2)$ gauge theory on $\mathbb{C}^2$ and $X_2$ correspondingly. These functions are the generating functions of equivariant volumes of the corresponding instanton moduli spaces
\[
Z_{\pure}(\epsilon_1,\epsilon_2,a;q)=\sum_{N=0}^{\infty} q^N \!\!\!\!\!\!\int\limits_{\mathit{M}(\mathbb{C}^2;r,N)}1,\quad \qquad Z^{X_2}_{\pure}(\epsilon_1,\epsilon_2,a;q)=\sum_{N=0}^{\infty} q^N \!\!\!\!\!\!\int\limits_{\mathit{M}(X_2;r,N)}1,
\]

It was proven in \cite{BMT1} (see also \cite{Bruzzo:2013}) that
\begin{equation}\label{eq:Z=ZZ}
Z^{X_2}_{\pure}(\epsilon_1,\epsilon_2,a;q)=\sum_{2n \in \mathbb{Z}} \left(\frac{q^{2n^2}}{\boldsymbol{l}_n(a,\epsilon_1,\epsilon_2)} Z_{\pure}(2\epsilon_1,-\epsilon_1{+}\epsilon_2, a+2n\epsilon_1;q) Z_{\pure}(\epsilon_1{-}\epsilon_2,2\epsilon_2, a+2n\epsilon_2;q)\right),
\end{equation} 
where
\begin{equation}\label{l-vec}
\begin{aligned}   
&\boldsymbol{l}_{n}(a,\epsilon_{1},\epsilon_{2})=(-1)^{2n} s_\epsilon(2a,2n) s_\epsilon(2a+\epsilon_{1}+\epsilon_{2},2n)\\
 &s_\epsilon(x,n)=\prod_{\substack{i,j\geq 0,\;i+j < 2n\\i+j\equiv0\bmod 2}}\hspace*{-10pt}(x+i\epsilon_{1}+j\epsilon_{2}),\quad \text{for } n\geq 0\\ 
&s_\epsilon(x,n) =(-1)^ns_\epsilon(\epsilon_1+\epsilon_2-x,-n) ,\quad \text{for } n< 0.
\end{aligned}
\end{equation}
The coefficients $\boldsymbol{l}_{n}(a,\epsilon_{1},\epsilon_{2})$ are called the blow-up factors.

The AGT relation for the Virasoro algebra was proved for Virasoro algebra in \cite{Alba:2010}. In the  Whittaker limit the AGT relation states that
\[
Z_{\mathrm{\pure}}(\epsilon_1,\epsilon_2,a;q)=\left(\frac{q}{\epsilon_1^2\epsilon_2^2}\right)^{-\Delta} \kr{F}_c\left(\Delta|\frac{q}{\epsilon_1^2\epsilon_2^2}\right) ,\quad \text{where } \Delta=\frac{4a^2-(\epsilon_1+\epsilon_2)^2}{4\epsilon_1\epsilon_2}, c=1+6\frac{(\epsilon_1+\epsilon_2)^2}{\epsilon_1\epsilon_2}
\]
Then we set $\epsilon_1=b$, $\epsilon_2=b^{-1}$, $a=P$ and rewrite the right hand side of \eqref{eq:Z=ZZ} 
\[
\text{RHS}=\sum_{2n \in \mathbb{Z}} q^{2n^2-\Delta_n^{\scriptscriptstyle{(1)}}-\Delta_n^{\scriptscriptstyle{(2)}}}\frac{(-1)^{2n}2^{2\Delta_n^{\scriptscriptstyle{(1)}}+2\Delta_n^{\scriptscriptstyle{(2)}}}\beta_1^{-\Delta_n^{\scriptscriptstyle{(1)}}}\beta_2^{-\Delta_n^{\scriptscriptstyle{(2)}}}}
{s_\textrm{even}(2P|2n) s_\textrm{even}(2P+Q|2n)}
\kr{F}_{c^{\scriptscriptstyle{(1)}}}\left(\Delta_n^{\scriptscriptstyle{(1)}}\Bigr|\frac{\beta_1q}{4}\right)\kr{F}_{c^{\scriptscriptstyle{(2)}}}\left(\Delta_n^{\scriptscriptstyle{(2)}}\Bigr|\frac{\beta_2q}{4}\right),
\]
where $\beta_1, \beta_2$ are defined by \eqref{eq:beta12}, the central charges $c^{\scriptscriptstyle{(1)}}$, $c^{\scriptscriptstyle{(2)}}$ are defined by \eqref{cc} and the highest weights $\Delta_n^{\scriptscriptstyle{(1)}}, \Delta_n^{\scriptscriptstyle{(2)}}$ are defined at \eqref{momentum}. Using the equality $\Delta_n^{\scriptscriptstyle{(1)}}+\Delta_n^{\scriptscriptstyle{(2)}}=\Delta^{NS}+2n^2$ and formula \eqref{l_n_irr} we get
\[
\text{RHS}=\left(\frac{q}{4}\right)^{-\Delta^{\NS}} \sum_{2n \in \mathbb{Z}} l_n^2(P,b)
\kr{F}_{c^{\scriptscriptstyle{(1)}}}\left(\Delta_n^{\scriptscriptstyle{(1)}}\Bigr|\frac{\beta_1q}{4}\right)\kr{F}_{c^{\scriptscriptstyle{(2)}}}\left(\Delta_n^{\scriptscriptstyle{(1)}}\Bigr|\frac{\beta_2q}{4}\right).
\]
We compare the last equation with \eqref{blockdecomp} and get
\begin{equation}\label{eq:F=ZNS}
Z^{X_2}_{\pure}(b,b^{-1},P;q)=\left(\frac{q}{4}\right)^{-\Delta^{\NS}} \kr{F}_{c^{\NS}} \left(\Delta^{\NS}\Bigr|\frac{q}{4}\right).
\end{equation} This relation was proposed in \cite{BMT1} (following \cite{BF}) as the AGT relation for the $\NSR$ algebra in the Whittaker limit. This proof of \eqref{eq:F=ZNS} follows \cite{BBFLT}. This proof is based on the proof for Virasoro case, Proposition \ref{prop:Whitdecomp} and the values of $l_n$ found in Subsection \ref{ssec:melem} (which were given in \cite{BBFLT} without proof).

More general spherical (and toric) $\NSR$ conformal blocks \eqref{eq:NSRconfblock} can be expressed in terms of $\Vir$ conformal blocks by use Theorems \ref{thm:decomp}, \ref{thm:ln}. On the other hand the Bonelli, Maruyoshi, Tanzini found expression of the Nekrasov partition function on $X_2$ in terms of Nekrasov partition function on $\mathbb{C}^2$ \cite{BMT} (see eq. (2.14) in loc. cit.). But Virasoro conformal blocks are equal to Nekrasov partition functions on $\mathbb{C}^2$. Using Theorems \ref{thm:decomp}, \ref{thm:ln} (or the corresponding results in \cite{HJ}) one can see that the coefficients in the expressions in terms of conformal blocks coincide with the coefficients found in \cite{BMT}. Therefore the $\NSR$ conformal blocks coincide with the Nekrasov partition functions on $X_2$.  This finishes the proof of AGT relation for the $\NSR$ algebra in the form proposed in   \cite{BMT}.

%\begin{Remark}
%Note that there exist an another form of the AGT relation for the $\NSR$ algebra \cite{BF,BBB:2011tb}. This relation is based on another partial compactification of the moduli space of pure instantons  on $X_2$ and gives another expression for the $\NSR$ conformal blocks. As was explained in \cite{BBFLT} these expressions give the same answer as expressions from \cite{BMT}. As was explained in previous paragraph, the expressions from \cite{BMT} are proven now. As a byproduct one obtains  the indirect proof of the formulas from \cite{BF,BBB:2011tb}.
%\end{Remark}

\section{Concluding remarks}
\label{sec:disc}
\begin{itemize}
\item The conformal block bilinear equations \eqref{bilin}, \eqref{eq:PVIbilin} were proven for any central charge. It is natural to ask for the corresponding $\tau$~function equations for $c \neq 1$. Namely one can introduce $b$ dependent $\tau$~function (for irregular limit case)
\begin{equation}
\tau(b,P|q)=\sum_{n\in\mathbb{Z}}s^n C_b(P+nb) \kr{F}_{c}(\Delta(P+nb,b)|q)  \label{eq:taunb},
\end{equation}
with the coefficients $C_b(P)$ defined in terms of double Gamma function $\Gamma_2(P|b,b^{-1})$. Then we define $\tau^{\scriptscriptstyle{(1)}}(q)=\tau(b^{\scriptscriptstyle{(1)}},P_1|\beta_1 q)$, $\tau^{\scriptscriptstyle{(1)}}(q)=\tau(-(b^{\scriptscriptstyle{(2)}})^{-1},P_2|\beta_2 q)$ and ask for equation 
\[
D^{III}_b(\tau^{\scriptscriptstyle{(1)}}(q),\tau^{\scriptscriptstyle{(2)}}(q))\stackrel{?}{=}  0. 
\]
However argumentation from Subsection \ref{ssec:PIII} runs into difficulties.First note that in Subsection \ref{ssec:PIII} we used the fact that $s^m$ and $s^{m+2}$ relations are equivalent. For $c \neq 1$ this argument works only for special central charges, namely $c^{\scriptscriptstyle{(1)}}$ and $c^{\scriptscriptstyle{(2)}}$ corresponding to generalized minimal models $\mathcal{M}(1,k)$ and $\mathcal{M}(1,2-k)$, $k \in \mathbb{N}$. Another obstacle to study is the convergence of the series \eqref{eq:taunb}.

\item As was mentioned in the Introduction and Section \ref{sec:geom} equation \eqref{eq:eqFhat} has the geometrical meaning in the framework of the instanton counting on $X_2$. In would be interesting to find its the geometrical proof similarly to the Nakajima-Yoshioka proof \cite{Nakajima}. 

Probably, more fundamental question is the geometric interpretation (in terms of instanton moduli spaces) of the $\tau$~functions \eqref{eq:tau:P3}, \eqref{taug}.

\item Our approach is quite general, it seems that bilinear equations for many point conformal blocks can be obtained this way. Another possible generalization is bilinear equations on $W_N$ conformal blocks for $N>2$. It would be interesting to find bilinear equations on $\tau$~function in the corresponding isomonodromic problems.

\item Recently Litvinov, Lukyanov, Nekrasov, Zamolodchikov suggested a relation between classical conformal blocks ($c \rightarrow \infty$) and Painlev\'e $\mathrm{VI}$  \cite{LLNZ}. It would be interesting to find any relation between this fact and Gamayun--Iorgov--Lisovyy conjecture studied in our paper.
\end{itemize}

%{\red*** 1.About $c=\infty$ -- Zamolodchikov conformal block\\
%2. Calculation of conformal blocks -- here!\\
%3. About equations arises from Nakajima decomposition\\
%4. Ramond sector***}

\section{Acknowledgments}
We thank  A. Belavin, B. Feigin, P. Gavrylenko, O. Lisovyy, A. Litvinov and N. Iorgov  for interest in our work and discussions. We especially thank O. Lisovyy for critical reading of the first  version of the paper and many useful comments. M.B. is grateful to the organizers of the conference ``Quantum groups and Quantum integrable systems'' Kiev, Ukraine 2013 and Vitaly Shadura for the hospitality.

The study of relation between $\Vir\oplus \Vir$ and $\mathsf{F}\oplus \NSR$ vertex operator algebras was performed under a grant funded by Russian Science Foundation (project No. 14-12-01383). The study of Painlev\'e equation was partially supported by project 01-01-14 of NASU.

\appendix

\section{Proof of relation \eqref{eq:PVIbilin}}
\label{App:A}

As we already claimed in the main text the proof of the relation \eqref{eq:PVIbilin} is similar to the proof of its irregular analogue \eqref{d3}. In this appendix we will use shorten notations 
\begin{multline*}
l_n^{21}=l_n^{21}(P,b|\D^{\NS}_2,\D^{\NS}_1),\quad l_n^{34}=l_n^{34}(P,b|\D^{\NS}_3,\D^{\NS}_4), 
\\
\kr{F}^{\scriptscriptstyle{(1)}}_{n}=\kr{F}_{c^{\scriptscriptstyle{(1)}}}(\ora{\Delta^{\scriptscriptstyle{(1)}}},\Delta_n^{\scriptscriptstyle{(1)}}|q),\quad \kr{F}^{\scriptscriptstyle{(2)}}_{n}=\kr{F}_{c^{\scriptscriptstyle{(2)}}}(\ora{\Delta^{\scriptscriptstyle{(2)}}},\Delta_{n}^{\scriptscriptstyle{(2)}}|q).
\end{multline*}
Similarly to \ref{ssec:PIII} we define the functions $\widehat{\kr{F}_k}$ by the formulae
\[
\sum\limits_{k=0}^{\infty}\widehat{\kr{F}_k} \frac{\alpha^k}{k!}=
{}_{34}\la 1\otimes W_{\NS}|e^{\alpha H}|1\otimes W_{\NS}\ra_{21}=\sum\limits_{k=0}^{\infty}{}_{34}\la 1\otimes W_{\NS}|H^{k}|1\otimes W_{\NS}\ra_{21} \frac{\alpha^k}{k!},
\]
where the operator $H$ was defined in \eqref{eq:Hdefin}. As in irregular case we have 
\begin{equation}\label{eq:FhatPVIHirota}
\widehat{\kr{F}_k}=\sum\limits_{2n\in\mathbb{Z}}\left(l_n^{21}\cdot l_n^{34} \cdot D^{k}_{b,b^{-1}[\log q]}(\kr{F}^{\scriptscriptstyle{(1)}}_{n},\kr{F}^{\scriptscriptstyle{(2)}}_{n})\right),
\end{equation}
where we used decomposition \eqref{Chdecomp}. On the other side we can use expression \eqref{eq:HNSR} of the operator $H$ in terms of $\mathsf{F}\oplus\NSR$ generators and calculate
\begin{align*}
H|1\otimes W_{\NS}\ra_{21}=&-\sum_{r\in\mathbb{Z}_{\geq 0}+\frac12} q^{r/2} f_{-r}|1\ra\otimes |\td{W_{\NS}}\ra_{21}
\\
H^2|1\otimes W_{\NS}\ra_{21}=&-Q\sum_{r\in\mathbb{Z}_{\geq 0}+\frac12} 2r q^{r/2} f_{-r}|1\ra\otimes|\td{W_{\NS}}\ra_{21}-
\sum_{r\in\mathbb{Z}_{\geq 0}+\frac12}q^{r/2}|1\ra\otimes G_{-r}| \td{W_{\NS}}\ra_{21}-\\
&-
\sum_{r,s\in\mathbb{Z}_{\geq 0}+\frac12,s\neq r}2r\D_2^{\NS}q^{\frac{r+s}{2}}f_{-r}f_{-s}|1\ra\otimes |W_{\NS}\ra_{21}.
\end{align*}
For $H^3$ and $H^4$ we omit terms which do not contribute to $\widehat{\kr{F}_k}$, $k\leq 4$
\begin{multline*}
H^3|1\otimes W_{\NS}\ra_{21}=
-Q^2\sum_{r\in\mathbb{Z}_{\geq 0}+\frac12}(2r)^2 q^{r/2}f_{-r}|1\ra\otimes|\td{W_{\NS}}\ra_{21}-Q \sum_{r\in\mathbb{Z}_{\geq 0}+\frac12}2r q^{r/2}|1\ra\otimes G_{-r}|\td{W_{\NS}}\ra_{21}+\\+
\!\!\!\sum_{r,s\in\mathbb{Z}_{\geq 0}+\frac12}\!\!\!q^{r/2}f_{-s}|1\ra\otimes G_{s}G_{-r}|\td{W_{\NS}}\ra_{21}-\!\!\!\!\!\!\sum_{r,s\in\mathbb{Z}_{\geq 0}+\frac12,r\neq s}\!\!\!\!\!\!2r\D_2^{\NS}q^{\frac{r+s}{2}}(f_{-r}G_{-s}-f_{-s}G_{-r})|1\ra\otimes|W_{\NS}\ra_{21}+\ldots
\end{multline*}
\begin{multline*}
H^4|1\otimes W_{\NS}\ra_{21}=-Q^2\!\!\!\!\!\sum_{r\in\mathbb{Z}_{\geq 0}+\frac12}(2r)^2 q^{r/2}|1\ra\otimes G_{-r} |\td{W_{\NS}}\ra_{21}+
\\+
\!\!\!\!\!\!\sum_{r,s\in\mathbb{Z}_{\geq 0}+\frac12}\!\!\!q^{r/2}|1\ra\otimes G_{-s}\left(2L_{s-r}+\frac{c_{\NS}}2(r^2-\frac14)\delta_{r,s}\right) |\td{W_{\NS}}\ra_{21}
-\sum_{r,s\in\mathbb{Z}_{\geq 0}+\frac12}q^{r/2}|1\ra\otimes G_{-s}G_{-r} G_s|\td{W_{\NS}}\ra_{21}+
\\
+\sum_{r,s\in\mathbb{Z}_{\geq 0}+\frac12,r\neq s} \!\!\! 2r\D_2^{\NS} q^{\frac{r+s}{2}}|1\ra\otimes \Bigl(G_{-s}G_{-r}-G_{-r}G_{-s}\Bigr)|W_{\NS}\ra_{21}+\ldots 
\end{multline*}
Then we multiply these equations by ${}_{34}\la W_{\NS}|$ and calculate the corresponding $\widehat{\kr{F}_k}$. For $\widehat{\kr{F}_0}$ and $\widehat{\kr{F}_2}$ one can easily~get
\begin{equation}
\widehat{\kr{F}_0}=\kr{F}_{\NS},  \qquad\widehat{\kr{F}_2}=-\frac{q^{1/2}}{1-q}\td{\kr{F}_{\NS}},\label{t2g}
\end{equation}
where $\kr{F}_{\NS}=\kr{F}_{c_{\NS}}(\overrightarrow{\D^{\NS}},\D^{\NS}|q)$, and similarly for $\td{\kr{F}}_{\NS}$, see \eqref{eq:FNS:4p}. For the calculation of $\widehat{\kr{F}_4}$ we use \eqref{eq:W:NS} and matrix elements 
\[\la W_{\NS}|L_0|W_{\NS}\ra=\left(q\frac{d}{dq}\right)\kr{F}_{\NS},\quad \la W_{\NS}|L_0^2|W_{\NS}\ra=\left(q\frac{d}{dq}\right)^2\kr{F}_{\NS},\quad\la \td{W_{\NS}}|L_{0}|\td{W_{\NS}}\ra=q\frac{d}{dq}\td{\kr{F}}_{\NS}\] 
and get:
\begin{gather*}
\widehat{\kr{F}_4}=-Q^2  \sum_{r\in\mathbb{Z}_{\geq 0}+\frac12}(2r)^2 q^r \td{\kr{F}}_{\NS}
+2\left(\sum_{r,s\in\mathbb{Z}_{\geq 0}+\frac12, r<s}\!\!\!\!\!\!q^{s}\left((s-r)(\D_2^{\NS}+\frac12)-\D_1^{\NS}+q\frac{d}{dq}\right)\td{\kr{F}}_{\NS}+\right.
\notag \\
\left.\!\!\!\!\!\!\sum_{r,s\in\mathbb{Z}_{\geq 0}+\frac12, s<r}\!\!\!\!\!\!q^{r}\left((r-s)(\D_3^{\NS}+\frac12)-\D_4^{\NS}+q\frac{d}{dq}\right)\td{\kr{F}}_{\NS}+\sum_{r\in\mathbb{Z}_{\geq 0}+\frac12}q^r q\frac{d}{dq}\td{\kr{F}}_{\NS}\right)+\frac{c_{\NS}}2\!\!\!\sum_{r\in\mathbb{Z}_{\geq 0}+\frac12}q^{r}(r^2-\frac14)\td{\kr{F}}_{\NS}
\ny
\\
-\sum_{r,s\in\mathbb{Z}_{\geq 0}+\frac12}q^{r+s}\left(2r\D_3^{\NS}-\D_4^{\NS}+q\frac{d}{dq}\right)\left(2s\D_2^{\NS}-\D_1^{\NS}+q\frac{d}{dq}\right)\kr{F}_{\NS}- \!\!\!\!\!\!\sum_{r,s\in\mathbb{Z}_{\geq 0}+\frac12,r\neq s}\!\!\!\!\!\!4q^{r+s}r(s-r) \D_2^{\NS} \D_3^{\NS}\kr{F}_{\NS}
\end{gather*}
Substituting $\td{\kr{F}}_{\NS}$ and $\kr{F}_{\NS}$ from \eqref{t2g} and performing the summation we obtain
\begin{gather*}
\widehat{\kr{F}_4}=Q^2 \frac{1+6q+q^2}{(1-q)^2}\widehat{\kr{F}_2}- \frac{q(-\D_1^{\NS}+\D_2^{\NS}+q(\D_1^{\NS}+\D_2^{\NS}))(-\D_4^{\NS}+\D_3^{\NS}+q(\D_3^{\NS}+\D_4^{\NS})) }{(1-q)^4}\widehat{\kr{F}_0}-\ny
\\
-\frac{q(-\D_1^{\NS}-\D_4^{\NS}+\D_2^{\NS}+\D_3^{\NS}+(\D_1^{\NS}+\D_2^{\NS}+\D_3^{\NS}+\D_4^{\NS})q)}{(1-q)^3}q\frac{d}{dq}\widehat{\kr{F}_0}-\frac{q}{(1-q)^2}\left(q\frac{d}{dq}\right)^2\widehat{\kr{F}_0}-\ny
\\
-2\frac{q(\D_2^{\NS}+\D_3^{\NS}+1-(\D_1^{\NS}+\D_4^{\NS})(1-q))}{(1-q)^2}\widehat{\kr{F}_2}-2 \frac{1+q}{1-q} q \frac{d}{dq}\widehat{\kr{F}_2}+\ny
\\
+ \frac{(1+q)^2}{(1-q)^2}\widehat{\kr{F}_2}-(1+2Q^2)\frac{q}{(1-q)^2}\widehat{\kr{F}_2}+
4\D_2^{\NS}\D_3^{\NS}\frac{q^2}{(1-q)^4}\widehat{\kr{F}_0}
\end{gather*}

Using \eqref{eq:FhatPVIHirota} we finally get \eqref{eq:PVIbilin}.

\begin{Remark}
As in the Painlev\'e $\mathrm{III'}_3$ case we can calculate $\widehat{\kr{F}_k}$ for $k=1,3$
\begin{align}
&\widehat{\kr{F}_1}= \sum\limits_{2n\in\mathbb{Z}}l_n^{21} l_n^{34} D^{1}_{b,b^{-1}[\log q]}(\kr{F}^{\scriptscriptstyle{(1)}}_{n},\kr{F}^{\scriptscriptstyle{(2)}}_{n})=0 \label{t1g}\\
&\widehat{\kr{F}_3} = \sum\limits_{2n\in\mathbb{Z}}l_n^{21} l_n^{34} D^{3}_{b,b^{-1}[\log q]}(\kr{F}^{\scriptscriptstyle{(1)}}_{n},\kr{F}^{\scriptscriptstyle{(2)}}_{n})=-Q\frac{q^{1/2}(1+q)}{(1-q)^2}\td{\kr{F}_{\NS}} \label{t3g}
\end{align}
These relations are trivial in case $b^2=-1$ due to $l_{n}^{21}l_{n}^{34}=l_{-n}^{21}l_{-n}^{34}$.
For $b^2\neq -1$ we can use \eqref{t2g} and get
\begin{equation}
(1-q)\sum\limits_{2n\in\mathbb{Z}}l_n^{21} l_n^{34} D^{3}_{b,b^{-1}[\log q]}(\kr{F}^{\scriptscriptstyle{(1)}}_{n},\kr{F}^{\scriptscriptstyle{(2)}}_{n})=Q(1+q)\sum\limits_{2n\in\mathbb{Z}}l_n^{21} l_n^{34} D^{2}_{b,b^{-1}[\log q]}(\kr{F}^{\scriptscriptstyle{(1)}}_{n},\kr{F}^{\scriptscriptstyle{(2)}}_{n})\label{relsh32g}.
\end{equation}
\end{Remark}

\noindent \textsc{Landau Institute for Theoretical Physics, Chernogolovka, Russia,\\
Institute for Information Transmission Problems, Moscow, Russia,\\
National Research University Higher School of Economics, International Laboratory of Representation Theory and Mathematical Physics,\\
Independent University of Moscow, Moscow, Russia}

\emph{E-mail}:\,\,\textbf{mbersht@gmail.com}\\

\noindent
\textsc{Department of Physics, Taras Shevchenko National University of Kiev, Kiev, Ukraine\\
Bogolyubov Institute for Theoretical Physics, Kiev,  Ukraine\\
National Research University Higher School of Economics, Moscow, Russia.}

\emph{E-mail}:\,\,\textbf{shch145@gmail.com}

\end{document}